\documentclass[11pt]{article}
\usepackage[letterpaper, margin=1in]{geometry}
\usepackage[utf8]{inputenc}
\def\PrintMode{0}

\usepackage{tikz}
\usetikzlibrary{positioning}
\usepackage{pgffor}
\usetikzlibrary{decorations.pathreplacing}
\usepackage{expl3}
\usepackage[T1]{fontenc}
\usepackage{graphicx}
\usepackage{float}
\usepackage[lined,ruled,linesnumbered]{algorithm2e}
\usepackage{xspace}
\usepackage[utf8]{inputenc}
\usepackage[noadjust]{cite}
\usepackage{amsmath, amssymb, amsfonts, amsthm}
\usepackage{enumerate}
\usepackage{xcolor}
\usepackage[all]{xy}
\usepackage[linktocpage=true,pagebackref=true,colorlinks,linkcolor=magenta,citecolor=blue,bookmarks,bookmarksopen,bookmarksnumbered]{hyperref}
\usepackage{bm}
\usepackage{bbm}
\usepackage[normalem]{ulem}
\usepackage{paralist}
\usepackage{gensymb}
\usepackage{thmtools}
\usepackage{rotating}
\usepackage{mdframed}
\usepackage{multirow}
\usepackage{appendix}
\usepackage{tabularx}
\usepackage{verbatim}
\usepackage{mathrsfs}
\usepackage{indentfirst}
\usepackage{mleftright}
\usepackage[nottoc]{tocbibind}
\usepackage{complexity}
\usepackage{tablefootnote}
\usepackage{epigraph}
\usepackage{autobreak}
\usepackage{todonotes}

\ifnum\PrintMode=1

\usepackage{savetrees}
\else
\fi

\graphicspath{{./figs/}}

\usepackage[capitalize]{cleveref}
\newtheorem{theorem}{Theorem}[section]
\newtheorem{lemma}[theorem]{Lemma}
\newtheorem{proposition}[theorem]{Proposition}
\newtheorem{corollary}[theorem]{Corollary}
\newtheorem{claim}[theorem]{Claim}

\newtheorem{fact}[theorem]{Fact}
\newtheorem{question}[theorem]{Question}
\newtheorem{open}[theorem]{Open Problem}

\theoremstyle{definition}
\newtheorem{definition}[theorem]{Definition}

\newtheorem{remark}[theorem]{Remark}
\newtheorem*{remark*}{Remark}

\renewcommand*\backref[1]{\ifx#1\relax \else (cit.~on p.~#1) \fi} 

\makeatletter
\def\moverlay{\mathpalette\mov@rlay}
\def\mov@rlay#1#2{\leavevmode\vtop{%
		\baselineskip\z@skip \lineskiplimit-\maxdimen
		\ialign{\hfil$\m@th#1##$\hfil\cr#2\crcr}}}
\newcommand{\charfusion}[3][\mathord]{
	#1{\ifx#1\mathop\vphantom{#2}\fi
		\mathpalette\mov@rlay{#2\cr#3}
	}
	\ifx#1\mathop\expandafter\displaylimits\fi}
\makeatother

\renewcommand{\poly}{\mathrm{poly}}

\renewcommand{\polylog}{\mathrm{polylog}}

\renewcommand{\emptyset}{\varnothing}

\newcommand{\Enc}{\mathsf{Enc}}

\newcommand{\rank}

\newlang{\MCSP}{MCSP}
\newlang{\MFSP}{MFSP}
\newlang{\MKtP}{MKtP}
\newlang{\MKTP}{MKTP}
\newlang{\itrMCSP}{itrMCSP}
\newlang{\itrMKTP}{itrMKTP}
\newlang{\itrMINKT}{itrMINKT}
\newlang{\MINKT}{MINKT}
\newlang{\MINK}{MINK}
\newlang{\MINcKT}{MINcKT}
\newlang{\CMD}{CMD}
\newlang{\DCMD}{DCMD}
\newlang{\CGL}{CGL}
\newlang{\PARITY}{PARITY}
\newlang{\Empty}{\textsc{Empty}}
\newlang{\Avoid}{\textsc{Avoid}}
\newlang{\Sparsification}{\textsc{Sparsification}}
\newlang{\HamEst}{\mathsf{HammingEst}}
\newlang{\HamHit}{\mathsf{HammingHit}}
\newlang{\CktEval}{\textsc{Circuit-Eval}}
\newlang{\Hard}{\textsc{Hard}}
\newlang{\cHard}{\textsc{cHard}}
\newlang{\CAPP}{CAPP}
\newlang{\GapUNSAT}{GapUNSAT}
\newlang{\OV}{OV}
\renewlang{\PCP}{PCP}
\newlang{\PCPP}{PCPP}
\newclass{\Avg}{Avg}
\newclass{\ZPEXP}{ZPEXP}
\newclass{\DLOGTIME}{DLOGTIME}
\newclass{\ALOGTIME}{ALOGTIME}
\newclass{\ATIME}{ATIME}
\newclass{\SZKA}{SZKA}
\newclass{\Laconic}{Laconic\text{-}}
\newclass{\APEPP}{APEPP}
\newclass{\SAPEPP}{SAPEPP}

\newclass{\TFSigma}{TF\Sigma}
\newclass{\NTIMEGUESS}{NTIMEGUESS}

\newlang{\Formula}{Formula}
\newlang{\THR}{THR}
\newlang{\EMAJ}{EMAJ}
\newlang{\MAJ}{MAJ}
\newlang{\SYM}{SYM}
\newlang{\DOR}{DOR}
\newlang{\ETHR}{ETHR}
\newlang{\Midbit}{Midbit}
\newlang{\LCS}{LCS}
\newlang{\TAUT}{TAUT}
\newlang{\Poly}{\text{-}Poly}

\renewcommand{\K}{\mathrm{K}}

\newcommand{\rout}{R_{\rm out}}

\newcommand{\F}{\mathbb{F}}
\renewcommand{\K}{\mathbb{K}}

\def\out{\mathsf{out}}

\usepackage{adjustbox}

\renewcommand{\epsilon}{\varepsilon}
\newcommand{\eps}{\epsilon}

\usepackage{diagbox}

\usepackage[draft, inline,marginclue,index]{fixme}  
\fxsetup{theme=color,mode=multiuser}

\definecolor{color1}{RGB}{46,134,193}

\definecolor{color7}{RGB}{128,0,128}
\FXRegisterAuthor{yeyuan}{ayeyuan}{\color{red}Yeyuan}

\definecolor{color3}{RGB}{255,128,0}
\FXRegisterAuthor{zihan}{azihan}{\color{blue}Zihan}

\definecolor{color5}{RGB}{128,128,128}
\FXRegisterAuthor{task}{atask}{\colorbox{color5}{\color{white}Task}}

\newcommand{\wt}{\operatorname{wt}}

\newif\ifmynotes
\mynotestrue
\title{Bounds and Limitations on Codes Achieving\\List Recovery Capacity}
\date{}
\author{}
\author{Joshua Brakensiek\thanks{University of California, Berkeley. Supported in part by a Simons Investigator award of Venkatesan Guruswami, and NSF awards CCF-2211972 and DMS-2503280 \href{mailto:josh.brakensiek@berkeley.edu}{\texttt{josh.brakensiek@berkeley.edu}}} 
\and
Yeyuan Chen\thanks{Department of EECS, University of Michigan, Ann Arbor. Supported in part by National Science Foundation grant No. CCF-2236931. 
 \href{mailto:yeyuanch@umich.edu}{\texttt{yeyuanch@umich.edu}}}  
\and
Aaron Putterman\thanks{School of Engineering and Applied Sciences, Harvard University, Cambridge, Massachusetts, USA. Supported in part by Simons Investigator Awards to Madhu Sudan and Salil Vadhan, and AFOSR award FA9550-25-1-0112. \href{mailto:aputterman@g.harvard.edu}{\texttt{aputterman@g.harvard.edu}}}
\and
Zihan Zhang\thanks{Department of Computer Science and Engineering, The Ohio State University. \href{mailto:zhang.13691@buckeyemail.osu.edu}{\texttt{zhang.13691@osu.edu}}}  
}

\begin{document}

\maketitle

\begin{abstract}

In coding theory, list recoverability is a fundamental concept which robustly captures how ``spread-out'' codewords are in a code.
More formally, given a code $C \subseteq \Sigma^n$ and input lists $S_1, \hdots, S_n \subseteq \Sigma$ of size at most $\ell$, list recoverability requires that there are at most $L$ codewords $c \in C$ such that $c_i \in S_i$ for at least $(1-\rho)n$ choices of $i \in [n]$. List recovery is an important question which has found applications in many areas, including complexity theory, property testing, compressed sensing, streaming algorithms, and cryptography. Despite its widespread influence, basic, fundamental questions in the study of list recoverability remain open including (1) determining the optimal tradeoff between the error radius $\rho$, the size of the recovered list $L$, and the rate of the code $C$ 
and (2) constructing \emph{explicit} codes achieving (or approaching) such tradeoffs.

As our first main result, we establish a \emph{tight} ``generalized singleton bound'', which exactly characterizes the optimal information-theoretic tradeoff between the error radius $\rho$ and the rate $R$ of the code $C$ in terms of its list-recoverability. Formally, we show that for constant $\ell, L,\rho$ and sufficiently large alphabets $\Sigma$, if we define $R^*=\frac{L+1-\ell}{L}-\frac{L+1}{L}\rho$, it is possible for a $(\rho,\ell,L)$ list-recoverable code to have rate $R^*-\eps$ but impossible to have rate $R^*+\eps$. One direction of our result already \emph{directly generalizes} and improves a weaker impossibility result due to Goldberg, Shangguan, and Tamo (IEEE TIT 2024).

For our second main result, we prove that there is a fundamental shortcoming in existing methods that aim to construct explicit, optimal list-recoverable codes. Indeed, recent work has constructed explicit codes achieving \emph{list-decoding} capacity (along with other related properties) using a framework introduced in the work of Alon--Edmonds--Luby (AEL) (FOCS 1995). We give a meta-analysis of such constructions by presenting an ``AEL framework'' which captures all such recent constructions in the literature. Within this framework, we show that no AEL-based code can break a recently-identified list-recovery barrier for additive and linear codes. As a result, a fundamentally new construction technique is needed to explicitly achieve list recovery capacity.
\end{abstract}



\pagenumbering{gobble}

\pagebreak

\pagenumbering{arabic}

\section{Introduction}

We study the fundamental tradeoffs between the rate and list recovery of error-correcting codes. The notion of \emph{list recovery} originated as a natural generalization of the more established \emph{list decoding} problem, introduced by Elias \cite{Elias} and Wozencraft \cite{wozencraft1958list} in the 1950s. List decoding itself is a
relaxation of unique decoding in which the decoder is allowed to output a short
list of candidate codewords rather than a single codeword. This relaxation became a central object in
algorithmic coding theory after the development of efficient list-decoding
algorithms for Reed--Solomon codes, beginning with Sudan's algorithm \cite{sudan1997decoding} and the subsequent
Guruswami--Sudan algorithm \cite{guruswami1998improved}.

List recovery emerged as the natural soft-decision analogue of list decoding.
Instead of receiving a single tentative symbol in each coordinate, the decoder is
given a short list of possible symbols at each coordinate and must find all
codewords consistent with these lists on most coordinates. Although the notion
appeared implicitly earlier, the terminology ``list recovery'' was introduced in
the Guruswami--Indyk line of work on efficiently decodable codes
\cite{gi01}. Formally, a code $C \subseteq \Sigma^n$ is
$(\rho,\ell,L)$ list recoverable if, for every collection of input lists $S_1,\ldots,S_n \subseteq \Sigma,$ with $ |S_i| \le \ell$ for all $i\in[n],$
there are at most $L$ codewords $c \in C$ satisfying
$c_i \in S_i$
on at least \((1-\rho)n\) coordinates. The special case \(\ell=1\) is exactly
list decoding, which is abbreviated as  $(\rho,L)$ list decoding. Thus, list recovery asks for a code to remain combinatorially
pseudorandom not merely against Hamming balls, but against all products of
small coordinate-wise sets.

One reason list recovery became so useful is that it composes well with
concatenation \cite{gi01,gi02}. In a concatenated code, one may first decode the inner code blocks,
obtaining a short list of candidate outer symbols in each coordinate. The
remaining task is precisely to list-recover the outer code from these
coordinate-wise candidate sets. This perspective has been used repeatedly in
the construction of efficiently list-decodable and uniquely decodable codes \cite{hrw}.  

Beyond coding theory, list recoverable codes have also found many influential applications throughout theoretical computer science. They appear in constructions of randomness
extractors and condensers, group testing, compressed sensing and sparse
recovery, streaming algorithms, quantum complexity theory, and collision-resistant hashing \cite{ta2007lossless,guruswami2009unbalanced,GilbertNgoPoratStrauss13,ngo2012efficiently,haitner2015parallel,bostanci2026separating}. 
The common theme in these applications is that an algorithm often obtains
partial or ambiguous local information --- a small set of possibilities for each
coordinate, bucket, or measurement --- and list recovery provides a way to
reconcile this local ambiguity into a small number of global candidates.

With these motivations, in this paper we study two basic, fundamental questions concerning list recovery. As our first question, we seek to understand the \emph{optimal}, \emph{information-theoretic} limits of list recovery:

\begin{question}\label{ques:IT-LR}
What is the optimal list recovery a code can have? More precisely, for fixed parameters $\rho, \ell, L$, what is the largest rate $R = \tfrac{\log_{|\Sigma|} |C|}{n}$ a code can have such that it is $(\rho, \ell, L)$ list recoverable?
\end{question}

When $\ell=1$, \Cref{ques:IT-LR} is already solved by two papers \cite{st20,bgm23}. They showed that the largest rate is $R=1-\frac{L+1}{L}\rho$ for a code to be $(\rho,L)$ list decodable. This is called the generalized singleton bound (GSB) for list decoding. However, the optimal rate (i.e., a tight generalized singleton bound) for list recovery is still an open problem.

Despite a number of partial results (e.g., \cite{GST24,RYZ24,CZ25,LMS25,LS25}) no one has been able to fully answer this question even for the simple parameter setting of $\ell = L = 2$. For instance, the work of \cite{GST24} studies this exact question and only proves the optimal rate must be \emph{smaller} than $R<\frac{L+1-\ell}{L}(1-\rho)$. Our first main result is a \emph{tight} generalized Singleton bound for list recovery in the regime for which $\ell$ and $L$ are constants and the alphabet $\Sigma$ is sufficiently large.

\begin{theorem}[Informal, see \Cref{thm:LR-GSB} and \Cref{cor:LR-GSB-tight}]\label{thm:GSB-intro}
Let $L \ge \ell \ge 1$ be constants, and let $R, \rho \in [0, 1]$ be such that
\begin{align}
    R = \frac{L+1-\ell}{L} - \frac{L+1}{L}\rho\label{eq:LR-GSB-rate-intro}
\end{align}
Then, over sufficiently large alphabets, for any constant $\eps > 0$, sufficiently long codes of rate $R+\eps$ are \emph{not} $(\rho, \ell, L)$ list-recoverable. However, there exists codes of rate $R-\eps$ of arbitrarily-long block length which are $(\rho, \ell, L)$ list-recoverable.
\end{theorem}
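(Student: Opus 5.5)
The statement has two halves, and I will prove them separately: an impossibility bound (rate $R+\eps$ fails) and an existence result (rate $R-\eps$ is achievable).

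\textbf{Impossibility.} I would adapt the Singleton-type arguments for list decoding from \cite{st20,bgm23} to lists of size $\ell$. Let $C$ have rate $R+\eps$, with $R$ as in \eqref{eq:LR-GSB-rate-intro}. I will build $L+1$ distinct codewords and input lists $S_1,\dots,S_n$ of size at most $\ell$ such that every chosen codeword agrees with the lists on at least $(1-\rho)n$ coordinates.

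Split $[n]$ into blocks.
\begin{itemize}
\item First, take a prefix $P$ of length about $(R+\eps/2)n$. By pigeonhole, some large subcode $C'$ agrees on all of $P$ except for a small number of coordinates.
\item The main idea is that a coordinate with list size $\ell$ can ``cover'' $\ell$ distinct symbols at once. Partition the remaining coordinates into $L+1-\ell$ equal chunks of length roughly $\frac{L+1}{L}\rho n$ each; the accounting should be tuned so that the total matches $1-R$.
\item On each chunk, place in $S_i$ the symbols of $\ell$ chosen codewords. Arrange this so that each of the $L+1$ codewords is covered on all but about $\rho n$ of the non-prefix coordinates.
\item Choose the codewords greedily or by averaging over $C'$. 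Because $C'$ is exponentially large relative to $|\Sigma|^{\text{chunk}}$ slack, we can force agreements: iteratively choose pairs or groups of codewords that coincide on designated chunks, using pigeonhole on projections.
\end{itemize}
The key counting check is that each codeword misses at most $\rho n$ coordinates. This reduces to the identity $(L+1-\ell)\cdot(\text{chunk}) + \ell\cdot(\text{overlap}) = n(1-R)$, which rearranges to the stated formula. The equality case of $\ell=1$ recovers GSB.

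\textbf{Achievability.} I would take random linear codes or random Reed--Solomon-type codes over a large alphabet. Alternatively, and more simply for large alphabets, I would take a uniformly random code of rate $R-\eps$ with alphabet size $q\to\infty$ and bound the failure probability directly. Fix lists $S_i$ and a candidate $(L+1)$-tuple of codewords, each agreeing with the lists on a set $A_j$ with $|A_j|\ge(1-\rho)n$. The probability that random codewords land in the lists on their agreement sets is at most $(\ell/q)^{\sum_j|A_j|}$.

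For a union bound I would use the count of choices of lists, about $q^{\ell n}$, which is too large. Instead I would count ``agreement patterns''. The lists only matter through which codewords share symbols, so I would union bound over partitions of the $L+1$ codewords per coordinate into at most $\ell$ classes. For each coordinate, the entropy cost is $q^{(\#\text{classes})}$ against a probability of $q^{-(\text{number of codewords covered})}$.

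The exponent per coordinate is then (number of codewords covered) $-$ (number of classes) $\ge$ (covered) $-\ell$ when covered $\ge \ell$. Summing, the total is at least $(L+1)(1-\rho)n - \ell n$, adjusted for coordinates where fewer than $\ell$ codewords are covered. Comparing this to $(L+1)(R-\eps)n$ codewords' freedom gives the threshold $L\cdot R < L+1-\ell-(L+1)\rho$, up to the careful treatment of coordinates with fewer covered codewords. That is exactly \eqref{eq:LR-GSB-rate-intro}.

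The main obstacle is this per-coordinate accounting when fewer than $\ell$ codewords are present, together with handling non-distinct codewords. I would handle it with a minimal-counterexample argument, restricting to an inclusion-minimal subfamily in the style of \cite{bgm23}. Alternatively, I would use the paper's \Cref{cor:LR-GSB-tight}-style random linear code analysis if the independence structure is needed.
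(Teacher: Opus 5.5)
\textbf{Impossibility half.} As structured, your argument cannot reach the stated threshold when $\ell\ge 2$.

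Once you pass to a subcode $C'$ agreeing on a common prefix $P$ of length $\approx(R+\eps/2)n$, only about $q^{\eps n/2}$ codewords remain. Pigeonhole can then no longer force agreement on any chunk of linear length, so the claim that $C'$ is exponentially large relative to $|\Sigma|^{\text{chunk}}$ is false. With no forced collisions off $P$, a size-$\ell$ list captures at most $\ell$ of the $L+1$ codewords per coordinate. The most your method can certify is then about $Rn+\frac{\ell}{L+1}(1-R)n$ agreements per codeword, i.e.\ $\rho\ge\frac{L+1-\ell}{L+1}(1-R)$. That is exactly the weaker bound of \cite{GST24}: for $\ell=L=2$ you get $\frac{1-R}{3}$ instead of $\frac{1-2R}{3}$. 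No choice of chunks, or of the undefined ``overlap'', makes your counting identity come out right.

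The waste is that every coordinate of $P$ carries collision weight $L$, whereas a size-$\ell$ list only needs weight $L+1-\ell$ there. The missing idea is to spread the forced agreements over \emph{different} sets:
\begin{itemize}
\item Take disjoint blocks $I_1,\dots,I_L$ of size $\approx\frac{Rn}{L+1-\ell}$.
\item Let $J_2,\dots,J_{L+1}$ be the cyclic unions of $L+1-\ell$ consecutive blocks, so $|J_j|\approx Rn$ and each block lies in exactly $L+1-\ell$ of them.
\item Apply pigeonhole to each projection separately. Codewords in fibers of size at most $L$ on $J_j$ number at most $L\,q^{|J_j|}$, so a union bound over $j$ yields a base codeword $c_1$ and distinct $c_2,\dots,c_{L+1}$ with $c_j|_{J_j}=c_1|_{J_j}$.
\end{itemize}
Now each of the roughly $\frac{L}{L+1-\ell}Rn$ coordinates of $I_1\cup\dots\cup I_L$ (instead of $Rn$) sees at least $L+2-\ell$ codewords sharing a symbol, hence at most $\ell$ symbols. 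Balancing the lists on the remaining coordinates gives $\frac{\ell}{L+1}n+\frac{L}{L+1}Rn$ agreements per codeword, which is the tight bound.

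\textbf{Existence half.} Your per-tuple estimate is fine. For a fixed $(L+1)$-tuple of independent uniform words, the coverage probability is $q^{-[(L+1)(1-\rho)-\ell]n+O(n/\log q)}$. Union-bounding directly over the lists, at cost $q^{\ell n}$, already gives this. The case of fewer than $\ell$ covered codewords is not an obstacle, since the number of classes is at most $\min(\ell,\text{number covered})$.

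However, a plain union bound against $q^{(L+1)(R-\eps)n}$ tuples only yields $(L+1)R<L+1-\ell-(L+1)\rho$, not $LR<L+1-\ell-(L+1)\rho$. The factor $L$ does not follow from what you wrote, and uniformly random codes genuinely fail in that gap. Already for $\ell=L=1$, a random code of rate $R$ has relative distance only about $1-2R$, not $1-R$.

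What is needed is an alteration step, which is what the paper does. Sample $2q^k$ words; the expected number of bad $(L+1)$-tuples is at most $q^k$ provided $Lk\le[(L+1)(1-\rho)-\ell]n-O(n/\log q)$, which holds for $k=(R-\eps)n$ and large $q$. Then delete one word from each bad tuple, which also removes repeated words.

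Finally, random linear or Reed--Solomon codes cannot be used here. For $\ell=2$, $L=3$, $\rho=0$ the theorem allows rate up to $\frac{2}{3}$. But any linear code of rate bounded above $\frac{1}{2}$ contains $0,u,v,u+v$ with nonzero $u,v$ supported on complementary halves of $[n]$, and these take only two values per coordinate. This is the additive-code barrier of \cref{fact:LR-limit}, so the random code must be nonlinear.
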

\cref{thm:GSB-intro} vastly generalizes \cite[Theorem 5.1]{GST24} which showed that $(\rho, \ell, L)$ list recovery is impossible when $R > \frac{L+1-\ell}{L} - \frac{L+1-\ell}{L}\rho$. Their bound only matches our bound in the known $\ell=1$ (list-decoding) setting~\cite{st20,bgm23}. Unlike \cite[Theorem 5.1]{GST24}, we crucially provide a matching existence argument thereby showing that \cref{eq:LR-GSB-rate-intro} is the \emph{exact} rate threshold, and thus \cref{thm:GSB-intro} can be viewed as the true ``generalized Singleton bound'' for list recovery. See \cref{rem:GST} for a more detailed comparison of the results. 

Given our better understanding of optimal list recovery, our second question is whether we can construct \emph{explicit}\footnote{By explicit, we mean that the code has a deterministic polynomial-time encoding function.} codes which achieve list-recovery capacity.

\begin{question}\label{ques:LR-explicit}
Can we explicitly construct a code $C \subseteq \Sigma^n$ of near-optimal size which is $(\rho, \ell, L)$ list recoverable? What properties must such a code $C$ have?
\end{question}

In this paper, we prove \cref{thm:AEL-intro} which gives a partial negative answer to \cref{ques:LR-explicit} by ruling out a large class of error-correcting codes, i.e., showing that \emph{no codes} from this family can have optimal list recovery. More precisely, we show that a large family of codes based on the Alon--Edmonds--Luby (AEL) framework~\cite{ael95} cannot hope to achieve list-recovery capacity. Due to the technical nature of describing this ``AEL framework,'' we defer the formal statement of \cref{thm:AEL-intro} to \cref{subsec:AEL-framework}.

Nevertheless, \cref{thm:AEL-intro} informally states that any code fitting within the AEL framework  cannot hope to break a recently-identified list recovery barrier for linear and additive codes~\cite{CZ25,LMS25,LS25}. Given a finite field $\F$, we say that $C \subseteq \F^n$ is a linear code if it is a subspace of $\F$. If $\F$ is an extension of another field $\K$, then we can impose a weaker condition that $C \subseteq \F^n$ is merely a $\K$-linear\footnote{That is, $C$ is closed under addition and multiplication by scalars in $\K$, but not necessarily by scalars in $\F$} subspace. 

Although linear codes have been studied for nearly the entire modern history of coding theory, additive codes have found recent prominence in the study of folded Reed--Solomon and other ``subspace-designable'' codes. The modern capacity-achieving story began with Guruswami and Rudra, who
introduced folded Reed--Solomon codes and showed that they can be efficiently
list decoded up to radius \(1-R-\varepsilon\), giving the first explicit
capacity-achieving list-decodable codes over large alphabets \cite{gr08}. Subsequent linear-algebraic decoding methods, including the work of
Guruswami--Wang \cite{GW11} and Kopparty \cite{kopparty2015list}, extended this picture to related polynomial
codes such as derivative and multiplicity codes. These
algorithms also revealed a recurring feature of additive algebraic codes:
the candidate codewords are first captured inside a low-dimensional affine
subspace. 

Another important development for additive codes was the subspace design viewpoint. Guruswami and Kopparty \cite{guruswami2016explicit} gave explicit constructions of subspace designs and
showed their close relationship to folded Reed--Solomon and multiplicity
codes. In modern language, this says that these additive
polynomial codes satisfy strong higher-dimensional analogues of distance.
This perspective has become a unifying way to explain why folded
Reed--Solomon and multiplicity codes have strong list decoding and
list recovery behavior \cite{gr08,krsw,tamo24,srivastava2025improved,CZ25,ashvinkumar2026algorithmic}. The work of Chen and Zhang \cite{CZ25} showed that all these subspace design codes can achieve list decoding capacity with optimal list size. Most recently, the work of Brakensiek, Chen, Dhar, and Zhang \cite{brakensiek2026random} further developed this subspace design viewpoint by showing how such additive codes in fact inherit many ``local properties'' (beyond just list-decodability!) of random linear codes. At the same time, a concurrent work of Jeronimo and Shagrithaya \cite{js26} showed a similar phenomenon for codes based on the famous Alon--Edmonds--Luby (AEL) framework \cite{ael95}. Using this as a starting point, Goyal, Guruswami, and Hsieh \cite{ggh26} were then able to construct explicit constant-alphabet subspace designs via this AEL framework.

Although linear and additive codes have many excellent properties, such as achieving list-decoding capacity \cite{krsw,tamo24,srivastava2025improved,CZ25}, they have recently been proved to be \emph{unable} to achieve list recovery capacity:

\begin{fact}[List recover barrier for additive codes~\cite{CZ25,LMS25,LS25}]\label{fact:LR-limit}
For any $\eps > 0$, any additive code $C \subseteq \F^n$ of rate $R$ cannot be $(1-R-\eps, \ell, \ell^{o(R/\eps)})$ list recoverable.
\end{fact}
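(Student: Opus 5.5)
We will prove the barrier by building a bad configuration of input lists from a low-dimensional subspace. Let $C\subseteq\F^n$ be $\K$-linear with $|C|=|\F|^{Rn}$. Set $q=|\K|$ and $t=[\F:\K]$. Then $C$ is a $\K$-subspace of $\K$-dimension $k=Rnt$.

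The plan is to find a $\K$-subspace $V\subseteq C$ of some small dimension $d$ satisfying three conditions:
\begin{itemize}
\item every $c\in V$ vanishes on a common set $Z$ of at least $(R-\eps')n$ coordinates;
\item on each remaining coordinate $i\notin Z$, the projection $\pi_i(V)\subseteq\F$ (a $\K$-subspace) has at most $\ell$ elements;
\item the number of coordinates with $|\pi_i(V)|>\ell$ is at most $\eps n$.
\end{itemize}
Given such a $V$, take $S_i=\pi_i(V)$ when $|\pi_i(V)|\le\ell$, and $S_i$ arbitrary otherwise. Then all $|V|=q^d$ codewords of $V$ agree with the lists on at least $(1-\eps)n$ coordinates. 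Since $(1-\eps)n\ge(1-\rho)n$ with $\rho=1-R-\eps$ after rescaling $\eps$, this gives $q^d>L$ and contradicts list recoverability.

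Since $\pi_i(V)$ is a $\K$-space, its size is a power of $q$. So we want $\dim_\K\pi_i(V)\le\lfloor\log_q\ell\rfloor$; the simplest version (taking $q\le\ell$) is $\dim\pi_i(V)\le1$. We build $V$ greedily.

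\textbf{Step 1 (zeroing out).} Zero out $k/t-d'$ coordinates, i.e.\ roughly $(R-\delta)n$ coordinates. Each coordinate imposes $t$ $\K$-linear constraints, so we obtain a subcode $C'$ of $\K$-dimension at least $\delta nt$, supported on the remaining $(1-R+\delta)n$ coordinates.

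\textbf{Step 2 (degree counting).} Inside $C'$, choose $V$ so that projections have dimension $1$ on most coordinates. Each coordinate $i$ can contribute at most $t$ to $\dim\pi_i(C')$, and $\sum_i\dim\pi_i(V)\ge\dim V$ on any support. Choose $V$ of dimension $m$ generated by codewords and restrict further: for each coordinate where $\dim\pi_i(V)\ge2$, imposing "$\pi_i(V)$ is at most a line" costs at most $t$ dimensions. The accounting is to repeatedly pick a coordinate where the projection of the current space is large and cut it down to a $1$-dimensional image, losing at most $t-1$ dimensions each time. After processing $r$ coordinates we keep dimension $\delta nt-r(t-1)$, so we can afford to handle about $\delta n$ coordinates. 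The danger is that many more coordinates have large projections.

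This is exactly where the parameter $R/\eps$ enters. Use a potential argument: randomly restrict to a subspace $V$ of dimension $m$ with $q^m\approx L$ and $m=o(R/\eps)\log_q\ell$. For a random $m$-dimensional subspace of a code of relative dimension about $\eps$ on the support, the expected number of coordinates where $\dim\pi_i(V)\ge2$ should be controllable via the pigeonhole/Singleton-type count that each dimension can be "spent" on about $n/(\text{dim})$ coordinates. Averaging over a chain of subspaces $C'=W_0\supset W_1\supset\cdots$, each of codimension $t$ obtained by zeroing one more coordinate, gives $\sum_j(\text{\#coords where } \dim\pi_i \text{ drops})\le n$. Hence some window of $m$ consecutive levels has few coordinates with projection dimension between $1$ and $m$, which yields $\ell^{\Theta(R/\eps)}$ from the choice $m\approx R/\eps$.

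The main obstacle is this final averaging step: making a Singleton-chain argument produce a subspace whose nonzero projections are simultaneously small on all but $\eps n$ coordinates, with dimension growing like $R/\eps$ rather than constant. I would first try the chain-of-puncturings approach above, choosing the chain to zero coordinates in the order that greedily maximizes shrinkage, and bounding the bad coordinates via telescoping.
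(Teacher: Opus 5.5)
There is a genuine gap: a single $\K$-subspace $V$ with lists $S_i=\pi_i(V)$ is the wrong kind of witness. Any additive code over $\F$ is linear over the prime field $\F_p$, where $p$ is the characteristic of $\F$, and over no smaller field. So your simplification ``taking $q\le\ell$'' is unavailable whenever $p>\ell$. In that case every nonzero additive subgroup of $\F$ has at least $p>\ell$ elements. A list of size $\ell$ therefore contains $\pi_i(V)$ only when $\pi_i(V)=\{0\}$, and your argument certifies agreement of all of $V$ only on its common zero set. For a Reed--Solomon code over a prime field $\F_p$ with $p>\max(n,\ell)$, which is additive and so must be covered by the statement, every nonzero codeword has fewer than $Rn$ zeros. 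That is far short of the $(R+\eps)n$ agreements required.

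Even when $p\le\ell$, your target overshoots. You ask for agreement on $(1-\eps)n$ coordinates, i.e.\ failure of list recovery at radius $\eps$ with $\ell^{\Omega(R/\eps)}$ codewords, whereas only $(R+\eps)n$ agreements are needed. That stronger statement is false in general. At $R=1/2$, a first-moment count shows that a random $\F_p$-linear code over $\F_{p^t}$ with $t$ large contains no subspace with more than $\ell^{O(1)}$ elements (independently of $\eps$) whose projections have size at most $\ell$ on $(1-\eps)n$ coordinates. Finally, Step~2 (random restrictions, chains of puncturings, a ``window'' of levels) is not yet an argument, and it is exactly where the dependence on $R/\eps$ would have to come from.

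The missing idea is to use a combinatorial grid instead of a subspace. This is the hypercube construction the paper attributes to \cite{CZ25}, and it uses the same sum trick in the additive case of \cref{lem:outer-list}.
Take disjoint blocks $I_1,\dots,I_m$ of size $bn$ and let $I=I_1\cup\cdots\cup I_m$. Require $(m-1)b<R$ (with a little slack) and $mb\ge R+\eps$. Subtracting gives $b>\eps$, so $m-1<R/\eps$, and $m=\lceil R/\eps\rceil$ is attainable.
Since $|I\setminus I_j|<Rn$, the subcode $W_j=\{c\in C: c|_{I\setminus I_j}=0\}$ has $\K$-dimension at least $t(Rn-|I\setminus I_j|)>0$. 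So you can pick $\ell$ elements $a_{j,1},\dots,a_{j,\ell}\in W_j$ --- merely $\ell$ elements, not a subspace.
Put $c_d=\sum_{j=1}^m a_{j,d_j}$ for $d\in[\ell]^m$. For $i\in I_j$ every summand except the $j$-th vanishes. Hence $c_d[i]=a_{j,d_j}[i]\in S_i:=\{a_{j,1}[i],\dots,a_{j,\ell}[i]\}$, a list of size at most $\ell$ no matter how large $p$ is.
Thus all $\ell^m$ grid codewords agree with the lists on $|I|\ge(R+\eps)n$ coordinates, so $C$ is not $(1-R-\eps,\ell,\ell^m-1)$ list recoverable with $m=\Theta(R/\eps)$. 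You still have to choose the blocks so that the $\ell^m$ sums are distinct; it suffices that the $a_{j,\cdot}$ be pairwise distinct on $I_j$.
When $p>\ell$ this grid is not closed under addition, which is precisely the freedom your subspace formulation gives up. When $p\le\ell$ you may take each $\{a_{j,1},\dots\}$ to be an $\F_p$-subspace and recover a subspace witness. However, its small projections live on only $(R+\eps)n$ coordinates, not $(1-\eps)n$.
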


A priori, one may have hoped that even though linear and additive codes \emph{cannot} achieve list recovery capacity, that perhaps \emph{non-additive} codes constructed via the AEL framework might still be able to achieve list recovery capacity. The main insight of \cref{thm:AEL-intro} is that this is \emph{not} possible; i.e., even \emph{non-additive} codes within the AEL framework are unable to break the barrier identified in \cref{fact:LR-limit}. As such, if one hopes to find explicit codes matching or approaching \cref{thm:GSB-intro}, one needs to introduce fundamentally new ideas into the construction of such codes. To formally build the AEL framework used in \cref{thm:AEL-intro}, we begin by presenting AEL codes (i.e., codes constructed via this framework) and their applications in the literature.




\subsection{Alon--Edmonds--Luby (AEL) Codes}

In a landmark work, \cite{ael95} constructed explicit near-MDS codes, which are codes with rate $R$, relative distance $1-R-\eps$, over alphabet with constant size $\exp(\polylog(1/\eps)/\eps^4)$. This was the first explicit  construction of near-MDS codes with constant alphabet size and near-linear time algorithms for encoding and unique-decoding \cite{gi02}. Although the alphabet size is larger than the $O(1/\eps^2)$ alphabet size of algebraic-geometry (AG) codes~\cite{tvz82}, the near-linear time encoding and unique-decoding algorithms for AG codes are still open problems.

More importantly, AEL codes give a general framework for constructing codes with desirable properties via a ``local-to-global'' approach. At a high level, if one would like to construct a code with some combinatorial property $\mathcal{P}$ (e.g., $\mathcal{P}$ might refer to $(1-R-\eps,\ell,\ell^{o(R/\eps)})$ list recoverablility in our case) and block length $n$, one can utilize AEL codes as follows: 
\begin{itemize}
\item First, one must prove there \emph{exist} codes with the property $\mathcal{P}$.
\item Next, one designs another code property $\mathcal{P'}$, so that there are explicit ``outer'' codes with rate $R_{\out}=1-\eps$, block length $n$, and property $\mathcal{P'}$. Usually $\mathcal{P}'$ is some weaker property than $\mathcal{P}$. For example, in many previous works \cite{jmst25,st25}, when $\mathcal{P}$ is a desired optimal list-decoding/list-recovery quality, $\mathcal{P}'$ only requires that the outer code has constant relative distance $\Omega_{\eps}(1)$ (which is much weaker than list decoding / recoverability).
\item In addition to the ``outer code'', one must also construct an ``inner'' code, which now does satisfy the target property $\mathcal{P}$. More formally, one constructs inner codes with rate $R_{\mathsf{in}}$, target property $\mathcal{P}$, and block length which we denote by $D$. Of key importance is the fact that $D$ is a function \emph{only} of $\eps$ (rather than $n$), so this inner code can typically be constructed efficiently via exhaustive search. In this way, one does not need to construct asymptotically \emph{explicit} constructions of codes that satisfy the property $\mathcal{P}$, and rather their existence (as mentioned in the first point) suffices.  
\item Next, one must find a way to ``compose'' these two codes. To define this composition, one requires
a $D$-regular bipartite spectral expander $G=([n]\cup [n],E)$  (which are known to exist from the work of, e.g.,  \cite{lps88}) with small enough normalized second eigenvalue $\lambda_2/D$, where $\lambda_2$ denote the second largest eigenvalue of the adjacency matrix of $G$. 
\item The final step is to concatenate the designed inner and outer codes \cite{concat}. For an outer code $C_{\out}\subseteq\Sigma_{\out}^n$, an inner code $C_{\mathsf{in}}\subseteq \Sigma^D_{\mathsf{in}}$ with its encoder $\Enc\colon \Sigma_{\out}\to C_{\mathsf{in}}$, where $|C_{\mathsf{in}}|=|\Sigma_{\out}|$, their concatenation is a code constructed by encoding each symbol of outer codewords using $\Enc$ as an inner codeword, and concatenating these inner codewords according to the original sequence. This is then followed by using the spectral expander $G$ to ``reshuffle'' and ``fold'' the symbols. This final step is critically important; it leads to a proof that the property $\mathcal{P}$ from the inner code is still \emph{approximately preserved} as one translates from block length $D$ (inner code) to the amplified block length $n$. This heavily relies on the expansion properties of $G$ which yield a sufficiently ``pseudorandom'' reshuffling, provided the outer code satisfies the property $\mathcal{P}'$.

\end{itemize} 

There have been numerous follow-up works which use this AEL framework to construct error-correcting codes with various desired combinatorial properties. For instance, the work of \cite{gr08} constructs $(1-R-\eps,\poly(n))$ list-decodable AEL codes  with alphabet size $\exp(\log{(1/\eps)}/\eps^4)$. In this setting, they use $(\eps,O(1/\eps),\poly(n))$ list-recoverability as the required outer code property $\mathcal{P}'$. In another direction, \cite{hw15} constructs $(1-R-\eps,\ell,O_{\ell,\eps}(1))$ \emph{erasure}-list-recoverable codes\footnote{This is a weaker variant of list recovery, see \cite[Definition 1]{hw15}}. They use $(\Omega(\eps^3),O_{\ell,\eps} (1),O_{\ell,\eps}(1))$ erasure-list-recoverability and $\Omega(\eps^2)$-relative distance as the required outer code property $\mathcal{P}'$. 

This framework has also found applications in the design of \emph{locally} testable / correctable codes: for instance, \cite{highrate} constructs locally correctable codes from $\frac{1-R-\eps}{2}$ errors and near-MDS locally testable codes with low query complexity. They use high-rate LCCs and LTCs as the required outer codes.  Similarly, \cite{gko,hrw,krsw} construct $(1-R-\eps,\ell,O_{\ell,\eps}(1))$ locally list-recoverable AEL codes with constant alphabet size. They use $(\Omega_{\eps,\ell}(1),\Omega_{\ell,\eps}(1),O_{\ell,\eps}(1))$ list-recoverability as the required outer code property $\mathcal{P}'$. 

The AEL framework has also been used in the design of traditional list recoverable and list-decodable codes. For instance, the work of \cite{jmst25} constructs $(1-R-\eps,O(1/\eps))$ list-decodable AEL codes. \cite{st25} constructs $(1-R-\eps,\ell,O_{\ell,\eps}(1))$ list-recoverable AEL codes. More recently, \cite{js26} constructs nearly-optimal \emph{additive} AEL codes in terms of ``local coordinate-wise linear'' properties \cite{LMS25}. \cite{ggh26} constructs nearly-optimal \emph{subspace design} AEL codes. Note that in \emph{all} of these settings, the results achieve constant alphabet size, and use $\Omega_{\ell,\eps}(1)$-relative distance as the required \emph{outer} code property $\mathcal{P}'$.

It is worth mentioning that AEL codes also yield \emph{algorithmic} advantages for various tasks. In this paper and the above list however, we only focus on \emph{combinatorial} properties guaranteed by AEL codes.
\paragraph{List-recoverability of AEL codes.} Since the outer and inner codes used in the AEL framework can both be non-additive code families, the AEL framework with non-additive implementations has become one of the most promising approaches to attack the aforementioned list recovery barrier for additive codes.

Currently, the best list-recoverable codes via the AEL framework \cite{js26,ggh26} only achieve $(1-R-\eps,\ell,(\ell/(R+\eps))^{O(R/\eps + 1)})$ quality. This output list size \emph{does not} outperform the list-recoverability of explicit additive codes and random linear codes \cite{bcdz26b}. It remains elusive whether we could take advantage of non-additivity within the AEL framework and achieve better list-recoverable codes beyond the limit of additive codes.
In a recent survey \cite[Problem 6.1]{rv26}, the authors explicitly propose constructing better list-recoverable codes via AEL codes as an open problem.

In this paper, we give a strong negative answer to this direction, showing that the AEL framework \emph{does not} yield list-recoverability beyond the aforementioned barrier.

\subsection{The AEL Framework}\label{subsec:AEL-framework}

Our main result concerning the construction of AEL codes is a \emph{lower bound} on the output list size assuming the AEL code is ``reasonable'' in a sense we define shortly.

\begin{theorem}[Informal, see \Cref{cor:main}]\label{thm:AEL-intro}
Let $\ell\ge 2,\eps>0$ be constants. Then, there exists a constant $\lambda>0$ such that for any invocation of an inner-encoder insensitive AEL framework with a $\leq \lambda$-spectral expander, if (1) the outer code has rate at least $1-\ell^{-1/\eps}$ or (2) the outer code is additive with rate at least $1-2\eps$, then the AEL framework cannot guarantee to construct $(1-R-\eps,\ell,\ell^{o(R/\eps)})$ list-recoverable codes.
\end{theorem}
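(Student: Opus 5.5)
The plan is to exhibit, for any AEL-framework instantiation, a single ``bad'' configuration of input lists that is consistent with many codewords. The idea is to exploit the fact that the framework only guarantees the property $\mathcal{P}$ via the inner code and the expander, treating the outer code as a black box with property $\mathcal{P}'$. ``Inner-encoder insensitive'' should mean that the guarantee holds for every encoder of the inner code. So it suffices to show that for some admissible choice of outer code, inner code and encoder, the resulting AEL code fails $(1-R-\eps,\ell,\ell^{o(R/\eps)})$ list recovery.

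The first step is to produce a large structured subcode of the outer code. In case (2) the outer code $C_{\out}$ is additive with rate at least $1-2\eps$. Puncturing or shortening gives a subspace $V\subseteq C_{\out}$ of dimension roughly $(1-2\eps)n$ whose restriction to a set $T$ of about a $2\eps$-fraction of outer coordinates is a bijection-like surjection. Dually, fixing the symbols on a set $A$ of $(1-2\eps)n$ outer coordinates leaves an affine subspace containing at least $|\Sigma_{\out}|^{\Omega(\eps n)}$ codewords. In case (1), rate at least $1-\ell^{-1/\eps}$, I would argue by counting/pigeonhole. Choose a set of outer coordinates and fix symbols on it so that many codewords agree there. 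The slack $\ell^{-1/\eps}$ is chosen so that the number of surviving codewords exceeds $\ell^{R/\eps}$ even after the restrictions imposed in the next step.

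The second step transports this to the AEL code. Each outer symbol is encoded by the inner code and then spread by the $\lambda$-expander $G$. Using inner-encoder insensitivity, I choose the inner encoder so that the inner code, itself a valid code of rate $R_{\mathsf{in}}$, contains a structured family with $\ell$ possible values on some coordinates. The goal is for the list recovery barrier of \Cref{fact:LR-limit}, or its random-code analogue, to apply inside each inner block. Concretely, I build input lists of size $\ell$ at each right vertex of $G$ so that every codeword from the outer family of step one agrees with the lists on a $(R+\eps)$-fraction of positions. On the coordinates fixed by outer agreement, singletons suffice. On the remaining coordinates, the $\ell$-ary lists are placed to capture, for each inner block, a subfamily of size $\ell^{\Theta(D\cdot\text{something})}$. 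Expander mixing with $\lambda$ small ensures the errors are spread evenly across right vertices. This gives an agreement fraction within $\eps$ of $R$, and a list whose size is a product over $\Theta(R/\eps)$ independent ``free'' directions of $\ell$ choices each, i.e.\ $\ell^{\Omega(R/\eps)}$.

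The main obstacle is the second step: translating the barrier from additive codes to the non-additive AEL setting while controlling the folding by $G$. Each right symbol aggregates $D$ inner symbols from different left blocks, so lists of size $\ell$ on right symbols must simultaneously cover choices in many inner blocks. My first attempt would be to restrict the free choices to a set of left vertices whose neighborhoods are nearly disjoint, which the expander guarantees for small sets. The right lists then only branch in one left block at a time, and the count becomes a clean product. Finally I would verify the parameter bookkeeping, taking $\lambda$ small enough depending on $\ell,\eps$, so that the agreement fraction is at least $R+\eps$, contradicting $(1-R-\eps,\ell,\ell^{o(R/\eps)})$ list recoverability.
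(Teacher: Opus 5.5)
There is a genuine gap, located in the step you yourself flag as the main obstacle. Several of the moves you propose there cannot work.

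First, only the inner encoders are at your disposal. \Cref{cor:main} quantifies over \emph{every} outer code, family of inner codes and expander meeting the rate and spectral hypotheses. So you may not pick convenient codes, and choosing an encoder does not change which codewords $C_{\mathsf{in},u}$ contains.

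Second, agreement of outer codewords on a set of \emph{left} vertices fixes essentially no \emph{right} symbol. By the mixing lemma, almost every right vertex receives about a $|T|/n$ fraction of its $D$ edges from any linear-size left set $T$. Hence ``on the coordinates fixed by outer agreement, singletons suffice'' is false. Confining the branching to a left set with nearly disjoint neighborhoods does not rescue this. $D$ may be arbitrarily large, and $C_{\out}$ may be MDS-like, so distinct outer codewords differ on at least about $(1-R_{\out})n$ left vertices. Their neighborhoods overlap heavily once $D\gg 1/(1-R_{\out})$.

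Third, \Cref{fact:LR-limit} is about additive codes. It has no analogue for the arbitrary inner codes here, and none is needed. Finally, ``agreement within $\eps$ of $R$'' gives no source for the excess of agreement over the rate that a counterexample requires.

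The missing idea is to use the encoder freedom to \emph{hide} branching on all but one block of right coordinates. The paper takes $m=\lfloor R_{\mathsf{in}}/(2\eps)\rfloor$ and disjoint right blocks $I_1,\dots,I_m$ of size $bn$ with $b=\frac{R_{\mathsf{in}}-\eps}{m-1}$. It uses the mixing lemma to find disjoint linear-size left blocks $J_1,\dots,J_m$ such that each $u\in J_j$ has at most $(R_{\mathsf{in}}-\eps/2)D$ edges into $I\setminus I_j$. Since $|C_{\mathsf{in},u}|=q_{\mathsf{in}}^{R_{\mathsf{in}}D}$, pigeonhole yields $\ell$ distinct inner codewords agreeing on those edge positions. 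The encoder at $u$ then sends the (at most $\ell$) outer symbols occurring at $u$ to them.

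For this to work, one needs outer codewords $\{c'_d\}_{d\in[\ell]^m}$ that agree on $J_0$ and have $c'_d[J_i]$ depending only on $d[i]$. This is a combinatorial box, not just a large fiber.
\begin{itemize}
\item For additive $C_{\out}$, one sums $\ell$ codewords supported on each $J_i$; these exist because $|J_i|=sn$ with $s>1-R_{\out}$. Your own count is off here: fixing $(1-2\eps)n$ coordinates of a rate-$(1-2\eps)$ code only guarantees dimension $\ge 0$.
\item For non-additive $C_{\out}$, pigeonhole on $J_0$ is followed by the hypergraph K\H{o}v\'ari--S\'os--Tur\'an theorem (\Cref{thm:erdos}). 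It is applied to the hyperedges $(c[J_1],\dots,c[J_m])$ to extract a $K^{(m)}_{\ell,\dots,\ell}$. The rate threshold $1-\frac{1}{2m\ell^{m-1}}$ comes from the exponent $m-1/\ell^{m-1}$ there, not from needing $\ell^{R/\eps}$ surviving codewords.
\end{itemize}
With this in place, on $I_i$ each AEL codeword $c_d$ depends only on $d[i]$. So lists of size $\ell$ capture all $\ell^m$ codewords on a $mb=\frac{m}{m-1}(R_{\mathsf{in}}-\eps)\ge R+\eps$ fraction of coordinates. The factor $\frac{m}{m-1}$ is exactly the excess over the rate.
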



The formal definition of the inner-encoder insensitive AEL framework is in \cref{sec:ael}, but we explain in this section some of the details of the framework. In short, the framework used in \Cref{thm:AEL-intro} requires three conditions (1) a lower bound on the outer code rate, (2) a certain ``insensitivity'' condition for the inner-encoder, and finally (3) a suitable spectral bound. It is worth noting that, to the best of our knowledge, all prior work on AEL codes use proof strategies that satisfy these requirements, thus making \cref{thm:AEL-intro} a highly general result. We  give a more detailed justification for the requirements of the AEL framework in \Cref{sec:rate,sec:encoder,sec:spectral}, along with the definition of being ``inner-encoder insensitive.''


In \Cref{sec:rate,sec:encoder,sec:spectral}, we discuss the validity of these conditions from three aspects: (1) how does previous work satisfy these conditions, (2) why these conditions are expected to hold for any traditional invocation of the AEL framework, and (3) if an alternative AEL framework does not satisfy one of our conditions (which means our negative result does not apply to it), then in which aspects the AEL framework has to be unusual.

In summary, for those who are pessimistic about the list recovery of AEL codes, our result gives a strong negative lower bound. If one is still optimistic that a more generous AEL framework could be used to break the list recovery barrier for additive codes, our negative results gives \emph{requirements} that any version of the AEL framework \emph{must satisfy} to accomplish such a task. Indeed, any such use of the AEL framework to build better list-recoverable codes has to violate one or more conditions listed in \Cref{sec:rate,sec:encoder,sec:spectral}, which implies the code has at least one of the following four properties:
\begin{enumerate}
\item The rate of inner code is significantly larger than the target rate, which is counter-intuitive. (Paragraph 1 in \Cref{sec:rate})
\item The invocation of the AEL framework \emph{only works} when the normalized second eigenvalue $\lambda_2/D$ is \emph{larger} than some constant, which  is also counter-intuitive, as one typically expects better expansion to yield better codes. (\Cref{sec:spectral}) 
\item The constructions of the \emph{outer codes} themselves are intrinsically non-additive. Note that such codes are not currently known, as the outer code must be \emph{explicit}. Moreover, the proof cannot generalize to outer codes whose redundancy is an arbitrarily small constant. That is, there is a constant $0<c<1$ such that the construction fails when the rate of outer code is larger than $c$. (Paragraph 2 in \Cref{sec:rate})
\item The construction is sensitive to the inner-encoder, which indicates the proof must be whitebox in terms of (i.e., explicitly depend on) the inner encoder that is used. (\Cref{sec:encoder}) 

\end{enumerate}
Each of these four properties is non-traditional in the coding theory literature. \cref{thm:AEL-intro} can be interpreted as saying that at least one of these non-traditional assumptions must be used to break the additive codes barrier for list recovery (\cref{fact:LR-limit}). 

\subsubsection{Outer Code Rate Lower Bounds.}\label{sec:rate}

\paragraph{Additive outer code: rate at least $1-2\eps$.} 
We first explain why an outer code rate lower bound of $1-2\eps$ is expected from a typical invocation 
of local-to-global amplification. As in the above discussion, we let $\mathcal{P}$ denote the target property. Usually $\mathcal{P}$ is defined with respect to the final desired code rate $R$, such as $(1-R-\eps)$-relative distance, $(1-R-\eps,L)$-list-decoding, or $(1-R-\eps,\ell,L)$-list recovery. In our case, $\mathcal{P}$ is $(1-R-\eps,\ell,\ell^{o(R/\eps)})$-list-recoverability. By impossibility results on the rate-radius trade-off \cite{singleton,st20,GST24}, including our (tight) generalized Singleton bound \cref{thm:GSB-intro}, the property $\mathcal{P}$ is \emph{only possible} for codes with rate at most $R+\eps$. 

Since the spirit of AEL framework is a local-to-global amplification on block length, the property $\mathcal{P}_{\mathsf{in}}$ required for the inner code should be no weaker than $\mathcal{P}$. Otherwise, since the local-to-global amplification on block length conceptually incurs some loss in parameters, it does not make sense that a \emph{weaker} property $\mathcal{P}_{\mathsf{in}}$ becomes a \emph{stronger} property $\mathcal{P}$ after the amplification on the block length.

Naturally then, a stronger property $\mathcal{P}_{\mathsf{in}}$ conceptually requires smaller rate, so we should consider $R_{\mathsf{in}}\leq R+\eps$. In the context of code concatenation, we let $R_{\mathsf{in}},R_{\out}$ denote the rate of inner and outer code respectively. Then, after concatenation, the resulting code has rate $R=R_{\mathsf{in}}R_{\out}$. This calculation then already forces that $R_{\out}\ge\frac{R}{R_{\mathsf{in}}}\ge 1-\frac{\eps}{(R+\eps)}\ge 1-\frac{2\eps}{R_{\mathsf{in}}}$. This is exactly the rate lower bound required in the second condition of the formal statement \Cref{cor:main} (We assume a stronger condition $1-2\eps$ in the informal statement only for simplicity). Thus, this first condition is extremely pervasive, and automatically holds for all previous work building AEL codes with additive outer codes. Moreover, by the above reasoning, any ``natural instantiation'' of the AEL framework also satisfies this condition that outer code has rate $\geq 1 - 2\eps$, thereby providing a compelling barrier towards using this framework to break the list-recovery barrier. 


Indeed, to bypass this obstruction, one requires an additive outer code which has rate \emph{less than} $1-2\eps$. In this case, then the inner code rate $R_{\mathsf{in}}\ge R+\eps$ must be significantly larger than the target rate, which is counter-intuitive from the previous discussion.
\paragraph{Non-additive outer code: rate at least $1-\ell^{-1/\eps}$.} 
The aforementioned discussion focuses on the setting when the outer code is \emph{additive}. Here, we now discuss restrictions when the outer code is \emph{non-additive}. Before this, it is worth mentioning that \emph{no} previous work on AEL codes intrinsically requires non-additive codes as outer codes. Concretely, let $\mathcal{P}'$ denote the outer code property required. In all prior work the outer codes with property $\mathcal{P}'$ can be already explicitly constructed as \emph{additive} codes, and so these constructions are already precluded from breaking the list recovery barrier by the aforementioned point. 

Regardless of non-additivity, we claim that all previous state-of-the-art upper bounds on output list sizes of list decodable/recoverable AEL codes \cite{ael95,jmst25,st25,js26,ggh26} still hold when the outer code rate is larger than $1-\ell^{-1/\eps}$. This implies that all instantiations of the AEL framework in these works \emph{still satisfy} condition (1) of \cref{thm:AEL-intro}, and are thereby precluded from breaking the list-recovery barrier.

Indeed, in the original \cite{ael95} and all recent works on AEL codes \cite{jmst25,st25,js26,ggh26}, there is a function $f\colon (0,1)\to (0,1)$ such that for any  $\mu>0$, as long as the normalized second eigenvalue of the expander satisfies $\lambda_2/D\leq f(\mu)$, the proof strategy still works for any outer code with relative distance at least $\mu$. Although a smaller second eigenvalue requires larger degree $D$, which means larger alphabet size, in this problem we only care about the output list size $\ell^{o(R/\eps)}$ regardless of the alphabet size.

\cite{lps88} guarantees optimal  spectral expanders with $\lambda_2\leq 2\sqrt{D-1}$. Therefore, it suffices to choose  $D\ge4/f^2(\mu)$. The only upper bound for $D$ is $D\leq n$, so any constant $\mu>0$ is a valid parameter setting. This indicates that if we do not care about alphabet size, then for   arbitrarily small constant  $\eta>0$, there is a constant 
$0<\mu\leq \eta$ so that there are outer codes with rate at least $1-\eta$ and relative distance $\mu$, and  \cite{ael95,jmst25,st25,js26,ggh26} can be implemented on these outer codes.\footnote{\cite{highrate} Chooses subconstant $\mu=o(1)$.  Our lower bound works for all outer codes with subconstant relative distance $\mu=o(1)$.}\footnote{\cite{gr08,krsw} requires the relative distance $\mu$ to be large enough so that the outer code has good enough list-decodability/recoverability. However, their proof strategies are now completely and significantly improved by \cite{jmst25,st25}, which works for arbitrarily small constant relative distance $\mu>0$.} By choosing any constant $\eta<\ell^{-1/\eps}$, this implies that all proof strategies in these previous results \emph{still work} under the above outer code rate condition. This precludes these works from breaking the list recovery barrier.

So, in order to break this list recovery barrier, one possibility is to use an instantiation of the AEL framework which \emph{does not} satisfy the rate requirement $1-\ell^{-1/\eps}$. What would such a framework look like? The key fact is that its proof strategy must \emph{intrinsically fail} for small enough constant relative distance $\mu<\ell^{-1/\eps}$, which is conceptually different from all aforementioned work. We believe that would require brand new ideas if such an AEL framework was indeed proposed.

Beyond the above, using the AEL framework with non-additive outer codes would also require \emph{explicit non-additive codes} with properties not guaranteed by additive codes! Indeed, to propose a new AEL framework which bypasses this condition, one would require $\mathcal{P}'$ to be some property that is beyond the reach of additive codes. There exist properties that are proven to be impossible for additive codes like aforementioned list recovery barrier for additive codes (\Cref{fact:LR-limit} or \cite[Section 4]{GST24}). But importantly, none of these properties are known to have non-additive explicit constructions. Therefore, as an inherent prerequisite for this kind of AEL framework, one would first have to worry about how to explicitly construct the non-additive outer code with property $\mathcal{P}'$, which genuinely relies on non-additivity. This is currently a very interesting open problem.
\begin{open}
    Find a combinatorial property $\mathcal{P}$ which is not achievable by any additive code, but for which there are known explicit constructions.
\end{open}
Our target problem that constructs $(1-R-\eps,\ell,\ell^{o(R/\eps)})$ list-recoverable code is a specialized version of this general open problem. However, it would in some sense be a circular proof that in order to use the AEL framework to construct codes that outperform additive codes, one would first have to \emph{already explicitly} construct outer codes with properties not achievable by additive codes.

\subsubsection{Inner-encoder insensitivity}\label{sec:encoder}


Let us first define what inner-encoder insensitivity means. The AEL framework can be seen as code concatenation \cite{concat} plus shuffling via an expander. Specifically, let $C_{\out}\subseteq \Sigma_{\out}^n$ and $C_{\mathsf{in}}\subseteq \Sigma_{\mathsf{in}}^D$ denote the outer and inner code where $|C_{\mathsf{in}}|=|\Sigma_{\out}|$, and $\Enc\colon \Sigma_{\out}\to C_{\mathsf{in}}$ denote some encoder of $C_{\mathsf{in}}$. Then, the concatenation of $C_{\out}$ and $C_{\mathsf{in}}$ is defined as
\[
C=C_{\out}\circ C_{\mathsf{in}}=\{(\Enc(c_1),\dots,\Enc(c_n))\colon c\in C_{\out}\}.
\]
We say an AEL framework is \emph{inner-encoder insensitive} if and only if, for any fixed choice of the chosen $C_{\out}$ and $C_{\mathsf{in}}$, the concatenated code $C_{\out}\circ C_{\mathsf{in}}$ has the target property $\mathcal{P}$ after the expander reshuffle \emph{regardless} of the choice of $\Enc$. In other words, it means the success of the AEL framework only depends on the \emph{combinatorial} properties of outer and inner codes, rather than the encoders used to connect them.

All prior work on AEL codes achieving good combinatorial properties are inner-encoder insensitive, thereby ensuring (in connection with the above discussion) that all known AEL proof strategies are precluded from breaking the list recovery barrier.\footnote{\cite{js26} and \cite{ggh26} have to use a linear encoder in the concatenation. However, their goals are to construct good subspace design codes/LCL preserved codes. These properties are only defined for additive codes, so their encoder choices are compulsory by definitions for their purposes.  In addition, they care about code properties that are tailored for additive codes, while our target $(1-R-\eps,\ell,\ell^{o(R/\eps)})$ list recovery is proven to be beyond the performance of any additive codes.} An inner-encoder \emph{sensitive} AEL framework on the other hand, would have to crucially make use of some special properties provided by specifically chosen inner code encoders. This would require intrinsically brand new ideas beyond all existing proof strategies for AEL codes.

\subsubsection{Spectral Bounds}\label{sec:spectral}

\paragraph{Expanders with small enough normalized second eigenvalue are required.} If we do not care about the alphabet size and assume the explicitness of inner codes, then all previous instantiations of the AEL framework work for all \emph{small enough} values of the normalized second eigenvalue $\lambda_2/D$. This fact is intuitive because a smaller second eigenvalue implies \emph{better} pseudorandomness (Reflected as smaller error terms in the mixing lemma \Cref{lem:expander}). Therefore, in principle we should expect that a smaller second eigenvalue yields better parameters in target properties like the output list size of list recovery. Our impossibility result is applicable to \emph{all} 
$(\leq\lambda)$-expanders with small enough second eigenvalue, where $\lambda>0$ is a constant unrelated to $n$. In this way, all prior work produces AEL codes that follow this condition, and therefore are precluded from breaking the list recovery barrier (again, in conjunction with the above).

\section{Technical Overview}

\subsection{Generalized Singleton Bound for List Recovery}

We start by sketching the ideas that underlie \cref{thm:GSB-intro}.

\paragraph{Upper Bounding the Rate}

We begin by discussing the rate upper bound from \cref{thm:GSB-intro}. To help convey the ideas, we consider the simpler case of $\ell = L =2$. Recall that in this case, a code $C \subseteq \Sigma^n$ fails to be $(\rho, 2, 2)$-list recoverable when there are \emph{three} distinct codewords $c_1, c_2, c_3 \in C$ and coordinate-wise lists $S_1, S_2, \dots S_n \subseteq \Sigma$ each of size $2$ such that each of $c_1, c_2, c_3$ is captured by these lists on at least $(1 - \rho)n$ of their coordinates. Our bound shows that this failure is \emph{unavoidable} for every code of rate $R$ once $\rho \geq \frac{1 - 2R}{3}$, or equivalently, once the level of agreement satisfies $1 - \rho \leq \frac{2 + 2R}{3}$.  

To argue that this failure is unavoidable, we show the existence of a simple ``agreement triangle'' pattern. To start, we suppose that $|C| \geq q^{Rn + \Omega(1)}$. We then choose two disjoint coordinate blocks $I_1, I_2 \subseteq [n]$ with $|I_1| \approx |I_2| \approx Rn$, as long as this is possible. Our starting goal is then to find some codeword $c_1$ such that \emph{many} other codewords agree with $c_1$ on both $I_1$ and $I_2$. Proving this relies on the pigeonhole principle: for a coordinate set $I \subseteq [n]$, there are only $q^{|I|}$ many possible restrictions to $I$. Thus, the set of codewords whose $I$-fiber\footnote{For a codeword $c$, its $I$-fiber is $\{c' \in C: c'|_I = c|_I\}$. } is of size $1$ is bounded exactly by $q^{|I|}$. By applying this reasoning to $I_1, I_2$ and using the fact that our code has at least $q^{Rn + \Omega(1)}$ 
many codewords, this means that we can find a codeword $c_1 \in C$ whose $I_1$-fiber and $I_2$-fiber are \emph{both} of size $\geq 2$.

To summarize, this means that we can find codewords $c_1, c_2, c_3$ such that:
\[
c_2 \neq c_1, c_2|_{I_1} = c_1|_{I_1}, \quad \quad \quad c_3 \neq c_1, c_3|_{I_2} = c_1|_{I_2}.
\]
That is, the codewords $c_1, c_2$ agree on all coordinates in $I_1$, and the codewords $c_1, c_3$ agree in all coordinates in $I_2$.

Now, it remains only to show that this agreement pattern necessarily violates $(\rho, 2, 2)$-list recoverability: on a coordinate $i \in I_1$, the three symbols $c_{1, i}, c_{2, i}, c_{3,i}$ only take two values, as $c_1$ and $c_2$ agree. On a coordinate $i \in I_2$, the same is true, as $c_1$ and $c_3$ agree. So, on the covered blocks $I_1 \cup I_2$, the lists $S_i: i \in I_1 \cup I_2$ are all of size $\leq 2$. If $I_1 \cup I_2 = [n]$ (i.e., they cover all coordinates), then this even gives an outright obstruction to zero-error list recovery. Otherwise, we define $I_0 = [n] - (I_1 \cup I_2)$. Note that on these coordinates, we have no guarantee that the codewords $c_1, c_2, c_3$ agree, and so it is possible that $\{c_{1, i}, c_{2, i}, c_{3, i}\}$ has size $3$. This is where we make our final observation: at each leftover coordinate, we are \emph{still allowed} to use a list of size $2$. Thus by cycling over the sets $\{c_{1, i}, c_{2, i}\}, \{c_{1, i}, c_{3, i}\}, \{c_{2, i}, c_{3, i}\}$ we can ensure that each of the three codewords is included in at least a $2/3$ fraction of the remaining sets! Thus, each of $c_1, c_2, c_3$ is captured on at least\footnote{Assume $|I_0|$ is divisible by $3$.} 
\[
|I_1 \cup I_2| + \frac{2}{3} \cdot |I_0|
\]
many coordinates. Plugging in our choices of $|I_1|, |I_2|$, this implies that each of $c_1, c_2, c_3$ agrees with the lists in at least $\left ( 1 - \frac{1-2R}{3}\right ) \cdot n$ many coordinates. We summarize this argument in \cref{fig:GSBupper}.

\begin{figure}
    \centering
    \usetikzlibrary{arrows.meta,calc}

\definecolor{listgreen}{RGB}{0,135,70}

\begin{tikzpicture}[
    x=1cm,
    y=1cm,
    codeword/.style={
        draw=black,
        line width=0.8pt,
        line cap=round
    },
    bigblock/.style={
        draw=listgreen,
        fill=white,
        line width=1.0pt,
        rounded corners=9pt
    },
    listbox/.style={
        draw=listgreen,
        line width=1.1pt,
        rounded corners=4pt
    },
    label/.style={
        font=\Large
    },
    mathlabel/.style={
        font=\Large
    },
    greenlabel/.style={
        font=\large,
        text=listgreen
    }
]

\coordinate (c1L) at (0,2);
\coordinate (c1R) at (10.8,2);
\coordinate (c2L) at (0,0);
\coordinate (c2R) at (10.8,0);
\coordinate (c3L) at (0,-2);
\coordinate (c3R) at (10.8,-2);

\draw[codeword] (c1L) -- (c1R);
\draw[codeword] (c2L) -- (c2R);
\draw[codeword] (c3L) -- (c3R);

\node[label,anchor=east] at (-0.25,2) {$c_2$};
\node[label,anchor=east] at (-0.25,0) {$c_1$};
\node[label,anchor=east] at (-0.25,-2) {$c_3$};

\draw[bigblock] (0.85,-0.35) rectangle (4.25,2.65);
\node[mathlabel] at (2.55,1.0) {$I_1$};

\draw[bigblock] (4.65,-2.45) rectangle (7.95,0.45);
\node[mathlabel] at (6.30,-1.0) {$I_2$};

\draw[listbox] (1.00,-2.32) rectangle (4.15,-1.68);

\draw[listbox] (4.70,1.62) rectangle (7.950,2.38);

\draw[listbox] (10.00,1.62) rectangle (10.62,2.38);

\draw[listbox] (8.25,-2.45) rectangle (8.88,0.45);
\draw[listbox] (9.10,-0.35) rectangle (9.75,2.38);
\draw[listbox] (10.00,-2.55) rectangle (10.62,-1.45);

\node[mathlabel,anchor=west] at (11.35,1.85)
    {$|I_1 \cup I_2| \approx 2Rn$};

\node[greenlabel,anchor=west] at (11.35,-1.15)
    {local $2$-symbol lists};

\draw[
    listgreen,
    line width=1.0pt,
    -{Stealth[length=3mm,width=2mm]}
]
    (11.85,-1.55)
    .. controls (11.55,-2.30) and (10.75,-2.30)
    .. (10.35,-2.12);

\end{tikzpicture}
    \caption{A view of the agreement sets $I_1, I_2$ and remaining coverage from coordinate sets. Green sets represent symbols that are covered by the sets $S_1, \dots S_n$. The $I_1$ and $I_2$ boxes represent that the coordinates of $c_1, c_2$ (respectively $c_1, c_3$) exactly coincide.}
    \label{fig:GSBupper}
\end{figure}

The full theorem uses a generalization of this reasoning to capture all choices of $\ell$ and $L$. Rather than using three codewords and lists of size $2$, one can generalize the argument to use $L+1$ codewords and lists of size $\ell$. One then chooses $L$ disjoint coordinate blocks, and for each non-base codeword $c_j \neq c_1$, one forces agreement with the base codeword on a carefully chosen cyclic union $J_j$ of $L +1 - \ell$ many blocks. This design enforces the property that every coordinate belongs to exactly $L +1 - \ell$ of these $J_j$'s. This means that at each such coordinate at least $L+2 - \ell$ of the $L+1$ codewords share the same symbol, implying that all codewords can be covered by a list of size $\ell$. The remaining coordinates (i.e., not in any of $L$ coordinate blocks) are then handled by a similar averaging argument akin to the $\ell = L = 2$ case, which will yield that an $\ell$-sized list can capture $\ell$ out of the $L+1$ many codewords. All in all, this then yields the final agreement bound of 
\[
\frac{\ell}{L+1} \cdot n + \frac{L}{L+1} \cdot Rn,
\]
which is the complement of our generalized Singleton bound:
\[
\rho = \frac{L+1 - \ell}{L+1} - \frac{L}{L+1} \cdot R.
\]

\paragraph{Lower Bounding the Rate.}

To prove a tight characterization of when list recovery is achievable, we need to show that there are codes which can in fact \emph{match} the aforementioned upper bound on the rate. The second part of our proof shows that this is indeed the case. The key quantity we use to show this is possible is the ``collision weight'' of a set of $L+1$ codewords:
\[
\wt(c_1, \dots c_{L+1}) = \sum_{i = 1}^{n} (L+1 - |\{c_{1,i}, \dots c_{L+1, i}\}|).
\]
For our exposition, we will focus on the case when $\ell = L = 2$, in which case $\wt(c_1, c_2, c_{3}) = \sum_{i = 1}^{n} (3 - |\{c_{1,i}, c_{2, i}, c_{3, i}\}|)$.
For intuition, a coordinate where all three codewords are distinct contributes collision weight $0$, and a coordinate where all three codewords are the same contributes weight $2$. 

The starting observation is that for any sequence of two symbol lists $S_1, \dots S_n$ the total number of captured incidences satisfies 
\[
\sum_{j = 1}^3 |\{i \in [n]: c_{j, i} \in S_i\}| \leq 2n + \wt(c_1, c_2, c_3).
\]
The $2n$ follows from the trivial number of incidences that can be captured from two-symbol lists while $\wt(c_1, c_2, c_3)$ captures the ``extra incidences'' that can be captured when symbols from the codewords coincide. Note that if three codewords $c_1, c_2, c_3$ are all captured on at least a $(1 - \rho)n$ fraction of their coordinates, then $3(1 - \rho)n \leq 2n + \wt(c_1, c_2, c_3)$, which means that $\wt(c_1, c_2, c_3) \geq (1-3\rho)n$.

By our  previous discussion, our target Singleton bound in this regime is at $\rho = ( 1- 2R) / 3$; equivalently, this means that $\wt(c_1, c_2, c_3) \geq 2Rn$. All this is to say that avoiding list recovery violations is essentially \emph{equivalent} to showing that every triple of codewords has collision weight at most $2Rn$.

All that remains then is to show that there are codes which satisfy this condition. For this, we rely on a randomly chosen code along with a secondary ``clean-up phase'' of a random code which does not throw out too many codewords (compare  \cite[Proposition I.4]{AGL25}). Indeed, let us start by taking a random code with say $2 q^k$ many codewords for $k = Rn$. We say that a triple of codewords $c_1, c_2, c_3$ is \emph{bad} if $\wt(c_1, c_2, c_3) \geq t_2$, where intuitively, $t_2 \approx 2Rn$, but we will build in some slack which becomes arbitrarily small over large alphabets; more formally $t_2 = 2k + O(n / \log q)$.

Now, we analyze these three codewords $c_1, c_2, c_3 \in \Sigma^n$: at each coordinate $i$, we build a partition of $\{1, 2, 3\}$ which captures the equality patterns of these codewords. That is, a partition $\{1, 2\}, \{3\}$ reflects that $c_{1, i} = c_{2, i}$ at this coordinate, while $c_{3, i}$ is distinct from the other two. The key point is that an equality pattern of the form $\{1, 2\}, \{3\}$ appears with probability $\approx \frac{1}{q}$, the equality pattern $\{1, 2, 3\}$ appears with probability $\approx \frac{1}{q^2}$, and the most likely equality pattern is $\{1\}, \{2\}, \{3\}$. Using this, one can show that for a fixed equality pattern over all $n$ coordinates, if the collision weight of that pattern is $w$, then the probability of that equality pattern appearing is $\leq q^{-w}$.

We can then observe that there are at most $5^n$ possible equality patterns ($5$ per coordinate),
and thus 
\[
\Pr[\wt(c_1, c_2, c_3) \geq t_2] \leq 5^n \cdot q^{-t_2}.
\]
The key fact now is that this allows us to bound the \emph{expected number} of violating triples in our randomly sampled code. Indeed, there are at most $\binom{2q^k}{3} \leq (2q^k)^3$ many triples of codewords in the sampled random code. Thus, the expected number of bad triples is bounded by 
\[
(2q^k)^3 \cdot q^{-t_2} \leq (2q^k)^3 \cdot 5^{n} \cdot q^{-2k} \cdot q^{-\frac{O(n)}{\log(q)}} \ll q^k.
\]
To conclude the proof then, it suffices to observe that for each such bad triple, we can delete a single participating bad codeword. This removes all bad triples from the code, but only results in the deletion of $\ll q^k$ many codewords. Hence, there are still $\geq q^k$ many codewords left in the code, as we started with $2q^k$ many codewords! All together, this then yields a code with $\geq q^k$ many codewords and no bad triples, which by our aforementioned discussion, is thus $(\rho, 2, 2)$ list recoverable.

\subsection{AEL Impossibility}

Now we sketch the proof ideas of \cref{cor:main}. It is a parameterized version of \cref{thm:ael-main}, which says for any $\ell,m\ge 2$, when (1) the outer code $C_{\out}$ satisfies certain conditions, (2) the inner code encoder can be selected adversarially, and (3) the normalized second eigenvalue of the shuffle expander is small enough, the AEL code is not guaranteed to be $(1-\frac{m(R-\eps)}{m-1},\ell,L)$ list recoverable where $R$ denotes the rate of inner code and $L=\ell^m-1$. Specifically, we can find $\ell^m$ bad codewords in the AEL code that form an $m$-dimensional hypercube such that: On a set $S\subseteq[n]$ of coordinates with size $|S|=\frac{m(R-\eps)n}{m-1}$, for each $i\in S$, the $\ell^m$ bad codewords have only $\ell$ possibilities on the entry $i$.  This hypercube has side-length $\ell-1$, so the $\ell^m$ bad codewords can be seen as points in this unit-distance $m$-dimensional grid. This construction is inspired by \cite{CZ25}. 

Let us fix the simplest example $\ell=2$ and $m=2$, where many key ideas in the proof already show up. In this case our goal is to find four distinct codewords $\{c_{a,b}\colon a,b\in\{0,1\}\}$ in the AEL code such that for disjoint $I_a,I_b\subseteq[n]$ with $|I_a|=|I_b|=(R-\eps)n$, we have for any $a\in\{0,1\}$, $c_{a,0}[I_a]=c_{a,1}[I_a]$, and for any $b\in\{0,1\}$, $c_{0,b}[I_b]=c_{1,b}[I_b]$. If this holds, then for any $i\in I_a\cup I_b$, we know $|\{c_{a,b}[i]\colon a,b\in\{0,1\}\}|\leq 2=\ell$, which means there is a size-$\ell$ input list on $i$-th coordinate such that all the four codewords agree on it.

Since each codeword of the AEL code comes from some codeword of the outer code, we should start by finding the outer codewords $\{c'_{a,b},a,b\in\{0,1\}\}$ that will be then transformed into $\{c_{a,b}\colon a,b\in\{0,1\}\}$ after concatenation with inner code and expander reshuffle. We will take advantage of the large rate $R_{\out}$ of outer codes to find $\{c'_{a,b},a,b\in\{0,1\}\}$, and then use the spectral expansion and a carefully chosen encoders for inner codes to transform them into the final list $\{c_{a,b},a,b\in\{0,1\}\}$\
Our requirement for these $\{c'_{a,b}\}$ is the following: For a partition $J_0\cup J_a\cup J_b$ of $[n]$ that will be chosen later, there should be
\begin{itemize}

\item The four outer codewords $\{c'_{a,b}\}$ agree with each other on $J_0$.
\item For any $a\in\{0,1\}$, $c'_{a,0}[J_a]=c'_{a,1}[J_a]$.
\item For any $b\in\{0,1\}$, $c'_{0,b}[J_b]=c'_{1,b}[J_b]$.
\end{itemize}
Just this task of finding codewords satisfying the above is already very subtle. For this technical overview, we focus primarily on conveying the intuition for how we prove these codewords exist. 

Note that this is a stronger requirement than the conditions required for $\{c_{a,b}\}$ since we need all four codewords to agree on $J_0=[n]\backslash(J_a\cup J_b)$. However, we can find satisfying codewords $\{c'_{a,b}\}$ if the rate of outer code is large enough as follows.
\paragraph{If $C_{\out}$ is additive and $R_{\out}>\frac{m-1}{m}=\frac{1}{2}$.} Since $C_{\out}$ is additive, then for any $S\subseteq [n]$ with $|S|<R_{\out}n$, there exists $c'_S\in C_{\out}$ that vanishes on $S$ ($c'_S[S]=0$) but non-zero on remaining coordinates ($c'_{S}[[n]\backslash S]\neq 0$). Let $J_0\cup J_a\cup J_b$ be any partition such that $|J_0|=(1-\frac{2}{m})n$, $|J_a|=|J_b|=\frac{n}{m}$, then we know $|J_0\cup J_a|=|J_0\cup J_b|<R_{\out}n$. In this case, we can just define $c'_{a,b}=ac'_{J_0\cup J_b}+bc'_{J_0\cup J_a}$ for any $a,b\in\{0,1\}$, and it is easy to verify that these four codewords satisfy the required conditions. The formal statement of this step can be found in \cref{lem:outer-list}.

\paragraph{If $C_{out}$ is non-additive and $R_{\out}>1-\frac{1}{2m\ell^{m-1}}=\frac{7}{8}$.} In this case we use the famous Kővári–Sós–Turán Theorem (\Cref{thm:erdos}). Suppose $|J_a|=|J_b|=s,|J_0|=n-2s$, let $q=|\Sigma_{\out}|$,  since $|C_{\out}|=q^{R_{\out}n}$, by pigeonhole principle there must be $C'\subseteq C_{\out}$, $|C'|=q^{R_{\out}n-n+2s}$ so that all codewords in $C'$ agree on $J_0$. Then we construct a bipartite graph $G=(L\cup R,E)$ so that $|L|=|R|=q^s$ and $|E|=|C'|> q^{R_{\out}n-n+2s}$. The construction of $G$ is simple: We interpret vertices in $L$ and $R$ both as vectors in $\Sigma_{\out}^s$. Then, for each $c'\in C'$, we add an edge $(c'[J_a],c'[J_b])$ to the graph. The Kővári–Sós–Turán Theorem tell use that if $|E|=\omega(|L|^{3/2})$, there must be a $K_{2,2}$ subgraph in $G$. When $R_{\out}$ is large enough, we can set the parameter $s$ accordingly so that the condition holds. Then, we can make the four codewords that consist of the four edges of the $K_{2,2}$ subgraph of $G$ as the desired $\{c'_{a,b}\}$. By the definition of $G$, we know that these codewords agree on $J_0$ and $|\{c'_{a,b}[i]\colon a,b\in\{0,1\}\}|=2$ for each $i\in J_a\cup J_b$, so these codewords satisfy our conditions.

Then, we will construct $\{c_{a,b}\}$ from $\{c'_{a,b}\}$ by choosing specific inner code encoders. For each $i\in[n]$, fix the disjoint right vertex sets $I_a,I_b\subseteq [n]$ that we wish to construct agreement, and then for each left vertex $i\in[n]$, there are only three cases:
\begin{itemize}
\item If $i\in J_0$, then since the four codewords already agree on $i$, we do not need to do anything special
\item If $i\in J_a$, then let $N_{i,a},N_{i,b}\subseteq [D]$ denote the set of edges between $i$ and $I_a$, $I_b$ respectively. Edges in $N_{i,a}$ only require $c_{a,0}=c_{a,1}$ for any $a\in\{0,1\}$. This is already guaranteed by the choices of $\{c'_{a,b}\}$. The new challenge is that for edges in $N_{i,b}$, they need $c_{0,b}=c_{1,b}$ for any $b\in\{0,1\}$. This means we have to choose the inner code encoder $\Enc_{i}\colon \Sigma_{\out}\to C_{\mathsf{in}}$ so that $\Enc_{i}(c'_{0,b}[i])[N_{i,b}]=\Enc_{i}(c'_{1,b}[i])[N_{i,b}]$. If $|N_{i,b}|<R_{\mathsf{in}}D$, then by pigeonhole principle we can guarantee there exists such an encoder and provide the sufficient agreement on $N_{i,b}$. If the graph $G$ is a random $D$-regular graph, then $\mathbb{E}[|N_{i,b}|]=\frac{|I_b|}{n}=(R_{\mathsf{in}}-\eps)D$. Since the spectral expander $G$ is pseudorandom graph with small enough second eigenvalue $\lambda_2/D$, by the famous mixing lemma (\Cref{lem:expander}) we know that the condition $|N_{i,b}|<R_{\mathsf{in}}D$ holds for \emph{almost all} left vertices. Therefore, we only need to select $J_a$ as a subset of ``good left vertices'', and the required inner encoder must exist.
\item The case $i\in J_b$ is symmetric to the former case.
\end{itemize}
This construction can generalize to any $m,\ell\ge 2$ and construct $\ell^m$ bad codewords. Therefore, we get the lower bound that the AEL frameworks can never be $(1-\frac{m(R-\eps)}{m-1},\ell,\ell^m-1)$ list recoverable for any  integers $m,\ell\ge2$. The main theorem is then derived from this bound as in \Cref{cor:main}.

\section{A Tight Generalized Singleton Bound for List Recovery}\label{sec:LR-GSB}

In this section, we prove an (essentially) tight generalized Singleton bound for list recovery. We begin with some preliminaries on set partitions and agreement hypergraphs.

\subsection{Partitions and Agreement Hypergraphs}


Given a finite set $S$, a partition of $S$ is a family of sets $\{S_1, \hdots, S_k\}$ such that $S_1, \hdots, S_k$ are pairwise disjoint and $S_1 \cup \cdots \cup S_k = S$. We write $S_1 \sqcup \cdots \sqcup S_k = S$ to succinctly denote that $\{S_1, \hdots, S_k\}$ is a partition of $S$.  We let $\mathcal P_S$ denote the set of all partitions of $S$. If $S = [m]$, then we use the notation $\mathcal P_m$ to denote $\mathcal P_{[m]}$.

Given a partition $P \in \mathcal P_S$, we have a corresponding equivalence relation $\sim_P$ on $S$ such that for any $i,j \in S$ we have that $i \sim_P j$ if and only if $i$ and $j$ are in the same part of $P$. We make the following simple observation.

\begin{proposition}
For all finite sets $S$, we have that $|\mathcal P_S| \le |S|!$.
\end{proposition}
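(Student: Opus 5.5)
I will prove this with an injection from the set of permutations of $S$ onto $\mathcal P_S$; a surjection from permutations to partitions gives $|\mathcal P_S| \le |S|!$ immediately. The natural candidate is the cycle decomposition. Every permutation $\sigma$ of $S$ decomposes $S$ into disjoint orbits (cycles), and these orbits form a partition of $S$. So I define $\Phi(\sigma)$ to be the partition of $S$ into the orbits of $\sigma$.

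The key step is to show that $\Phi$ is surjective. Given any partition $P = \{S_1,\dots,S_k\} \in \mathcal P_S$, I will fix an arbitrary ordering of each part $S_j = \{s_{j,1},\dots,s_{j,t_j}\}$. I then define $\sigma$ by $\sigma(s_{j,r}) = s_{j,r+1}$ for $r<t_j$ and $\sigma(s_{j,t_j}) = s_{j,1}$; singletons become fixed points. This $\sigma$ is a bijection, because it permutes each part cyclically and the parts are disjoint and cover $S$. Its orbits are exactly the parts $S_j$, so $\Phi(\sigma)=P$. Surjectivity then gives $|\mathcal P_S| \le |\mathrm{Sym}(S)| = |S|!$.

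There is no real obstacle. The only care needed is the degenerate case $S=\emptyset$. There $\mathcal P_\emptyset$ consists only of the empty partition, and $0!=1$, so the bound holds trivially.

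Alternatively, one could argue by induction on $|S|$. Removing a fixed element $x$ from a partition of $S$ gives a partition of $S\setminus\{x\}$. Each partition of $S\setminus\{x\}$ arises from at most $|S|$ partitions of $S$, since $x$ either joins one of at most $|S|-1$ existing parts or forms its own part. This yields $|\mathcal P_S| \le |S|\cdot(|S|-1)! = |S|!$. I will present the cycle-decomposition argument, since it is one line once the surjection is written down.
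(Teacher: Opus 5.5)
Your proof is correct, but your main argument takes a different route from the paper's; the paper's proof is essentially the induction you sketch at the end. The paper identifies $S$ with $[m]$ and builds a partition one element at a time. Element $i$ either joins the part of some $j\in[i-1]$ or starts a new part, so there are at most $i$ options at step $i$. Since every partition is produced by some sequence of choices, $|\mathcal P_m|\le 1\cdot 2\cdots m=m!$.

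Both arguments are surjections from a set of size $m!$ onto $\mathcal P_S$. The paper's source set is the set of choice sequences, whose size comes from the product rule. Yours is $\mathrm{Sym}(S)$ under the orbit map, whose size is already known, so no counting is needed; your map is also canonical, whereas the paper's depends on an ordering of $S$. The paper's version is more self-contained and easy to sharpen, since element $i$ really has only one more distinct option than the number of parts formed so far. Yours makes the fibers transparent: the permutations with orbit partition $P$ correspond exactly to a choice of cyclic order on each part. This gives $|S|!=\sum_{P\in\mathcal P_S}\prod_{B\in P}(|B|-1)!$, and shows the bound is strict once $|S|\ge 3$.

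One wording slip: your first sentence calls $\Phi$ an ``injection \dots onto''. It is many-to-one (both $3$-cycles on a $3$-element set have the same orbits), and surjectivity is what you actually prove and use.
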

\begin{proof}
Without loss of generality, we may assume that $S = [m]$ for some nonnegative integer $m$. We construct a partition $P \in \mathcal P_m$ iteratively as follows. Start with the empty partition. Then, for each $i \in \{1, 2, \hdots, m\}$, we can either choose that $i \sim_P j$ for some $j \in [i-1]$ or $i$ forms a new set in the partition. Some of these choices may be equivalent, but there are always at most $i$ choices. From this, we get an upper bound of $1 \cdot 2 \cdots m = m!$ on the number of partitions.
\end{proof}

Given, $P, Q \in \mathcal P_S$, we say that $P$ is a \emph{refinement} of $Q$, denoted by $P \preceq Q$ if for any $i, j \in S$ we have that $i \sim_P j$ implies $i \sim_Q j$. Conversely, we say that $Q$ is a \emph{coarsening} of $P$. Note that the partition of $S$ where every set is a singleton is the ``finest'' possible partition, and the of $S$ consisting of a single set is the ``coarsest'' possible partition.

Given a finite set $S$, We define an \emph{agreement hypergraph}~\cite{st20,AGGLZ25,CZ25} over $S$ to be any sequence $H = (P_1, \hdots, P_n) \in \mathcal P_S^n$. Given an alphabet $\Sigma$ and any codewords $c_1, \hdots, c_{m} \in \Sigma^n$, we can define their agreement hypergraph $\mathcal H(c_1, \hdots, c_m)$ to be a sequence $(P_1, \hdots, P_n) \in \mathcal P_{m}^n$ such that for all $i \in [n]$, we have that $P_i$ corresponds to the partition of $[m]$ such that $j,j' \in [m]$ are in the same set if and only if $c_{i,j} = c_{i,j'}$. Given $H \in \mathcal P_m^n$, we say that $(c_1, \hdots, c_m)$ \emph{certify} $H$ if $\mathcal H(c_1, \hdots, c_m) \preceq H$.


Given a partition $P \in \mathcal P_S$, we define its cardinality $|P|$ to be the number of sets in the partition and we define the partition's \emph{weight} to be $\wt(P) = |S| - |P|$. Then, for an agreement hypergraph $H = (P_1, \hdots, P_n) \in \mathcal P_S^n$ we define its cardinality to be
\[
    |H| = \sum_{i=1}^n |P_i|,
\]
and its weight tobe
\[
    \wt(H) = \sum_{i=1}^n \wt(P_i) = n|S| - |H|. 
\]
For any codewords $c_1, \hdots, c_m \in \Sigma^n$, we define
\[
    \wt(c_1, \hdots, c_m) = \wt(H(c_1, \hdots, c_m)) = \sum_{i=1}^n \left(m - |\{c_{1,i}, \hdots, c_{m,i}\}|\right).
\]
Given $P \in \mathcal P_S$ and a set $T \subseteq S$, we let $P[T] \in \mathcal P_S$ denote the partition on $T$ induced by $P$. Given $H = (P_1, \hdots, P_n) \in \mathcal P_S^n$, we let $H[T] := (P_1[T], \hdots, P_n[T])$.

\subsection{A Novel Generalized Singleton Bound for List Recovery}\label{subsec:GSB}

Generalizing \cite[Theorem 5.1]{GST24}, we can prove the following generalized Singleton bound (GSB) for agreement hypergraphs. Throughout, we let $\Sigma$ denote an alphabet of size $q$.

\begin{theorem}\label{thm:LR-GSB}
Let $L \ge \ell \ge 1$ and $n \ge 1$ be integers. Let $R, \rho \in [0, 1]$ be such that
\begin{align}
    \rho &\ge \frac{L+1-\ell}{L+1} - \frac{L}{L+1} \cdot R,\label{eq:LR-GSB}
\end{align}
or equivalently
\begin{align}
    R \ge \frac{L+1-\ell}{L} - \frac{L+1}{L}\rho\label{eq:LR-GSB-rate}
\end{align}
Then, for any $C \subseteq \Sigma^n$ of size at least $q^{Rn + 5L}$, we have that $C$ is \emph{not} $(\rho, \ell, L)$ list-recoverable.
\end{theorem}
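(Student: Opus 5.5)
We follow the strategy sketched in the technical overview. We find $L+1$ distinct codewords $c_1,\dots,c_{L+1}\in C$ and lists $S_1,\dots,S_n$ of size $\ell$ such that each $c_j$ lies in $S_i$ for at least $(1-\rho)n$ coordinates $i$. By monotonicity we may assume \eqref{eq:LR-GSB} holds with equality, i.e.\ $R=\frac{L+1-\ell}{L}-\frac{L+1}{L}\rho$.

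\textbf{Step 1: blocks and fibers.} Set $b=\lfloor Rn/(L+1-\ell)\rfloor$; this is the block size used in the overview, where each non-base codeword agrees with $c_1$ on a union $J_j$ of $L+1-\ell$ blocks, each of size $\approx Rn/(L+1-\ell)$. Since $L$ blocks of this size have total size $\frac{L}{L+1-\ell}Rn$, the equality $R = \frac{L+1-\ell}{L}-\frac{L+1}{L}\rho$ forces $\rho\ge 0$, which is exactly the condition that these blocks fit into $[n]$. Choose disjoint blocks $B_1,\dots,B_L\subseteq[n]$ of size $b$ and let $I_0=[n]\setminus\bigcup_k B_k$. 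For $j=2,\dots,L+1$ let $J_j=\bigcup_{t=0}^{L-\ell} B_{(j-1+t)\bmod L}$, a cyclic window of $L+1-\ell$ consecutive blocks. Then every coordinate in $\bigcup_k B_k$ lies in exactly $L+1-\ell$ of the $L$ sets $J_j$, and $|J_j|\le Rn$.

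\textbf{Step 2: pigeonhole.} We need a base codeword $c_1$ whose $J_j$-fiber has size at least $2$ for every $j$. Then we can pick $c_j\ne c_1$ with $c_j|_{J_j}=c_1|_{J_j}$. For each $j$, the codewords whose $J_j$-fiber is a singleton number at most $q^{|J_j|}\le q^{Rn}$. Since $|C|\ge q^{Rn+5L} > L q^{Rn}$, some codeword avoids all $L$ bad sets; take it as $c_1$.

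\textbf{Distinctness (the main obstacle).} The $c_j$ must be pairwise distinct, not just distinct from $c_1$. I would fix this by a stronger pigeonhole: ask that each $J_j$-fiber of $c_1$ have size at least $L+1$. The number of codewords lying in some $J_j$-fiber of size at most $L$ is at most $L\cdot L\cdot q^{Rn}$. This is still less than $q^{Rn+5L}$, and the $5L$ slack also absorbs the rounding of $b$. We then choose $c_2,\dots,c_{L+1}$ greedily, each from its own fiber and avoiding previously chosen codewords.

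\textbf{Step 3: lists.} At each $i\in\bigcup_k B_k$, the $L+1-\ell$ codewords $c_j$ with $i\in J_j$ share the symbol of $c_1$. Hence at most $L+1-(L+1-\ell)=\ell$ distinct symbols occur, and we let $S_i$ be these symbols; all $L+1$ codewords are captured on these $Lb$ coordinates. On $I_0$, we assign lists by cycling through $\ell$-subsets of $[L+1]$ in a cyclic pattern: at the $r$-th coordinate of $I_0$ use the symbols of $c_{r},\dots,c_{r+\ell-1}$ (indices mod $L+1$). Each codeword is then captured on at least $\lfloor \ell|I_0|/(L+1)\rfloor$ coordinates of $I_0$.

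\textbf{Counting.} The number of coordinates on which each codeword is captured is at least
\[
Lb+\tfrac{\ell}{L+1}(n-Lb)-1=\tfrac{\ell}{L+1}n+\tfrac{L+1-\ell}{L+1}Lb-1 .
\]
With $b\approx Rn/(L+1-\ell)$ this equals $\tfrac{\ell}{L+1}n+\tfrac{L}{L+1}Rn$, which is $(1-\rho)n$. The rounding loss of $O(L)$ coordinates should be recovered by enlarging $b$ by a constant, since the blocks may have size up to $(Rn+5L)/(L+1-\ell)$ while keeping the pigeonhole count below $|C|$. This constant bookkeeping is the remaining fiddly part of the argument.
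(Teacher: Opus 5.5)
Your proposal follows the paper's proof essentially step for step. The ingredients are the same: $L$ disjoint blocks with cyclic windows $J_j$ of $L+1-\ell$ blocks; a pigeonhole forcing every $J_j$-fiber of $c_1$ to have size at least $L+1$, where the bad set has size at most $L^2 q^{\max_j |J_j|}$; a greedy choice of distinct $c_2,\dots,c_{L+1}$; and balanced $\ell$-lists on $I_0$. The paper handles the constant bookkeeping you defer by capping every block at $\lfloor Rn/(L+1-\ell)\rfloor+3$ and making the union of the blocks as large as possible. This gives $|J_j|\le Rn+3(L+1-\ell)$, so the bad set has size at most $L^2q^{Rn+3L}<q^{Rn+5L}$. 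The extra $+2$ per block beyond the floor then pays for the rounding in the final count.

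A few details in your sketch need correcting.

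\emph{The cyclic pattern on $I_0$.} Your shift-by-one pattern (the $r$-th coordinate of $I_0$ gets $c_r,\dots,c_{r+\ell-1}$) does \emph{not} guarantee $\lfloor \ell|I_0|/(L+1)\rfloor$ captures per codeword. For $L=4$, $\ell=2$, $|I_0|=3$, the codeword $c_5$ is never covered, although $\lfloor 6/5\rfloor=1$. Shift by $\ell$ instead: the $r$-th coordinate gets $c_{(r-1)\ell+1},\dots,c_{r\ell}$, indices mod $L+1$. Then the first $N$ coordinates of $I_0$ run through the first $N\ell$ terms of $1,2,\dots,L+1,1,2,\dots$, so each index appears at least $\lfloor N\ell/(L+1)\rfloor$ times. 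Alternatively, note that your pattern loses at most $\ell(L+1-\ell)/(L+1)$ relative to $\ell|I_0|/(L+1)$, and the same $+3$ enlargement still absorbs this.

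\emph{Overflow after enlarging.} Once you enlarge the blocks, they may no longer fit in $[n]$ (e.g.\ $\rho=0$ with $L\mid n$). In that case take $L$ blocks of size at most the cap that partition $[n]$. Then $I_0=\emptyset$ and you get a zero-error violation.

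\emph{How large the blocks can be.} Blocks of size $(Rn+5L)/(L+1-\ell)$ are too large, because the factor $L^2$ in the pigeonhole count also consumes slack. Aim instead for something like $|J_j|\le Rn+3L$, using $L^2<4^L\le q^{2L}$.
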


\begin{remark}\label{rem:GST}
Theorem 5.1 of \cite{GST24} proves the an analogoue of \cref{thm:LR-GSB} with the weaker condition that 
\begin{align}
R \le 1 - \frac{L+1}{L+1-\ell}\rho\label{eq:LR-GST-rate}
\end{align}
or equivalently
\begin{align}
\rho \ge \frac{L+1-\ell}{L+1}(1-R) = \frac{L+1-\ell}{L+1} - \frac{L+1-\ell}{L+1} \cdot R\label{eq:LR-GST}
\end{align}
In particular, for all rates $R \in (0, 1)$, \cref{eq:LR-GSB} gives a strictly superior bound to \cref{eq:LR-GST} whenever $\ell \ge 2$, where the case $\ell=1$ corresponds to the list-decoding setting already established in \cite{st20}.

The most important qualitative difference between \cref{eq:LR-GSB-rate} and \cref{eq:LR-GST-rate} is that our bound has the ability to rule out zero-error $(0, \ell, L)$ list recovery for certain rates. For example, if $\ell = L = 2$, we can infer from \cref{eq:LR-GSB-rate} that $R \le \frac{1}{2}$ in order for our code $C$ to have a chance of being $(0, 2, 2)$ list-recoverable, whereas \cref{eq:LR-GST-rate} imposes the trivial constraint $R \le 1$. We are not the first to observe that there is are rate restrictions on zero-error list recovery, see \cite{RYZ24,rv26}. However, by virtue of \cref{cor:LR-GSB-tight}, we are the to precisely calculate the zero-error rate threshold over large alphabets. 
\end{remark}

\begin{proof}
Our goal is to find sets $S_1, \hdots, S_n \subseteq \Sigma$ of size at most $\ell$ and distinct codewords $c_1, \hdots, c_{L+1} \in C$ such that for all $i \in [L+1]$,
\begin{align}
    \sum_{j=1}^n \mathbf{1}[c_{i,j} \in S_j] \ge (1-\rho)n.\label{eq:GSB-goal}
\end{align}
Given a set $I \subseteq [n]$ of coordinates and a codeword $c \in C$, we let $\mathcal N_I(c) := \{c' \in C \mid c'|_{I} = c|_I\}$. For any integer $a \ge 1$, we further define 
\[
\mathcal F_{I,a} := \{c \in C \mid |N_I(c)| \le a\}.
\]
We observe the following fact.
\begin{claim}\label{claim:F-bound}
$|\mathcal F_{I,a}| \le a \cdot q^{|I|}$.
\end{claim}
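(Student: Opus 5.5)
The plan is a direct pigeonhole count over the possible restrictions to $I$. Partition $C$ into fibers according to the restriction map $c \mapsto c|_I$. For each $x \in \Sigma^I$, the fiber $\{c \in C : c|_I = x\}$ is exactly $\mathcal N_I(c)$ for every $c$ in that fiber. So membership of $c$ in $\mathcal F_{I,a}$ depends only on the fiber containing $c$: either every codeword of a fiber belongs to $\mathcal F_{I,a}$, or none does.

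Now write $\mathcal F_{I,a}$ as a disjoint union over $x \in \Sigma^I$ of those fibers of size at most $a$. Each such fiber contributes at most $a$ codewords, and there are at most $|\Sigma^I| = q^{|I|}$ fibers. Hence $|\mathcal F_{I,a}| \le a \cdot q^{|I|}$.

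There is no real obstacle here. The only point to state explicitly is that $\mathcal N_I(c)$ is the whole fiber of $c$ and not merely some subset of it; this is what lets the bound on fiber size transfer to a bound on the number of codewords. The notational slip between $N_I$ and $\mathcal N_I$ in the definition of $\mathcal F_{I,a}$ is harmless.
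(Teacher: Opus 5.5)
Your proposal is correct and is essentially the paper's argument: both are a pigeonhole count over the at most $q^{|I|}$ possible restrictions $c|_I$. The only difference is presentation. You sum directly over the fibers of size at most $a$, while the paper argues by contradiction: more than $a\cdot q^{|I|}$ elements would force some restriction shared by $a+1$ elements of $\mathcal F_{I,a}$.
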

\begin{proof}
If $|\mathcal F_{I,a}| > a \cdot q^{|I|}$, then by the pigeonhole principle, there is a string $x \in \Sigma^I$ such that there are at least $a+1$ choice of $c \in \mathcal F_{I,a}$ for which $c|_{I} = x$. For any one such $c$, we thus have that $|\mathcal N_I(c)| \ge a+1$, a contradiction of the definition of $\mathcal F_{I,a}$.
\end{proof}

Pick sets $I_1, \hdots, I_L \subseteq [n]$ subject to the following constraints.
\begin{enumerate}
\item $I_1, \hdots, I_L$ are disjoint.
\item $|I_j| \le \lfloor \frac{Rn}{L+1-\ell}\rfloor + 3$ for all $j \in [L]$.
\item $I_1 \cup \cdots \cup I_L$ is as large as possible.
\end{enumerate}
In particular, if $(\lfloor \frac{Rn}{L+1-\ell}\rfloor + 3) \cdot L \le n$, then $|I_j| =  \lfloor \frac{Rn}{L+1-\ell}\rfloor + 3$ for all $j \in [L]$, and otherwise $I_1 \cup \cdots \cup I_L = [n]$. We also let $I_0 := [n] \setminus (I_1 \cup \cdots \cup I_L)$. For all $j \in \{2, \hdots, L+1\}$, we define
\[
    J_j := I_{j-1} \cup I_{j\!\!\!\mod L} \cup I_{j+1\!\!\!\mod L} \cup \cdots \cup I_{j+L-\ell-1\!\!\!\mod L}.
\]
That is, $J_j$ is a union of $L+1-\ell$ sets. By \cref{claim:F-bound}, we have that
\[
    \left|\bigcup_{j=2}^{L+1} \mathcal F_{J_j, L}\right| \le L^2 \max_{j \in \{2, \hdots, L+1\}} q^{|J_j|} \le L^2 q^{(L+1-\ell) (\lfloor \frac{Rn}{L+1-\ell}\rfloor + 3)} \le L^2 q^{Rn + 3(L+1-\ell)} < q^{Rn + 5L} \le |C|.
\]
Therefore, we can find $c_1 \in C$ which is not contained in $\mathcal F_{J_j,L+1}$ for any $j \in \{2, \hdots, L+1\}$. With this choice of $c_1$, observe that one can greedily pick $c_2 \in \mathcal N_{J_2}(c_1), \hdots, c_{L+1} \in \mathcal N_{J_{L+1}}(c_1)$ which are distinct. The reason is that by choice of $c_1$, we have that $|N_{J_j}(c_1)| \ge L+1$ for all $j \in \{2, \hdots, L+1\}$, so each $c_j$ can be chosen distinctly from the others.

Observe that for any $j \in [L]$ there are exactly $L+1-\ell$ choices of $j' \in \{2, \hdots, L+1\}$ for which $I_J \subseteq J_{J'}$. In particular, this means that for any index $i \in I_1 \cup \cdots \cup I_L$, we have that some $L+1-\ell$ of $c_{1,i}, \hdots, c_{L+1,i}$ are identical. Thus, for all $i \in I_1 \cup \cdots \cup I_L$, we can define $S_i := \{c_{1,i}, \hdots, c_{L+1,i}\}$ and know that $|S_i| \le \ell$.

If $I_0 = \emptyset$, then our choice of $c_1, \hdots, c_{L+1}$ already certifies that $C$ is not $(0, L+1, \ell)$-list recoverable and thus cannot be $(\rho, L+1, \ell)$-list recoverable. Now assume that $I_0 \neq \emptyset$ so $|I_j| =  \lfloor \frac{Rn}{L+1-\ell}\rfloor + 3$ for all $j \in [L]$. For all $i_0 \in I_0$, define $S_{i_0}$ to be an arbitrary subset of $[L+1]$ of size $\ell$ such that for all $i \in [L+1]$, we have the balancing condition that
\[
    \sum_{j \in I_0} \mathbf{1}[c_{i,j} \in S_j] \ge \left \lfloor\frac{|I_0| \ell}{L+1}\right \rfloor.
\]
\allowdisplaybreaks
In particular, we have that for all $i \in [L+1]$ that
\begin{align*}
    \sum_{j=1}^n \mathbf{1}[c_{i,j} \in S_j] &\ge n - |I_0| + \left \lfloor\frac{|I_0| \ell}{L+1}\right \rfloor\\
    &\ge n - 1- |I_0| \cdot \frac{L+1-\ell}{L+1} \\
    &= n - 1 - \left(n - L \left(\left\lfloor \frac{Rn}{L+1-\ell}\right\rfloor + 3 \right)\right) \cdot \frac{L+1-\ell}{L+1}\\
    &\ge  n - 1 - \left(n - L \left(\frac{Rn}{L+1-\ell} + 2\right)\right) \cdot \frac{L+1-\ell}{L+1}\\
    &= \frac{\ell}{L+1} \cdot n + \frac{L}{L+1} \cdot Rn + 2L \cdot \frac{L+1-\ell}{L+1}-1\\
    & \ge \left(1 - \left[\frac{L+1-\ell}{L+1} - \frac{L}{L+1} \cdot R\right]\right) \cdot n\\
    & \ge (1 - \rho) n,
\end{align*}
which proves \cref{eq:GSB-goal}. Thus, $C$ is not $(\rho, L+1, \ell)$-list recoverable.
\end{proof}

\begin{remark}
The proof has similarities to many other GSB proofs in the literature~\cite{st20,GST24,CZ25,LMS25,LS25}, particularly in the use of the pigeonhole principle\footnote{Or in the case of \cite{CZ25,LMS25,LS25} a dimension-counting argument that could be viewed as a linear-algebraic variant of pigeonhole} to identify suitable codewords $c_1, \hdots, c_{L+1}$. In comparison to \cite{GST24}, our use of pigeonhole is superior in the sense that we impose the information-theoretic maximum possible number of constraints on $c_1, \hdots, c_{L+1}$.
\end{remark}

\subsection{Tightness of List Recovery Generalized Singleton Bound}

We now prove that \cref{thm:LR-GSB} is essentially tight by showing that codes exist (over sufficiently large alphabets) which match \cref{eq:LR-GSB}. To begin, we use the probabilistic method to show there exists a code $C$ with a much more general property.

\begin{theorem}\label{thm:LR-GSB-constr}
For any $n > k \ge 1$ and $q = |\Sigma| > L_0 \ge 1$, there exists $C \subseteq \Sigma^n$ of size $q^k$ such that for all $L \in [L_0]$ and distinct $c_1, \hdots, c_{L+1} \in C$ we have that
\begin{align}
\wt(c_1, \hdots, c_{L+1}) \le Lk + \frac{(L+1)\log(L+1)}{\log q} (n+3).\label{eq:GSB-UB}
\end{align}
\end{theorem}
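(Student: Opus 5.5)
We will use the probabilistic method with alteration, following the sketch in the technical overview. Sample $M = 2q^k$ codewords $x_1,\dots,x_M \in \Sigma^n$ independently and uniformly at random. For each $L \in [L_0]$ set the threshold
\[
t_L := Lk + \frac{(L+1)\log(L+1)}{\log q}(n+3).
\]
Call an $(L+1)$-tuple of distinct indices \emph{bad} if the corresponding words satisfy $\wt(x_{i_1},\dots,x_{i_{L+1}}) > t_L$. The plan is to show that the expected number of bad tuples, summed over all $L\in[L_0]$, is at most $q^k$. We then delete one index from each bad tuple, and also delete duplicate words. The surviving words are distinct, have no bad tuple for any $L$, and number at least $q^k$ in expectation. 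Fixing an outcome where at least $q^k$ survive and taking any $q^k$ of them proves the theorem, since \cref{eq:GSB-UB} is preserved under taking subsets.

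The key estimate is a bound on the probability that a fixed tuple of independent uniform words $(x_1,\dots,x_{L+1})$ certifies a given agreement hypergraph $H=(P_1,\dots,P_n)\in\mathcal P_{L+1}^n$. Coordinates are independent. At coordinate $i$, requiring that $x_{j,i}$ be constant on each part of $P_i$ has probability exactly $q^{-((L+1)-|P_i|)} = q^{-\wt(P_i)}$. So the probability of certifying $H$ is $q^{-\wt(H)}$.

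If $\wt(x_1,\dots,x_{L+1})>t_L$, then the actual agreement hypergraph $H$ is certified and has weight $>t_L$. A union bound over all such $H$ gives
\[
\Pr[\text{bad}] \le |\mathcal P_{L+1}|^n q^{-t_L} \le (L+1)!^n q^{-t_L} \le (L+1)^{(L+1)n} q^{-t_L}.
\]
Here we used the proposition $|\mathcal P_S|\le |S|!$. Writing $(L+1)^{(L+1)n} = q^{(L+1)n\log(L+1)/\log q}$, this factor is absorbed by the slack term in $t_L$. We get $\Pr[\text{bad}]\le q^{-Lk - 3(L+1)\log(L+1)/\log q}$, which is of the form $q^{-Lk}(L+1)^{-3(L+1)}$.

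The expected number of bad $(L+1)$-tuples is then at most
\[
M^{L+1}\,q^{-Lk}(L+1)^{-3(L+1)} = 2^{L+1} q^{k}(L+1)^{-3(L+1)}.
\]
Summing over $L\ge1$, the total is at most $q^k\sum_{L\ge1}\bigl(2/(L+1)^3\bigr)^{L+1}$, which is a small constant times $q^k$; the $L=1$ term is $(1/4)^2 = 1/16$.

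Duplicates also need handling. The expected number of colliding pairs is $\binom{M}{2}q^{-n} \le 2q^{2k-n}$, which is at most $2q^{k-1}$ since $k<n$. Because $q>L_0\ge1$ forces $q\ge2$, this is at most $q^k$. This is too weak on its own, so a refinement is needed. The cleanest fix is to observe that equal words make any tuple containing them have large weight: a pair with $x_a=x_b$ contributes weight $n$ per coordinate-pair. So duplicates can instead be charged through the $L=1$ bad-pair count, since $t_1 = k + \frac{2}{\log q}(n+3) < n$ whenever $q$ is moderately large. Alternatively, one can sample without replacement, which leaves the coordinatewise probability bound unaffected up to lower-order factors.

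The main obstacle is this constant bookkeeping. The total number of deleted words must stay below $M - q^k = q^k$, and the deliberately generous $(n+3)$ slack in \cref{eq:GSB-UB} is what makes that work. If the constants fall short, the fix is to start from $M = c\,q^k$ for a slightly larger constant $c$, or to use the $+3$ slack to absorb a factor of $M^{L+1}/q^{(L+1)k}$. Everything else is routine.
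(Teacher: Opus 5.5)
Your proposal is correct and is essentially the paper's proof. Like the paper, you sample $2q^k$ uniform words, bound the probability of a bad $(L+1)$-set by $((L+1)!)^n q^{-t_L}$ via a union bound over agreement hypergraphs, get at most $2^{L+1}(L+1)^{-3(L+1)}q^k$ expected bad sets for each $L$, and delete one word per bad set. On distinctness, the paper uses your charging-through-bad-pairs idea exactly when $t_1<n$, which is the right condition (it depends on $k$ versus $n$, not on $q$ being ``moderately large''); it handles $t_1\ge n$ by noting $t_L\ge Lt_1\ge Ln\ge\wt(c_1,\hdots,c_{L+1})$, so the bound is vacuous there, though your sampling-without-replacement fallback also closes that case at a harmless constant-factor cost (at most $e$).
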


The proof proceeds by using the probabilistic method~\cite{AS08}, with some technical similarities to recent result for list-decoding and local properties \cite{AGL25,CZ25,LMS25} (particularly \cite[Proposition I.4]{AGL25}). We make crucial use of the following lemma.

\begin{lemma}\label{lem:prob-bound}
Let $c_1, \hdots, c_{L+1} \sim \Sigma^n$ be sampled uniformly at random. For any $t \ge 0$, with probability at most $((L+1)!)^n q^{-t}$, we have that $\wt(c_1, \hdots, c_{L+1}) \ge t.$
\end{lemma}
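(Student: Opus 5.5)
Our plan is a union bound over agreement hypergraphs. If $\wt(c_1,\hdots,c_{L+1}) \ge t$, then the agreement hypergraph $H := \mathcal H(c_1,\hdots,c_{L+1}) = (P_1,\hdots,P_n) \in \mathcal P_{L+1}^n$ satisfies $\wt(H) \ge t$. Moreover, $(c_1,\hdots,c_{L+1})$ certifies $H$, since $\mathcal H(c_1,\hdots,c_{L+1}) \preceq H$ trivially. Hence
\[
\Pr[\wt(c_1,\hdots,c_{L+1}) \ge t] \le \sum_{H \in \mathcal P_{L+1}^n,\ \wt(H)\ge t} \Pr[(c_1,\hdots,c_{L+1}) \text{ certifies } H].
\]
By the earlier proposition, $|\mathcal P_{L+1}| \le (L+1)!$, so there are at most $((L+1)!)^n$ terms. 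It therefore suffices to show that for every fixed $H$, the probability that the random tuple certifies $H$ is at most $q^{-\wt(H)} \le q^{-t}$.

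For a fixed $H = (P_1,\hdots,P_n)$ we use independence across coordinates. The symbols $(c_{1,i},\hdots,c_{L+1,i})$ for different $i$ are independent and each uniform on $\Sigma^{L+1}$. So the certification probability factors as $\prod_{i=1}^n \Pr[\text{the } i\text{-th column respects } P_i]$. Here ``respects $P_i$'' means that $j \sim_{P_i} j'$ implies $c_{j,i} = c_{j',i}$.

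We bound a single factor by counting. The number of tuples in $\Sigma^{L+1}$ that are constant on each block of $P_i$ is exactly $q^{|P_i|}$, since we choose one symbol per block freely. The factor is therefore $q^{|P_i|}/q^{L+1} = q^{-\wt(P_i)}$, and multiplying over $i$ gives $q^{-\wt(H)}$.

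The only subtle point is the direction of the refinement. The event ``$\mathcal H(c) \preceq H$'' is the event that every equality forced by $H$ holds, with coarser equalities allowed. This is exactly the constancy-on-blocks condition we counted, so the bound is clean. There is no real obstacle beyond this; the argument is a direct first-moment computation.
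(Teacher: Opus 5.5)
Your argument is correct and is essentially the paper's proof. Both use a union bound over agreement hypergraphs of weight at least $t$, independence across coordinates, the per-coordinate factor $q^{|P_i|-(L+1)}=q^{-\wt(P_i)}$, and the count $|\mathcal P_{L+1}|\le (L+1)!$. The only structural difference is in the events summed over. The paper sums $\Pr[\mathcal H(c_1,\dots,c_{L+1})=\bar P]$ over the disjoint exact-pattern events and bounds each one via block representatives. You instead bound the larger event that every equality of $H$ holds, and you count it exactly as $q^{|P_i|}/q^{L+1}$.

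One correction concerns the direction of $\preceq$. Under the paper's convention ($P\preceq Q$ iff $i\sim_P j\Rightarrow i\sim_Q j$), the event you actually compute is $H\preceq\mathcal H(c_1,\dots,c_{L+1})$, not $\mathcal H(c_1,\dots,c_{L+1})\preceq H$. Read literally, the latter makes your intermediate claim false. For instance, if every $P_i$ is the single block $[L+1]$, every tuple satisfies $\mathcal H(c)\preceq H$, yet $q^{-\wt(H)}=q^{-nL}$. So state the union bound directly in terms of your ``respects'' event. This costs nothing, since every tuple respects its own agreement hypergraph.
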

\begin{proof}
Recall that $\mathcal H(c_1, \hdots, c_{L+1}) \subseteq \mathcal P_{L+1}^n$ is the agreement hypergraph induced by the codewords $c_1, \hdots, c_{L+1}$. Observe that
\begin{align*}
\wt(c_1, \hdots, c_{L+1}) &= \sum_{i=1}^n \left(L+1 - |\{c_{1,i}, \hdots, c_{L+1,i}\}|\right)\\
    &= \sum_{i=1}^n \left(L+1 - |\mathcal H_i(c_1, \hdots, c_{L+1})|\right)\\
    &= \wt(\mathcal H(c_1, \hdots, c_{L+1}).
\end{align*}
Now fix an agreement hypergraph $\bar{P} = (P_1, \hdots, P_n) \in \mathcal P_{L+1}^n$. We claim that for a uniformly random choice of $c_1, \hdots, c_{L+1} \sim \Sigma^n$ that $\Pr[\mathcal H(c_1, \hdots, c_{L+1}) = \bar{P}] \le q^{-\wt(\bar{P})}$. To see why, by independence of the sampled coordinates it suffices to prove for each $i \in [n]$ that $\Pr[\mathcal H(c_1, \hdots, c_{L+1})_i = P_i] \le q^{|P_i|-(L+1)}$. Let $S_1, \hdots, S_{|P_i|}$ be the constituent sets of $P_i$ and let $s_1 \in S_1, \hdots, s_{|P_i|} \in S_{|P_i|}$ be arbitrary representatives. For every $j \in [L+1] \setminus \{s_1, \hdots, s_{|P_i|}\}|$, we require that $c_{j,i} = c_{s_{j^*},i}$ where $j^*$ is chosen such that $j \in S_{j^*}$. This is a list of $L+1 - |P_i|$ independent events that must hold and each event happens with probability $1/q$. Therefore, $\Pr[\mathcal H(c_1, \hdots, c_{L+1})_i = P_i] \le q^{|P_i|-(L+1)}$, so $\Pr[\mathcal H(c_1, \hdots, c_{L+1}) = \bar{P}] \le q^{-\wt(\bar{P})}$.

To finish, we observe that for uniformly random $c_1, \hdots, c_{L+1} \in \Sigma$ we have that
\begin{align*}
    \Pr[\wt(c_1, \hdots, c_{L+1}) \ge t] &= \sum_{\substack{\bar{P} \in \mathcal P_{L+1}^n\\\wt(\bar{P}) \ge t}} \Pr[\mathcal H(c_1, \hdots, c_{L+1}) = \bar{P}]\\
    &\le \sum_{\substack{\bar{P} \in \mathcal P_{L+1}^n\\\wt(\bar{P}) \ge t}} q^{-\wt(\bar{P})}\\
    &\le |P_{L+1}^n| q^{-t} = ((L+1)!)^n q^{-t}.
\end{align*}
From this bound, the lemma follows.
\end{proof}

We now proceed to prove \cref{thm:LR-GSB-constr}.

\begin{proof}[Proof of \cref{thm:LR-GSB-constr}]
For any $L \in [L_0]$ define
\[
t_L = Lk + \frac{(L+1)\log(L+1)}{\log q} (n+3).
\]
Consider the case in which $t_1 \ge n$. Observe then for all $L \in L_0$ that
\begin{align*}
    t_L - L t_1 &= Lk + \frac{(L+1)\log(L+1)}{\log q} (n+3) - L(k + \frac{2 \log 2}{\log q} (n+3))\\
    &= \frac{(L+1)\log (L+1) - 2L \log 2}{\log q} (n+3) \ge 0,
\end{align*}
where the inequality follows from that fact that $(L+1)^{L+1} \ge 2^{2L}$ for all $L \ge 1$. Therefore, $t_L \ge Ln$ for all $L \in [L_0]$. This implies that the conditions \cref{eq:GSB-UB} are all vacuous, so any $C \subseteq \Sigma^n$ of size $q^k$ suffices.

Hence, we now assume that $t_1 < n$. Sample codewords $c_{1}, \hdots, c_{2q^{k}} \sim \Sigma^n$ uniformly at random. 
and define the random variable $X_L$ to be the set of all sequences $\{i_1, \hdots, i_{L+1}\} \in \binom{[2q^k]}{L+1}$ with $\wt(c_{i_1}, \hdots, c_{i_{L+1}}) \ge t_L$. By \Cref{lem:prob-bound} the probability that any individual $(i_1, \hdots, i_{L+1})$ is an element of $X_L$ is at most $((L+1)!)^n q^{-t_L}$. As such,
\begin{align*}
    \mathbb E[|X_L|] &\le \binom{2q^k}{L+1} ((L+1)!)^n q^{-t}\\
      &\le (2q^k)^{L+1} (L+1)^{n(L+1)} q^{-t}\\
      &= q^{(L+1)k - t_L} \cdot 2^{L+1}(L+1)^{n(L+1)}\\
      &= q^k \cdot (L+1)^{-(L+1)(n+3)} \cdot 2^{L+1}(L+1)^{n(L+1)}\\
      &= q^k (L+1)^{-3(L+1)} \cdot 2^{L+1}\\
      &\le q^k 2^{-2(L+1)}.
\end{align*}
Thus,
\begin{align*}
    \mathbb E\left[\sum_{L=1}^{L_0} |X_L|\right] &\le \sum_{L=1}^{L_0} q^k 2^{-2(L+1)}
    \le q^k \sum_{L=1}^{L_0} 2^{-2(L+1)}
    \le q^k.
\end{align*}
Hence, there exists $c_{1}, \hdots, c_{2q^{k}} \in \Sigma^n$ such that $\sum_{L=1}^{L_0} |X_L| \le q^k$. For each $L \in [L_0]$ and $(i_1, \hdots, i_{L_0}) \in X_L$, delete $c_{i_1}$ from the list $c_{1}, \hdots, c_{2q^{k}}$. After doing this, we may pick a subsequence $c'_1, \hdots, c'_{q^k} \in \Sigma^n$ such that for any $L \in [L_0]$ and any distinct $i_1, \hdots, i_{L+1} \in [q^k]$ we have that $\wt(c'_{i_1}, \hdots, c'_{L+1}) \le t_L$.

To finish, we need to check that $c'_{1}, \hdots, c'_{q^k}$ are distinct. Recall we assume that $t_1 < n$. Thus, for any $i, j \in [q^k]$, we have that $\wt(c'_{i}, c'_j) < n$. This can only occur if $c'_i \neq c'_j$ for all $i, j \in [q^k].$ Therefore, $C = \{c'_1, \hdots, c'_{q^k}\}$ is our desired code of size $q^k$.
\end{proof}

In fact, the proof of \cref{thm:LR-GSB-constr} also shows that the following stronger condition is true: 
\begin{corollary}\label{cor:LR-GSB-constr}
For any $n > k \ge 1$ and $q = |\Sigma| > L_0 \ge 1$, a random code $C \subseteq \Sigma^n$ of size $2q^k$ contains (with probability $\geq 7/8$) a subcode $C' \subseteq C$ of size $q^k$ such that for all $L \in [L_0]$ and distinct $c_1, \hdots, c_{L+1} \in C'$ we have that
\begin{align}
\wt(c_1, \hdots, c_{L+1}) \le Lk + \frac{(L+1)\log(L+1)}{\log q} (n+3).
\end{align}
\end{corollary}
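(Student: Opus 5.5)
The plan is to rerun the argument in the proof of \cref{thm:LR-GSB-constr}, but replace the final existential step with Markov's inequality. First, the degenerate case $t_1 \ge n$ is handled exactly as before. There, every constraint \cref{eq:GSB-UB} is vacuous, since $t_L \ge Ln$ and $\wt(c_1,\hdots,c_{L+1}) \le Ln$ always. So any subcode of size $q^k$ works. If the random code happens to have repeated samples, we treat it as a multiset or list of $2q^k$ samples, exactly as the proof does. Hence the event holds with probability $1$ in this case.

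In the main case $t_1 < n$, sample $c_1,\hdots,c_{2q^k}$ uniformly and define $X_L$ as in the proof. The key change is to sharpen the expectation bound: the computation there already gives $\mathbb E[|X_L|] \le q^k (L+1)^{-3(L+1)} 2^{L+1}$, not merely $q^k 2^{-2(L+1)}$. Summing over $L \ge 1$, the $L=1$ term is $2^{-6}\cdot 4 = 1/16$, and later terms decay very rapidly; for example, the $L=2$ term is $3^{-9}\cdot 8$. Thus
\[
\mathbb E\left[\sum_{L=1}^{L_0} |X_L|\right] \le \frac{q^k}{8}.
\]
If needed, this can also be obtained from the cruder bound $\sum_{L\ge 1} 2^{-2(L+1)} = 1/12 \le 1/8$.

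Markov's inequality then gives $\Pr\left[\sum_L |X_L| \ge q^k\right] \le 1/8$. So with probability at least $7/8$ we have $\sum_L |X_L| < q^k$, and on this event the deletion step goes through unchanged. For each bad tuple, delete one of its members. At most $q^k$ samples are removed, so at least $q^k$ survive, and we choose $q^k$ of them. No bad tuple survives, so \cref{eq:GSB-UB} holds for all distinct $(L+1)$-tuples.

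Finally, distinctness follows as before. Since $t_1 < n$, two identical surviving samples would have weight $n \ge t_1$ and so would form a bad pair, which was deleted. Hence the survivors are distinct codewords of the random code $C$, and they form the subcode $C'$.

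The only delicate point is interpretive. The ``random code of size $2q^k$'' should be read as the list of $2q^k$ independent uniform samples, so that collisions are handled by the bad-pair deletion. If a genuine set of size $2q^k$ is required instead, we would condition on the samples being distinct. The probability of any collision is at most $\binom{2q^k}{2}q^{-n}$. Absorbing this loss into the $1/8$ budget looks like it needs a slightly sharper tail estimate. One way is to use the smaller constant $1/16$ from the $L=1$ term together with the exponentially small $L\ge 2$ terms, so that the Markov failure probability plus the conditioning loss stays below $1/8$. I expect this bookkeeping, rather than any new idea, to be the main obstacle.
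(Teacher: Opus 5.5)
Your argument is correct and is essentially the paper's proof: the paper likewise bounds $\mathbb E[\sum_{L}|X_L|]\le q^k/8$ (via $\sum_{L\ge 1}2^{-2(L+1)}$), applies Markov's inequality, and reruns the deletion step on the list of $2q^k$ i.i.d.\ samples, which is exactly how you read ``random code''. One caution about your optional aside on genuine sets: conditioning on all $2q^k$ samples being distinct cannot be absorbed into the $1/8$ budget once $2k\ge n$, because then $\binom{2q^k}{2}q^{-n}>1$ and collisions are in fact likely by the birthday bound; the right fix is to condition per tuple inside \cref{lem:prob-bound}, which costs only a factor $(1-\binom{L+1}{2}q^{-n})^{-1}$ (at most $4/3$ for $L=1$ and at most $2$ in general, since $q\ge L+1$ and $n\ge 2$) and keeps the expectation below $q^k/8$.
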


\begin{proof}[Proof of \cref{cor:LR-GSB-constr}.]
We use the same notation as in the proof of \cref{thm:LR-GSB-constr}. In particular, we see that 
\[
\mathbb E\left[\sum_{L=1}^{L_0} |X_L|\right] \le \sum_{L=1}^{L_0} q^k 2^{-2(L+1)}
    \le q^k \sum_{L=1}^{L_0} 2^{-2(L+1)}
    \le \frac{q^k}{8}.
\]
By a simple Markov bound, this then implies that with probability $\geq 7/8$ that $\sum_{L=1}^{L_0} |X_L| \leq q^k$. Conditioned on this being true, we can then repeat the process of \cref{thm:LR-GSB-constr}: for each $L \in [L_0]$ and $(i_1, \hdots, i_{L_0}) \in X_L$, delete $c_{i_1}$ from the list $c_{1}, \hdots, c_{2q^{k}}$. After doing this, we may pick a subsequence $c'_1, \hdots, c'_{q^k} \in \Sigma^n$ such that for any $L \in [L_0]$ and any distinct $i_1, \hdots, i_{L+1} \in [q^k]$ we have that $\wt(c'_{i_1}, \hdots, c'_{L+1}) \le t_L$.
\end{proof}

Our main application of \cref{thm:LR-GSB-constr} however is that \cref{thm:LR-GSB} is asymptotically tight as $n,q$ tend to infinity.

\begin{corollary}\label{cor:LR-GSB-tight}
For any $n \ge 2$, $q = |\Sigma| > L \ge \ell \ge 1$, and $R \in [0, 1]$, pick $\rho \in [0, 1]$ such that
\[
    \rho \le \frac{L+1-\ell}{L+1} - \frac{L}{L+1} \cdot R.
\]
Let \[
k = \left\lfloor Rn - \frac{(L+1)\log(L+1)}{\log q}(n+3)-1\right\rfloor
\]
and assume $k\ge 1$, then there exists $C \subseteq \Sigma^n$ of size $q^{k}$ such that $C$ is $(\rho, \ell, L)$ list recoverable.
\end{corollary}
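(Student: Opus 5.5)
We will apply \cref{thm:LR-GSB-constr} with $L_0 = L$ and the stated $k$. This is legitimate since $n > k \ge 1$ (because $k \le Rn - 1 < n$) and $q > L$. It yields a code $C \subseteq \Sigma^n$ of size $q^k$ such that every $L+1$ distinct codewords satisfy
\[
\wt(c_1,\hdots,c_{L+1}) \le Lk + \frac{(L+1)\log(L+1)}{\log q}(n+3).
\]
By the choice of $k$, we have $k \le Rn - \frac{(L+1)\log(L+1)}{\log q}(n+3) - 1$. Hence
\[
\wt(c_1,\hdots,c_{L+1}) \le LRn - (L-1)\frac{(L+1)\log(L+1)}{\log q}(n+3) - L < LRn .
\]
So every $(L+1)$-tuple of distinct codewords has collision weight strictly less than $LRn$.

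The next step is the incidence-counting inequality sketched in the technical overview. Fix lists $S_1,\hdots,S_n$ with $|S_i|\le \ell$ and distinct codewords $c_1,\hdots,c_{L+1}$. At each coordinate $i$, the number of $j$ with $c_{j,i}\in S_i$ is at most $\ell + (L+1 - |\{c_{1,i},\hdots,c_{L+1,i}\}|)$. The reason is that the captured codewords take at most $\ell$ distinct values, and each value beyond the first occurrence is counted in the collision weight. More carefully: if the captured indices are grouped by symbol into $g \le \ell$ classes, the count equals $g$ plus the sum of (class size $-1$), and that sum is at most the coordinate's weight $L+1-|P_i|$. Summing over $i$ gives
\[
\sum_{j=1}^{L+1} |\{i : c_{j,i}\in S_i\}| \le \ell n + \wt(c_1,\hdots,c_{L+1}).
\]

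Now suppose, for contradiction, that all $L+1$ codewords agree with the lists on at least $(1-\rho)n$ coordinates each. Then
\[
(L+1)(1-\rho)n \le \ell n + \wt < \ell n + LRn .
\]
Equivalently, $\rho > \frac{L+1-\ell}{L+1} - \frac{L}{L+1}R$, which contradicts the hypothesis on $\rho$. Hence no $L+1$ distinct codewords are captured, and $C$ is $(\rho,\ell,L)$ list recoverable.

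The only delicate point is the per-coordinate inequality, which needs a careful statement when some codeword values coincide outside $S_i$. It reduces to the observation that the number of distinct captured values is at most $\min(\ell, |P_i|)$. The remaining work is the parameter bookkeeping: the $-1$ in $k$ is what makes the weight bound strictly below $LRn$, giving strict slack.
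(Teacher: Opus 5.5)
Your proof is correct and follows essentially the same route as the paper: apply \cref{thm:LR-GSB-constr} with $L_0 = L$, use the choice of $k$ to push the collision weight strictly below $LRn$, and use the incidence bound $\sum_{j} |\{i : c_{j,i}\in S_i\}| \le \ell n + \wt(c_1,\dots,c_{L+1})$ to contradict the hypothesis on $\rho$. The only difference is cosmetic: the paper assumes a violation, derives $\wt \ge LRn$ by bounding the number of distinct symbols at a coordinate by $|S_j|$ plus the number of uncaptured codewords, and then contradicts the upper bound, whereas you run the same inequalities in the reverse order.
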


\begin{proof}
Apply \cref{thm:LR-GSB-constr} with $L_0 = L$ to construct a code $C \subseteq \Sigma^n$ of size $q^{k}$ such that for any distinct $c_1, \hdots, c_{L+1} \in C$ we have that
\begin{align}
\wt(c_1, \hdots, c_{L+1}) \le Lk + \frac{(L+1)\log(L+1)}{\log q} (n+3).\label{eq:xz}
\end{align}
Now, assume for sake of contradiction that $C$ is not $(\rho, \ell, L)$ list recoverable. Then, there exists distinct $c_1, \hdots, c_{L+1} \in C$ as well as sets $S_1, \hdots, S_\ell \subseteq \Sigma$ of size $\ell$ such that for all $i \in [L+1]$,
\[
    \sum_{j=1}^n \mathbf{1}[c_{i,j} \in S_j] \ge (1-\rho)n.   
\]
As such, we have that
\begin{align*}
    \wt(c_1, \hdots, c_{L+1}) &= \sum_{j=1}^n \left(L+1 - |\{c_{1,j}, \hdots, c_{L+1,j}\}|\right)\\
    &\ge \sum_{j=1}^n \left[L+1 - |S_j| - \sum_{i=1}^{L+1} \mathbf{1}[c_{i,j} \not\in S_j]\right]\\
    &\ge -n\ell + \sum_{i=1}^{L+1}\sum_{j=1}^n \mathbf{1}[c_{i,j} \in S_j]\\
    &\ge ((L+1)(1-\rho) - \ell)n\\
    &\ge \left((L+1)\left(1 - \frac{L+1-\ell}{L+1} + \frac{L}{L+1} \cdot R\right) - \ell\right) n\\
    &= LRn.
\end{align*}
Therefore, by \cref{eq:xz} we have that
\begin{align*}
    LRn &\le Lk + \frac{(L+1)\log(L+1)}{\log q} (n+3)\\
    &= L\left\lfloor Rn - \frac{(L+1)\log(L+1)}{\log q}(n+3) - 1\right\rfloor + \frac{(L+1)\log(L+1)}{\log q} (n+3)\\
    &\le L \left( Rn - \frac{(L+1)\log(L+1)}{\log q}(n+3) - 1\right) + \frac{(L+1)\log(L+1)}{\log q} (n+3)\\
    &\le LRn - 1,
\end{align*}
a contradiction. Thus, $C$ is indeed $(\rho, \ell, L)$ list recoverable.
\end{proof}

In particular, this means that the RHS of \cref{eq:LR-GSB} is precisely the error threshold $\rho$ for which $(\rho, \ell, L)$ list recovery is possible.

\section{AEL Lower Bound}\label{sec:ael}
\subsection{Preliminaries}
We begin by stating a few standard facts we need concerning graphs and hypergraphs.

\begin{theorem}[Generalized Kővári–Sós–Turán Theorem \cite{erdos64}]\label{thm:erdos}
Given any $r,t>0$, there exists a constant $C>0$ such that for any $r$-uniform $r$-partite hypergraph $G$ on $rn$ vertices with each part size $n$, if the hypergraph has no $K^{(r)}_{t,\dots,t}$ as a subgraph, then $G$ has at most $Cn^{r-\frac{1}{t^{r-1}}}$ hyperedges.
\end{theorem}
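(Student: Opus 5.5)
We prove the statement by induction on the uniformity $r$. The base case $r=1$ is trivial: a $1$-uniform hypergraph with no $K^{(1)}_{t}$ has fewer than $t$ edges, which is at most $C n^{1-1}$ for $C=t$. The case $r=2$ is the classical Kővári–Sós–Turán bound $O(n^{2-1/t})$. Its proof is the template for the inductive step. We count pairs $(v,T)$ where $v$ is a left vertex, $T$ is a $t$-subset of right vertices, and $T \subseteq N(v)$. From the vertex side this count is $\sum_v \binom{\deg v}{t}$, which by convexity is at least $n\binom{|E|/n}{t}$. From the $T$ side it is at most $(t-1)\binom{n}{t}$, since otherwise some $T$ would have $t$ common neighbours and we would get a $K_{t,t}$. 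Comparing the two bounds gives $|E| \le C n^{2-1/t}$.

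For general $r$ we proceed in three steps.

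\textbf{Step 1 (defining link hypergraphs).} Let $V_1,\dots,V_r$ be the parts. For each $t$-subset $T\subseteq V_r$, define the $(r-1)$-uniform $(r-1)$-partite hypergraph
\[
G_T=\{ (v_1,\dots,v_{r-1}) : (v_1,\dots,v_{r-1},x)\in G \text{ for all } x\in T\}.
\]
If some $G_T$ contains $K^{(r-1)}_{t,\dots,t}$, then together with $T$ this gives a $K^{(r)}_{t,\dots,t}$ in $G$. So every $G_T$ is $K^{(r-1)}_{t,\dots,t}$-free, and the inductive hypothesis gives $|G_T|\le C_{r-1} n^{(r-1)-1/t^{r-2}}$.

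\textbf{Step 2 (double counting).} For each $(r-1)$-tuple $\sigma\in V_1\times\dots\times V_{r-1}$, let $d(\sigma)$ be the number of $x\in V_r$ with $\sigma\cup\{x\}\in G$. Then
\[
\sum_\sigma \binom{d(\sigma)}{t} = \sum_T |G_T| \le \binom{n}{t} C_{r-1} n^{r-1-1/t^{r-2}}.
\]
By Jensen's inequality, with $N=n^{r-1}$ tuples and $\sum_\sigma d(\sigma)=|E|$, the left side is at least $N\binom{|E|/N}{t}$. This gives roughly $N (|E|/N)^t/t! \lesssim n^t \cdot n^{r-1-1/t^{r-2}}$. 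Hence $(|E|/n^{r-1})^t \lesssim n^{t-1/t^{r-2}}$, so $|E|\lesssim n^{r-1}\cdot n^{1-1/t^{r-1}} = n^{r-1/t^{r-1}}$, which is the claimed exponent.

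\textbf{Step 3 (the main obstacle).} The delicate part is applying Jensen's inequality to $\binom{x}{t}$ when the average degree $|E|/N$ may be smaller than $t$, where the function is not convex in the needed sense. We handle this with the standard case split. If $|E| \le 2t\, n^{r-1}$, the bound already holds trivially, since $n^{r-1}\le n^{r-1/t^{r-1}}$. Otherwise we use the convex extension of $\binom{x}{t}$ (set to $0$ for $x<t$) together with $\binom{x}{t}\ge (x/2)^t/t!$ for $x\ge 2t$. The resulting constant $C=C(r,t)$ absorbs $t!$, $2^t$ and $C_{r-1}$ after taking a $t$-th root, which closes the induction.
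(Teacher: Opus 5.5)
The paper gives no proof of this statement; it cites the result from Erd\H{o}s (1964) as a black box. The only property it uses is that the constant depends on $(r,t)$ alone, since it applies the theorem with $r=m$, $t=\ell$ and part size $q_{\mathsf{out}}^{sn}$. Your argument is correct and is the standard proof, essentially Erd\H{o}s's own: induction on $r$ through the link hypergraphs $G_T$, the double count $\sum_\sigma\binom{d(\sigma)}{t}=\sum_T|G_T|$, and Jensen. It yields a constant depending only on $(r,t)$, namely $C_r=\max\{2t,\,2C_{r-1}^{1/t}\}$ with $C_1=t$.

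One small slip in Step 3 needs fixing. The convex extension of $\binom{x}{t}$ must vanish for $x\le t-1$, not for $x<t$. With your cutoff the function jumps from $0$ to $\binom{t}{t}=1$ at $x=t$, so it is not convex. The polynomial $x(x-1)\cdots(x-t+1)$ has all its roots in $[0,t-1]$, so it is nonnegative, increasing and convex on $[t-1,\infty)$. Extending it by $0$ on $(-\infty,t-1]$ therefore gives a convex function that agrees with $\binom{d}{t}$ at every nonnegative integer $d$. That is all Jensen needs, and the rest of your computation goes through unchanged.
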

\begin{definition}[Spectral Expansion]
Let $G=([n]\cup[n],E)$ be a $d$-regular bipartite graph, and $\lambda$ be the second largest eigenvalue of its adjacency matrix, then for any $\lambda'\ge \lambda/d$, we say $G$ is a $\lambda'$-spectral expander.
\end{definition}
\begin{theorem}[Expander Mixing Lemma]\label{lem:expander}
Let $G=([n]\cup[n],E)$ be a $d$-regular $\lambda$-spectral expander. For any $S,T\subseteq [n]$. Let $E(S,T)=E
\cap (S\times T)$, there is
\[
\left||E(S,T)|-\frac{d|S||T|}{n}\right|\leq \lambda dn.
\]

\end{theorem}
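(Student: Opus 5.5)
The plan is the standard spectral proof, phrased through the $n\times n$ biadjacency matrix. Let $B\in\{0,1\}^{n\times n}$ be the biadjacency matrix of $G$, so that $B_{u,v}$ counts edges between left vertex $u$ and right vertex $v$. The adjacency matrix of $G$ is then
\[
A=\begin{pmatrix}0&B\\B^\top&0\end{pmatrix}.
\]
For indicator vectors $\mathbf 1_S,\mathbf 1_T\in\mathbb R^n$ we have $|E(S,T)|=\mathbf 1_S^\top B\,\mathbf 1_T$.

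\emph{Spectrum of $A$.} The eigenvalues of $A$ are exactly $\pm\sigma_1,\dots,\pm\sigma_n$, where $\sigma_1\ge\sigma_2\ge\cdots$ are the singular values of $B$. Indeed, if $Bx=\sigma y$ and $B^\top y=\sigma x$, then $(x,\pm y)$ is an eigenvector of $A$ with eigenvalue $\pm\sigma$. By $d$-regularity, $B\mathbf 1=B^\top\mathbf 1=d\mathbf 1$, so $\sigma_1=d$, with left and right singular vectors $\mathbf 1/\sqrt n$. The second largest eigenvalue of $A$ is therefore $\sigma_2(B)$, and the hypothesis gives $\sigma_2(B)\le\lambda d$, where $\lambda$ is the normalized spectral bound from the definition.

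\emph{Decomposition.} Write $\mathbf 1_S=\frac{|S|}{n}\mathbf 1+s'$ and $\mathbf 1_T=\frac{|T|}{n}\mathbf 1+t'$ with $s',t'\perp\mathbf 1$. Expanding $\mathbf 1_S^\top B\,\mathbf 1_T$ gives four terms:
\begin{itemize}
\item The main term is $\frac{|S||T|}{n^2}\mathbf 1^\top B\mathbf 1=\frac{d|S||T|}{n}$.
\item The two cross terms vanish, since $\mathbf 1^\top B t'=d\,\mathbf 1^\top t'=0$ and $s'^\top B\mathbf 1=d\,s'^\top\mathbf 1=0$.
\end{itemize}
It remains to bound $|s'^\top B t'|$.

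\emph{Bounding the remainder.} The vector $t'$ lies in the orthogonal complement of the top right singular vector of $B$, so $\|Bt'\|\le\sigma_2\|t'\|$. Hence
\[
|s'^\top Bt'|\le \sigma_2\|s'\|\|t'\|\le \lambda d\sqrt{|S||T|}\le\lambda d n,
\]
using $\|s'\|\le\|\mathbf 1_S\|=\sqrt{|S|}$, the analogous bound for $t'$, and $|S|,|T|\le n$. This gives the claimed inequality.

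The only step needing care is identifying ``second largest eigenvalue of the adjacency matrix'' with $\sigma_2(B)$. This uses the fact that the spectrum of a bipartite graph is symmetric about $0$, together with the singular-value correspondence above, so that $-d$ is not mistakenly counted as a second eigenvalue. Everything else is routine linear algebra. Note that the lemma as stated is weaker than the usual form, which has $\lambda d\sqrt{|S||T|}$ on the right-hand side; the final step simply relaxes that to $\lambda d n$.
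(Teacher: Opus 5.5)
Your proof is correct. The paper gives no proof of this lemma and simply quotes it as the standard expander mixing lemma. Your argument is the textbook spectral proof, and it actually yields the sharper bound $\lambda d\sqrt{|S||T|}$ before you relax it to $\lambda d n$.

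Two small touch-ups are worth making.

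First, with $A$ acting on (left, right) coordinates and the convention $Bx=\sigma y$, $B^\top y=\sigma x$, the eigenvectors of $A$ are $(y,\pm x)$, not $(x,\pm y)$. The argument itself only needs $\sigma_2(B)\le\lambda_2(A)$, which the correctly ordered vectors give.

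Second, to know that $\mathbf 1/\sqrt n$ is the \emph{top} singular pair, you need $\|B\|\le d$; regularity alone only shows $d$ is some singular value. This follows, for example, from $\|B\|_2\le\sqrt{\|B\|_1\|B\|_\infty}=d$, since all row and column sums equal $d$.
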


\paragraph{Inner-encoder insensitive AEL Framework.}

We now present the basic components of the AEL framework.

\begin{itemize}

\item Ingredients: An outer code $C_{\out}\subseteq \Sigma_{\out}^{n}$ with rate $R_{\out}$. $n$ inner codes $C_{\mathsf{in},1},\dots,C_{\mathsf{in,n}}\subseteq \Sigma_{\mathsf{in}}^D$, each with rate $R_{\mathsf{in}}$ and size $|\Sigma_{\out}|$. A 
$D$-regular $\lambda$-spectral expander $G=([n]\cup[n],E)$. 
\item Construction: Pick \emph{arbitrary} inner encoding maps $\mathsf{Enc}_{\mathsf{in},i}\colon \Sigma_{\mathsf{out}}\to \Sigma_{\mathsf{in}}^D$. We construct $C_{\mathsf{AEL}}\subseteq (\Sigma_{\mathsf{in}}^D)^n$ as follows:

For each $c_{\mathsf{out}}\in C_{\mathsf{out}}$, there is a codeword $c\in C_{\mathsf{AEL}}$ defined as follows: For each edge $e=(u,v)\in E$ which is the $e_u$-th edge connected to $u$, let the symbol on this edge to be $c_e=\mathrm{Enc}_{\mathsf{in},u}(c_{\mathsf{out}}[u])[e_u]$. Then, for each right vertex $v\in[n]$, let $e_{v,1},\dots,e_{v,D}\in E$ denote edges connected to $v$, we define $c$ as
\[
c=((c_{e_{1,1}},\dots,c_{e_{1,D}}),\dots,(c_{e_{n,1}},\dots,c_{e_{n,D}}))\in (\Sigma_{\mathsf{in}}^D)^n.
\]
\end{itemize}

\subsection{Output List Size Lower Bound}

\begin{theorem}\label{thm:ael-main}
Let $m,\ell\ge 2$ be positive constant integers, and $\eps,R_{\mathsf{in}},R_{\mathsf{out}}$ be constants. There exist constants $C_0>0$ and $\lambda>0$ such that for any inner-encoder insensitive AEL framework parameterized by  \[(n,C_{\out},R_{\mathsf{out}},C_{\mathsf{in},1},\dots,C_{\mathsf{in},n},R_{\mathsf{in}},D,\lambda,G)\] if at least one of the two following conditions holds:
\begin{itemize}
\item $R_{\mathsf{out}}\ge1-\frac{1}{2m\ell^{m-1}}$.
\item $\mathcal{P}_{\mathsf{out}}$ only contains additive codes and $R_{\out}>\frac{m-1}{m}$. 
\end{itemize}
then for any large enough $n> C_0$, the prescribed AEL framework does not guarantee to construct a $(1-\frac{m(R_{\mathsf{in}}-\eps)}{m-1},\ell,\ell^m-1)$ list-recoverable code assuming $R_{\mathsf{in}}\leq\frac{m-1}{m}$. 
\end{theorem}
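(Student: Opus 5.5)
Following the technical overview, we will exhibit $\ell^m$ distinct AEL codewords $\{c_{\mathbf a}\colon \mathbf a\in\{0,\dots,\ell-1\}^m\}$ together with disjoint right-vertex sets $I_1,\dots,I_m\subseteq[n]$. We want each $I_k$ to have size roughly $\frac{(R_{\mathsf{in}}-\eps)n}{m-1}$, and on $I_k$ we want $c_{\mathbf a}$ to depend only on the coordinates $\mathbf a_{-k}$ other than $k$. Then on every coordinate of $S=\bigcup_k I_k$ the $\ell^m$ codewords take at most $\ell$ values. Since $|S|\approx\frac{m(R_{\mathsf{in}}-\eps)n}{m-1}$, taking $S_i=\{c_{\mathbf a}[i]\}$ on $S$ shows the code is not $(1-\frac{m(R_{\mathsf{in}}-\eps)}{m-1},\ell,\ell^m-1)$ list-recoverable. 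The hypothesis $R_{\mathsf{in}}\le\frac{m-1}{m}$ guarantees $|S|\le n$. Because the framework is inner-encoder insensitive, it suffices to exhibit, for the given $C_{\out}$, $C_{\mathsf{in},i}$ and $G$, one adversarial choice of encoders $\Enc_{\mathsf{in},i}$ for which this configuration appears.

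\textbf{Step 1 (left side).} Fix $I_1,\dots,I_m$. For a left vertex $u$, let $N_{u,k}$ be its edges into $I_k$. Apply the Expander Mixing Lemma (\cref{lem:expander}) to $E(T,I_k)$, where $T$ is the set of left vertices with $|N_{u,k}|\ge (R_{\mathsf{in}}-\eps/2)D$. The expected count is $(R_{\mathsf{in}}-\eps)D$, so the lemma gives $|T|\le O(\lambda/\eps)n$. Summing over $k$, all but an $O(m\lambda/\eps)$ fraction of left vertices are \emph{good}, meaning $|\bigcup_{k'\ne k}N_{u,k'}|<R_{\mathsf{in}}D$ for each $k$. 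Here we bound $\sum_{k'\neq k}|N_{u,k'}|$ by $(m-1)(\frac{R_{\mathsf{in}}-\eps}{m-1}+\text{slack})D$, so it suffices to control each $|N_{u,k}|$ up to error $\eps D/(2m)$. We then choose $\lambda$ small enough that the good set has density at least $1-\delta$, for a $\delta$ to be fixed in Step 2.

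\textbf{Step 2 (outer codewords).} Inside the good left vertices, pick disjoint $J_1,\dots,J_m$ and let $J_0$ be the rest. We seek outer codewords $c'_{\mathbf a}$ that all agree on $J_0$, where $c'_{\mathbf a}[J_k]$ depends only on $a_k$ and takes $\ell$ distinct values as $a_k$ varies. The two hypotheses are handled differently.
\begin{itemize}
\item In the additive case, take $|J_0|=(1-\frac{m}{m'})n$ with the $J_k$ slightly smaller than $n/m$, so that $|[n]\setminus J_k|<R_{\out}n$ because $R_{\out}>\frac{m-1}{m}$. By dimension counting there are nonzero codewords $v_k$ vanishing outside $J_k$. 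Set $c'_{\mathbf a}=\sum_k \alpha_{a_k}v_k$ with distinct field scalars $\alpha_0,\dots,\alpha_{\ell-1}$. For $\ell>2$ this needs $\K$-scalars; if the subfield is too small, instead take $\ell-1$ independent such vectors per block.
\item In the general case, use pigeonhole to find $q^{R_{\out}n-|J_0|}$ codewords agreeing on $J_0$. View them as hyperedges $(c'[J_1],\dots,c'[J_m])$ of an $m$-partite $m$-uniform hypergraph with parts of size $q^s$, where $|J_k|=s$. \cref{thm:erdos} with $t=\ell$ produces a $K^{(m)}_{\ell,\dots,\ell}$ once $R_{\out}n-(n-ms)>ms-s/\ell^{m-1}$, i.e.\ once $(1-R_{\out})n<s/\ell^{m-1}$. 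Taking $s\approx n/(2m)$, and recalling that the good-vertex density allows such $J_k$, this is exactly $R_{\out}\ge 1-\frac1{2m\ell^{m-1}}$, with slack absorbed by $\delta$ and by $n>C_0$.
\end{itemize}
In both cases the codewords are distinct and their $J_k$-projections take $\ell$ values.

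\textbf{Step 3 (encoders).} Choose the encoders $\Enc_{\mathsf{in},u}$.
\begin{itemize}
\item For $u\in J_0$ the symbols already coincide, so any encoder works.
\item For $u\in J_k$, the $\ell$ distinct outer symbols $x_0,\dots,x_{\ell-1}$ must map to inner codewords that agree on $\bigcup_{k'\ne k}N_{u,k'}$; on $N_{u,k}$ they are automatically consistent with dependence on $a_k$ only. Since $u$ is good, this set has size $<R_{\mathsf{in}}D$, but plain pigeonhole gives only two colliding inner codewords. So we instead need $\ell$ inner codewords sharing a projection, which requires $|C_{\mathsf{in},u}|=q_{\mathsf{in}}^{R_{\mathsf{in}}D}>(\ell-1)q_{\mathsf{in}}^{|\cup N|}$. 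This holds with an extra $\eps$-margin in the bound from Step 1 once $D$ is large enough, which is tied to small $\lambda$. We then assign $x_0,\dots,x_{\ell-1}$ to those $\ell$ inner codewords and extend bijectively.
\end{itemize}
The resulting AEL codewords are distinct because the encoders are injective and the outer codewords are distinct.

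\textbf{Main obstacle.} The hard part is the parameter bookkeeping in Step 2, Case 2. The KST exponent loss $s/\ell^{m-1}$ must dominate the redundancy $(1-R_{\out})n$, while the $J_k$ are confined to good vertices and the $I_k$ sizes must still give the claimed radius. The dependence of $\lambda$ on $\eps,m,\ell$ has to be chosen so that both the mixing-lemma error and the exceptional vertex set fit in the slack.
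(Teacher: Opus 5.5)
Your proposal is essentially the paper's proof: expander mixing to find left vertices with fewer than $R_{\mathsf{in}}D$ edges into $\bigcup_{k'\neq k}I_{k'}$, disjoint blocks $J_k$ among them, a sum of per-block codewords vanishing off $J_k$ (additive case) or pigeonhole on $J_0$ followed by the hypergraph K\H{o}v\'ari--S\'os--Tur\'an theorem (general case) to get an $\ell^m$-cube of outer codewords, and adversarial inner encoders sending the symbols at each $u\in J_k$ to $\ell$ inner codewords that share a projection. Two slips to fix: first, on $I_k$ the codewords should depend only on $a_k$, which is what your Step 3 actually arranges, since dependence on $\mathbf a_{-k}$ would allow $\ell^{m-1}$ values per coordinate when $m\ge 3$; second, in the general case $s\approx n/(2m)$ sits exactly at the threshold, so take $s$ a constant fraction larger (the paper uses $s=\frac{3n}{4m}$, which is allowed because the good set has density near $1$) to handle $R_{\out}=1-\frac{1}{2m\ell^{m-1}}$ and the KST constant.
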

\begin{proof}

We choose $s\in(0,\frac{1}{m})$ as follows:
\begin{itemize}
\item In case \(\rout\geq 1-1/(2m\ell^{m-1})\), set
\[
   s:=\frac{3}{4m}.
\]
Then
\[
   \rout-1+\frac{s}{\ell^{m-1}}
   \geq -\frac{1}{2m\ell^{m-1}}+\frac{3}{4m\ell^{m-1}}
   =\frac{1}{4m\ell^{m-1}}>0.
\]
\item In case $\mathcal{P}_{\rm out}$ only contains additive codes and \(\rout>(m-1)/m\), set
\[
   s:=\frac{1}{2}\left(1-\rout+\frac1m\right).
\]
Then \(1-\rout<s<1/m\), and hence
\[
   \rout-1+s>0.
\]
Choose constants $\eta<1-ms,\lambda\leq \min(\frac{3\eps\eta}{8m},\sqrt{\frac{\eps}{2\log_{2}{\ell}}})$.
\end{itemize}

Let $b=\frac{(R_{\mathsf{in}}-\eps)}{m-1}$, then since $R_{\mathsf{in}}\leq\frac{m-1}{m}$, we can select disjoint $I_1,\dots,I_m\subseteq [n]$, each with size $bn$. It suffices to showw that the AEL framework can construct a code $C_{\mathsf{AEL}}$ that contains distinct codewords $\{c_{d}\colon d\in [\ell]^m\}$ such that for any $i\in[m],d_1,d_2\in[\ell]^m$, there is $d_1[i]=d_2[i]\Rightarrow c_{d_1}[I_i]=c_{d_2}[I_i]$. Because if this condition holds, we know for any $j \in I_1\cup\cdots\cup I_m$, there is $|\{c_d[j]\colon d\in[\ell]^m\}|=\ell$. This would imply that the code is not $(\rho,\ell,\ell^m-1)$ list-recoverable where $\rho=1-\frac{1}{n}|I_1\cup\cdots\cup I_m|=1-mb=1-\frac{m(R_{\mathsf{in}}-\eps)}{m-1}$.

Let $I=I_1\cup \cdots\cup I_m$. For two subsets $L,R\subseteq [n]$, let $E(L,R)$ denote the set of edges between left vertices in $L$ and right vertices in $R$ in $G$. For each $i\in[m]$, we define the left vertex set $G_i$ as follows
\[
G_i=\{i\in[n]\colon |E(\{i\},I\setminus I_i)|\leq ((m-1)b+\frac{\eps}{2})D\}.
\]

By \Cref{lem:expander}, we have that
\[
   |{E([n]\setminus G_i,I\setminus I_i)}|
   \leq D\frac{|[n]\setminus G_i||I\setminus I_i|}{n}+\lambda Dn
   \leq D\left((m-1)b+\frac{\eps}{8}\right)|[n]\setminus G_i|+\lambda Dn.
\]
Also, since $ |{E([n]\setminus G_i,I\setminus I_i)}|>((m-1)b+\frac{\eps}{2})D|[n]\setminus G_i|$, we know that
\[
|[n]\setminus G_i|\leq \frac{8\lambda n}{3\eps}\leq \frac{\eta n}{m}.
\]
Therefore, we know $|G_1\cap\cdots\cap G_m|\ge (1-\eta)m$. Since $ms\leq 1-\eta$, we can choose disjoint $J_1,\dots,J_m\subseteq G_1\cap\cdots\cap G_m$, each with size $sn$ as subsets of left vertices. Let $J_0=[n]\setminus(J_1\cup\cdots\cup J_m)$. We start from find the bad list from the outer code $C_{\out}$ as follows.
\begin{lemma}\label{lem:outer-list}
There exist pairwise distinct $\{c'_{d}\colon d\in[\ell]^m\}\subseteq C_{\rm out}$ such that for any $i\in[m],d_1,d_2\in[\ell]^m$, there is $d_1[i]=d_2[i]\Rightarrow c'_{d_1}[J_i]=c'_{d_2}[J_i]$. Moreover, there is $c'_{d_1}[J_0]=c'_{d_2}[J_0]$ for any $d_1,d_2\in[\ell]^m$.
\end{lemma}
\begin{proof}
There are two cases, the first one is when $C_{\rm out}$ is additive and $R_{\rm out}-1+s>0$. Then since for each $i\in[m]$ there is $|[n]\setminus J_i|=(1-s)n$. Let $n>\frac{\log{\ell}}{R_{\out}-1+s}$ to be large enough. Since $C_{\out}$ is additive, let $C_{\out}$ be $\F_q$ linear and $m=\log_{q}|\Sigma_{\out}|$, $U\subseteq \F^{R_{\out}nm}$ denote the message space of $C_{\out}$, there must be a subspace $V\subseteq U$ with dimension at least $(R_{\out}-1+s)nm$ that contains all messages whose encodings are zero on $[n]\setminus J_i$. Therefore, there are at least $|\Sigma_{\out}|^{(R_{\rm out}-1+s)n}\ge2^{R_{\out}-1+s}\ge\ell$ distinct codewords $c\in C_{\rm out}$ such that $j\notin J_i\Rightarrow c[j]=0$. Pick $\ell$ of them to form a set $U_i=\{c''_{i,1},\dots,c''_{i,\ell}\}\subseteq C_{\rm out}$.

Now we construct the outer codeword set as
\[
c'_d=\sum^m_{i=1}c''_{i,d[i]}\colon \forall d\in[\ell]^m.
\]

Let us verify this set of $\{c'_d\colon d\in[\ell]^m\}$ satisfies the target code property: For each $i\in[m]$ and $d_1,d_2\in[\ell]^m$ such that $d_1[i]=d_2[i]$, we know that
\[
c'_{d_1}[J_i]-c'_{d_2}[J_i]=\sum^m_{t=1}c''_{t,d_1[t]}[J_i]-c''_{t,d_2[t]}[J_i]=c''_{i,d_1[i]}[J_i]-c''_{i,d_2[i]}[J_i]=0
\]
Since codewords in $U_i$ are pairwise distinct on $J_i$, 
this also tells us that if 
$d_1[i]\neq d_2[i]$, then $c'_{d_1}[J_i]\neq c'_{d_2}[J_i]$. This implies $c'_{d_1}\neq c'_{d_2}\Leftrightarrow d_1\neq d_2$, which proves that $|\{c'_d\colon d\in[\ell]^m\}|=\ell^m$.

Finally it is easy to verify that
\[
c'_{d_1}[J_0]-c'_{d_2}[J_0]=\sum^m_{t=1}c''_{t,d_1[t]}[J_0]-c''_{t,d_2[t]}[J_0]=0\colon \forall d_2,d_2\in[\ell]^m
\]

Then we prove the lemma in the second case that $R_{\rm out}-1+\frac{s}{\ell^{m-1}}>0$ and $C_{\out}$ is possibly non-additive. Let $q_{\out}=|\Sigma_{\out}|$

 Since \(C_{\rm out}\) has rate at least \(\rout\),
\[
   |{C_{\rm out}}|\geq q_{\rm out}^{\rout n}.
\]
Let the projection map be $g\colon C_{\out}\to\Sigma^{|J_0|}_{\out}$ defined by $g(c)=c[J_0]$, by pigeonhole principle we can choose the largest fiber of $g$, say
\(C'\subseteq C_{\rm out}\), such that $|g(C')|=1$ and there is
\[
   |{C'}|
   \geq \frac{|{C_{\rm out}}|}{q_{\rm out}^{|{J_0}|}}
   \geq q_{\rm out}^{(\rout-1+ms)n}.
\]
Construct an \(m\)-uniform \(m\)-partite hypergraph \(H\) whose \(h\)-th vertex
class is
\[
   V_h:=\Sigma_{\rm out}^{J_h},
   \qquad
   |{V_h}|=q_{\rm out}^{sn}.
\]
For each \(c\in C'\), put the hyperedge
\[
   (c[J_1],\ldots,c[J_m])
\]
in \(H\).  This map is injective on \(C'\), because all codewords in \(C'\)
have the same projection on \(J_0\), \(J_0,J_1,\ldots,J_m\) partition
\([n]\), and all codewords in $C'$ are pairwise distinct codewords on $J_0\cup\cdots\cup J_m$.  Hence
\[
   |{E(H)}|=|{C'}|.
\]

Let \(C_*=C_0(m,\ell)\) be the constant supplied by \cref{thm:erdos} such that  if \(H\) contained no
\(K_{\ell,\ldots,\ell}\), then
\[
   |{E(H)}|
   \leq C_*(q_{\rm out}^{sn})^{m-1/\ell^{m-1}}.
\]
But since $R_{\out}-1+\frac{s}{\ell^{m-1}}>0$, there exsits large enough $C_R=C_R(R_{\out},m,\ell)$ so that for all $n>C_R$, we have
\[
\begin{aligned}
   \log_{q_{\rm out}}|{E(H)}|\ge (\rout-1+ms)n> sn(m-\frac{1}{\ell^{m-1}})+\log_{q_{\rm out}}{C_0}.
\end{aligned}
\]
  Therefore, we must have $|{E(H)}|>C_0(q_{\rm out}^{sn})^{m-1/\ell^{m-1}}$
So \(H\) contains a complete \(m\)-partite hypergraph
\(K_{\ell,\ldots,\ell}\).  Thus, for each \(h\in[m]\), there are
\(\ell\) vertices
\[
   p_{h,1},\ldots,p_{h,\ell}\in V_h
\]
such that, for every tuple $d\in[\ell]^m$, the hyperedge
\[
   (p_{1,d[1]},\ldots,p_{m,d[m]})
\]
belongs to \(H\).  Let \(c'_d\in C'\) be the  outer codeword corresponding to this edge.
Then all \(c'_d\)'s agree on \(J_0\) since they are in the same fiber of $g$. Moreover, on \(J_i\colon i\in[m]\), we know that the projected codeword \(c'_d[J_i]\) only depends on $d[i]$.  The hyperedges are
distinct, so the codewords are distinct.
\end{proof}
Remember that our target is to construct $\{c_d\colon d\in[\ell]^m\}\subseteq C_{\rm AEL}$ such that $d_1[i]=d_2[i]\Rightarrow c_{d_1}[I_i]=c_{d_2}[I_i]$ for all $i\in[m], d_1,d_2\in[\ell]^m$. We will choose the encoders $\rm{Enc}_{\rm in,1},\dots,\rm{Enc}_{\rm in,n}\colon \Sigma_{\rm out}\to \Sigma_{\rm in}^D$ of the inner codes $C_{\rm in,1},\dots,C_{\rm in,n}$ and construct $c_d\in C_{\rm AEL}$ as described in the AEL framework: For each edge $e=(u,v)\in E$ which is the $e_u$-th  edge connecting to $u$, let $w_{d,e}=\mathrm{Enc}_{\mathsf{in},u}(c'_{d}[u])[e_u]$. Then, for each right vertex $v\in[n]$, let $e_{v,1},\dots,e_{v,D}\in E$ denote edges connecting to $v$, we define the final codeword $c_d$ as follows
\[
c_d=((w_{d,e_{1,1}},\dots,w_{d,e_{1,D}}),\dots,(w_{d,e_{n,1}},\dots,w_{d,e_{n,D}}))\in (\Sigma_{\mathsf{in}}^D)^n.
\]
Now we state how we choose the inner code encoders. For each $i\in[n]$, there are two cases.
\begin{itemize}
\item If $i\in J_0$, then we choose $\mathsf{Enc}_{\rm in,i}$ to be an arbitrary encoder of $C_{\rm in,i}$.
\item If $i\in J_j$ for some $j\in[m]$, define
\[
N_i\colon=E(\{i\},I\setminus I_j)
\]
Since $i\in J_j\subseteq G_j$, we know that
\[
|N_i|\leq ((m-1)b+\frac{\eps}{2})D=(R_{\rm in}-\frac{\eps}{2})D
\]
Since $D\ge \frac{1}{\lambda^2}\ge\frac{2\log_{2}{\ell}}{\eps}$, let $q_{\mathsf{in}}=|\Sigma_{\mathsf{in}}|$, we know $|C_{\rm in,i}|=q^{R_{\rm in}D}_{\rm in}$ and therefore $q_{\rm in}^{\eps D/2}\ge \ell$. By pigeonhold principle there exist distinct codewords $w_{1,i},\dots,w_{\ell,i}\in C_{\rm in,i}$ that agree on $N'_i=\{e_i\colon e\in N_i\}\subseteq [D]$. We choose an arbitrary encoder $\mathsf{Enc}_{\rm in,i}$ of $C_{\rm in,i}$ such that $\mathsf{Enc}_{\rm in,i}(c'_{f(j,t)}[i])=w_{t,i}$  for all $t\in[\ell]$ where $f(j,t)=1^{j-1}\circ t\circ 1^{m-j}\in[\ell]^m$ denote the unit indicator function.
\end{itemize}
Now we prove that the codewords $\{c_d\colon d\in[\ell]^m\}$ constructed this way satisfy the target: Fix any $i\in[m],d_1,d_2\in[\ell]^m$ such that $d_1[i]=d_2[i]$. We prove $c_{d_1}[v]=c_{d_2}[v]$ for all $v\in I_i$. Fix such an $v$, we need to prove that for any $t\in[D]$
\[
w_{d_1,e_{v,t}}=w_{d_2,e_{v,t}}
\]
Let $e_{v,t}=(u,v)$ and $h=(u,v)_u$, by the construction, it suffices to prove
\[
\mathsf{Enc}_{\rm in,u}(c'_{d_1}[u])[h]=\mathsf{Enc}_{\rm in,u}(c'_{d_2}[u])[h]
\]
There are three cases with respect to $u$
\begin{itemize}
\item $u\in J_0$: Then by our construction $c'_{d_1}[u]=c'_{d_2}[u]$, and the claim immediately holds.
\item $u\in J_i$: Since $d_1[i]=d_2[i]$, we know $c'_{d_1}[J_i]=c'_{d_2}[J_i]$ by \cref{lem:outer-list}, so $c'_{d_1}[u]=c'_{d_2}[u]$ and the claim holds.
\item $u\in J_j$ where $j\in [m]\setminus\{i\}$: Then $e_{v,t}=(u,v)\in N_u$ since $v\in I_i\subseteq I\backslash I_j$ but $u\in J_j$. This means $h\in N'_u$. Since $d_1[j]=f(j,d_1[j])[j]$ and $d_2[j]=f(j,d_2[j])[j]$, we know $c'_{d_1}[u]=c'_{f(j,d_1[j])}[u]$ and $c'_{d_2}[u]=c'_{f(j,d_2[j])}[u]$ by the guarantee of \Cref{lem:outer-list}. Therefore, it suffices to prove that $\mathsf{Enc}_{\rm in,u}(c'_{f(j,d_1[j])}[u])[h]=\mathsf{Enc}_{\rm in,u}(c'_{f(j,d_1[j])}[u])[h]$, which is equivalent to the claim $w_{d_1[j],u}[h]=w_{d_2[j],u}[h]$ by our construction. Since $h\in N'_u$, this equality indeed holds by our choices of $w_{1,u},\dots,w_{\ell,u}$.
\end{itemize}
We need to set $C_0=\max(C_*,C_R,\frac{\log_{\ell}}{R_{\out}-1+s})$ so that the above proof holds for all $n>C_0$.
\end{proof}

\begin{corollary}\label{cor:main}
Let $0<R_{\mathsf{in}},R_{\mathsf{out}}<1,0<\eps<\frac{R_{\mathsf{in}}(1-R_{\mathsf{in}})}{4}$ be any positive constants, $\ell\ge2$ be any positive integer, there exist constants $C_0>0$ and $\lambda>0$ such that for any inner-encoder insensitive AEL framework parameterized by  \[(n,C_{\out},R_{\mathsf{out}},C_{\mathsf{in},1},\dots,C_{\mathsf{in},n},R_{\mathsf{in}},D,\lambda,G)\]  if at least one of the two following conditions holds:
\begin{itemize}
\item $R_{\mathsf{out}}\ge1-\frac{\eps\ell}{R_{\mathsf{in}}\ell^{\lfloor\frac{R_{\mathsf{in}}}{2\eps}\rfloor}}$.
\item $\mathcal{P}_{\mathsf{out}}$ only contains additive codes and $R_{\out}>1-\frac{2\eps}{R_{\mathsf{in}}}$. 
\end{itemize}
then for any large enough $n> C_0$, the prescribed AEL framework does not guarantee to construct a $(1-R-\eps,\ell,\ell^{\lfloor\frac{R}{2\eps}\rfloor}-1)$ list-recoverable code with rate $R=R_{\mathsf{in}}R_{\mathsf{out}}$.
\end{corollary}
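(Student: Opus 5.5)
The plan is to derive \cref{cor:main} directly from \cref{thm:ael-main} by choosing the integer parameter $m$ appropriately and checking that the hypotheses and conclusion translate. Set $m:=\lfloor \frac{R_{\mathsf{in}}}{2\eps}\rfloor$. Since $\eps<\frac{R_{\mathsf{in}}(1-R_{\mathsf{in}})}{4}<\frac{R_{\mathsf{in}}}{4}$, we get $m\ge 2$, as \cref{thm:ael-main} requires. We take the constants $C_0,\lambda$ to be those produced by \cref{thm:ael-main} for this $m$, for $\ell$, and for the rescaled slack $\eps'$ defined below. The list size in the conclusion is then $\ell^{m}-1$, which matches $\ell^{\lfloor R/(2\eps)\rfloor}-1$ up to the replacement of $R$ by $R_{\mathsf{in}}$. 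Since $R=R_{\mathsf{in}}R_{\mathsf{out}}\le R_{\mathsf{in}}$, we have $\ell^{\lfloor R/(2\eps)\rfloor}\le \ell^{m}$. Failing list recovery with the larger list size $\ell^m-1$ is the stronger statement, so that direction is fine. We should also record $R_{\mathsf{in}}\le\frac{m-1}{m}$; if it does not hold outright, we can decrease $m$ by one.

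Next we match the outer-rate hypotheses. Write $m\approx R_{\mathsf{in}}/(2\eps)$, so that $\frac1m\gtrsim \frac{2\eps}{R_{\mathsf{in}}}$. In the additive case, $R_{\out}>1-\frac{2\eps}{R_{\mathsf{in}}}$, and we want this to give $R_{\out}>\frac{m-1}{m}=1-\frac1m$. Since $m\le R_{\mathsf{in}}/(2\eps)$, we have $\frac1m\ge\frac{2\eps}{R_{\mathsf{in}}}$, so this holds. In the general case, $\frac{1}{2m\ell^{m-1}}=\frac{\ell}{2m\ell^{m}}\ge\frac{\eps\ell}{R_{\mathsf{in}}\ell^{m}}$, again because $m\le R_{\mathsf{in}}/(2\eps)$. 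So the stated threshold implies $R_{\out}\ge 1-\frac{1}{2m\ell^{m-1}}$.

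Finally we match the radius. \cref{thm:ael-main} with slack $\eps'$ rules out radius $1-\frac{m(R_{\mathsf{in}}-\eps')}{m-1}$, and we need this to be at least $1-R-\eps$. List recoverability is monotone in $\rho$: failure at a smaller radius implies failure at a larger one. So we need $\frac{m}{m-1}(R_{\mathsf{in}}-\eps')\le R+\eps$. This is the main obstacle, because $\frac{m}{m-1}R_{\mathsf{in}}$ exceeds $R_{\mathsf{in}}$ by about $R_{\mathsf{in}}/m\approx 2\eps$, while $R$ is smaller than $R_{\mathsf{in}}$ by the factor $R_{\out}$.

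The resolution is to absorb the loss into $\eps'$. Choose $\eps'$ so that
\[
R_{\mathsf{in}}-\eps'=\frac{m-1}{m}(R+\eps),
\]
that is,
\[
\eps'=R_{\mathsf{in}}-\frac{m-1}{m}(R_{\mathsf{in}}R_{\out}+\eps).
\]
This $\eps'$ is positive because $R+\eps<\frac{m}{m-1}R_{\mathsf{in}}$. It remains to check that $\eps'$ is still small enough for the proof of \cref{thm:ael-main} to go through. That proof uses $\eps'$ only through $\lambda\le\min\bigl(\frac{3\eps'\eta}{8m},\sqrt{\eps'/(2\log_2\ell)}\bigr)$ and through fitting $I_1,\dots,I_m$ of size $bn$. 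Positivity of $\eps'$ handles the first requirement. For the second, we need $mb\le1$, which follows from $R_{\mathsf{in}}\le\frac{m-1}{m}$ and can be verified using $\eps<R_{\mathsf{in}}(1-R_{\mathsf{in}})/4$.

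Assembling these steps, \cref{thm:ael-main} yields $\ell^m$ codewords violating $(1-R-\eps,\ell,\ell^m-1)$ list recovery, and hence the claimed failure for all $n>C_0$.
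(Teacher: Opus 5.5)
Your route is the paper's: instantiate \cref{thm:ael-main} at $m=\lfloor R_{\mathsf{in}}/(2\eps)\rfloor$, check $m\ge 2$, and convert both outer-rate thresholds using $\frac1m\ge\frac{2\eps}{R_{\mathsf{in}}}$. You also pass from list size $\ell^m-1$ down to $\ell^{\lfloor R/(2\eps)\rfloor}-1$ by monotonicity, a step the paper's one-line proof leaves implicit and which you handle correctly. The conclusion is right, but your radius paragraph has the inequality backwards. By the monotonicity you yourself state, failure at $\rho'=1-\frac{m(R_{\mathsf{in}}-\eps)}{m-1}$ transfers to radius $1-R-\eps$ when $\rho'\le 1-R-\eps$. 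That is, you need $\frac{m}{m-1}(R_{\mathsf{in}}-\eps)\ge R+\eps$, not $\le$. In that direction there is no obstacle and no rescaling is needed: using $R\le R_{\mathsf{in}}$ and $\frac1m\ge\frac{2\eps}{R_{\mathsf{in}}}$,
\[
\frac{m-1}{m}(R+\eps)\le\Bigl(1-\frac{2\eps}{R_{\mathsf{in}}}\Bigr)(R_{\mathsf{in}}+\eps)=R_{\mathsf{in}}-\eps-\frac{2\eps^2}{R_{\mathsf{in}}}<R_{\mathsf{in}}-\eps .
\]
This is the inequality the paper's proof uses; its displayed form $1-\frac{m(R_{\mathsf{in}}-\eps)}{m-1}\le R+\eps$ is a typo for $\le 1-R-\eps$. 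The excess $\frac{m}{m-1}R_{\mathsf{in}}-R_{\mathsf{in}}\approx 2\eps$ that you saw as the problem is exactly what makes it work. Your $\eps'$ detour survives only because you defined $\eps'$ to force $\rho'=1-R-\eps$ exactly, which is harmless in either direction. The display also shows $\eps'>\eps$, so the relevant check is $0<\eps'<R_{\mathsf{in}}$, not that $\eps'$ is small.

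Likewise, your fallback of decreasing $m$ by one to secure $R_{\mathsf{in}}\le\frac{m-1}{m}$ points the wrong way, since $\frac{m-1}{m}$ shrinks as $m$ decreases. It is never needed. The hypothesis $\eps<\frac{R_{\mathsf{in}}(1-R_{\mathsf{in}})}{4}$ gives $m>\frac{R_{\mathsf{in}}}{2\eps}-1>\frac{2}{1-R_{\mathsf{in}}}-1\ge\frac{1}{1-R_{\mathsf{in}}}$. Hence $R_{\mathsf{in}}<\frac{m-1}{m}$ holds outright, as the paper asserts.
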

\begin{proof}
Let $m=\lfloor\frac{R_{\mathsf{in}}}{2\eps}\rfloor$, by $\eps<\frac{R_{\mathsf{in}}(1-R_{\mathsf{in}})}{4}$ we know $m\ge 2$ and $R_{\mathsf{in}}\leq\frac{m-1}{m}$. Also, there is $1-\frac{m(R_{\mathsf{in}}-\eps)}{m-1}\leq R+\eps$. Therefore, our result follows from  \cref{thm:ael-main}.
\end{proof}

\section*{Acknowledgements}

We thank Mahdi Cheraghchi, Manik Dhar, Sivakanth Gopi, and Venkatesan Guruswami for many helpful discussions. 

In the existence portion of \cref{thm:GSB-intro}, after realizing that the probabilistic method would be useful toward proving the statement, GPT 5.4 Pro was used to work out a detailed argument. The proof of \cref{thm:GSB-intro} presented in this manuscript was written entirely from scratch. GPT 5.5 Pro was also used to convert a hand-drawn version of \cref{fig:GSBupper} to TikZ.

\bibliographystyle{alphaurl}
\bibliography{ref}

@inproceedings{LS25,
  author       = {Ray Li and
                  Nikhil Shagrithaya},
  editor       = {Alina Ene and
                  Eshan Chattopadhyay},
  title        = {Near-Optimal List-Recovery of Linear Code Families},
  booktitle    = {Approximation, Randomization, and Combinatorial Optimization. Algorithms
                  and Techniques, {APPROX/RANDOM} 2025, Berkeley, CA, USA, August 11-13,
                  2025},
  series       = {LIPIcs},
  pages        = {53:1--53:14},
  publisher    = {Schloss Dagstuhl - Leibniz-Zentrum f{\"{u}}r Informatik},
  year         = {2025},
  url          = {https://doi.org/10.4230/LIPIcs.APPROX/RANDOM.2025.53},
  doi          = {10.4230/LIPIcs.APPROX/RANDOM.2025.53},
  bibsource    = {dblp computer science bibliography, https://dblp.org}
}

@article{gr08,
  author       = {Venkatesan Guruswami and
                  Atri Rudra},
  title        = {Explicit Codes Achieving List Decoding Capacity: Error-Correction
                  With Optimal Redundancy},
  journal      = {{IEEE} Trans. Inf. Theory},
  volume       = {54},
  number       = {1},
  pages        = {135--150},
  year         = {2008},
  url          = {https://doi.org/10.1109/TIT.2007.911222},
  doi          = {10.1109/TIT.2007.911222},
  bibsource    = {dblp computer science bibliography, https://dblp.org}
}

@article{highrate,
  author       = {Swastik Kopparty and
                  Or Meir and
                  Noga Ron{-}Zewi and
                  Shubhangi Saraf},
  title        = {High-Rate Locally Correctable and Locally Testable Codes with Sub-Polynomial
                  Query Complexity},
  journal      = {J. {ACM}},
  volume       = {64},
  number       = {2},
  pages        = {11:1--11:42},
  year         = {2017},
  url          = {https://doi.org/10.1145/3051093},
  doi          = {10.1145/3051093},
  bibsource    = {dblp computer science bibliography, https://dblp.org}
}

@phdthesis{concat,
  title={Concatenated codes.},
  author={Forney Jr, George David},
  year={1965},
  school={Massachusetts Institute of Technology}
}

@article{lps88,
  title={Ramanujan graphs},
  author={Lubotzky, Alexander and Phillips, Ralph and Sarnak, Peter},
  journal={Combinatorica},
  volume={8},
  number={3},
  pages={261--277},
  year={1988},
  publisher={Springer-Verlag Berlin/Heidelberg}
}

@article{tvz82,
    AUTHOR = {Tsfasman, M. A. and Vl\u{a}du\c{t}, S. G. and Zink, Th.},
     TITLE = {Modular curves, {S}himura curves, and {G}oppa codes, better
              than {V}arshamov-{G}ilbert bound},
   JOURNAL = {Math. Nachr.},
  FJOURNAL = {Mathematische Nachrichten},
    VOLUME = {109},
      YEAR = {1982},
     PAGES = {21--28},
      ISSN = {0025-584X,1522-2616},
   MRCLASS = {11T71 (14G15 14H25 94B05)},
  MRNUMBER = {705893},
MRREVIEWER = {Yasutaka\ Ihara},
       DOI = {10.1002/mana.19821090103},
       URL = {https://doi.org/10.1002/mana.19821090103},
}

@inproceedings{js26,
  author       = {Fernando Granha Jeronimo and
                  Nikhil Shagrithaya},
  editor       = {Aditya Bhaskara and
                  Artur Czumaj},
  title        = {Probabilistic Guarantees to Explicit Constructions: Local Properties
                  of Linear Codes},
  booktitle    = {Proceedings of the 58th Annual {ACM} Symposium on Theory of Computing,
                  {STOC} 2026, Salt Lake City, UT, USA, June 22-26, 2026},
  pages        = {806--813},
  publisher    = {{ACM}},
  year         = {2026},
  url          = {https://doi.org/10.1145/3798129.3800795},
  doi          = {10.1145/3798129.3800795},
  bibsource    = {dblp computer science bibliography, https://dblp.org}
}

@inproceedings{st25,
  author       = {Shashank Srivastava and
                  Madhur Tulsiani},
  title        = {List Decoding Expander-Based Codes up to Capacity in Near-Linear Time},
  booktitle    = {66th {IEEE} Annual Symposium on Foundations of Computer Science, {FOCS}
                  2025, Sydney, Australia, December 14-17, 2025},
  pages        = {1465--1487},
  publisher    = {{IEEE}},
  year         = {2025},
  url          = {https://doi.org/10.1109/FOCS63196.2025.00077},
  doi          = {10.1109/FOCS63196.2025.00077},
  bibsource    = {dblp computer science bibliography, https://dblp.org}
}

@inproceedings{jmst25,
    AUTHOR = {Jeronimo, Fernando Granha and Mittal, Tushant and Srivastava,
              Shashank and Tulsiani, Madhur},
     TITLE = {Explicit codes approaching generalized singleton bound using
              expanders},
 BOOKTITLE = {S{TOC}'25---{P}roceedings of the 57th {A}nnual {ACM}
              {S}ymposium on {T}heory of {C}omputing},
     PAGES = {843--854},
 PUBLISHER = {ACM, New York},
      YEAR = {[2025] \copyright 2025},
      ISBN = {979-8-4007-1510-5},
   MRCLASS = {68P30},
  MRNUMBER = {4928478},
       DOI = {10.1145/3717823.3718302},
       URL = {https://doi.org/10.1145/3717823.3718302},
}

@article{krsw,
    AUTHOR = {Kopparty, Swastik and Ron-Zewi, Noga and Saraf, Shubhangi and
              Wootters, Mary},
     TITLE = {Improved list decoding of folded {R}eed-{S}olomon and
              multiplicity codes},
   JOURNAL = {SIAM J. Comput.},
  FJOURNAL = {SIAM Journal on Computing},
    VOLUME = {52},
      YEAR = {2023},
    NUMBER = {3},
     PAGES = {794--840},
      ISSN = {0097-5397,1095-7111},
   MRCLASS = {94B35 (11T71 94B05 94B60 94B65)},
  MRNUMBER = {4601690},
MRREVIEWER = {Violetta\ Weger},
       DOI = {10.1137/20M1370215},
       URL = {https://doi.org/10.1137/20M1370215},
}

@article{hw15,
    AUTHOR = {Hemenway, Brett and Wootters, Mary},
     TITLE = {Linear-time list recovery of high-rate expander codes},
   JOURNAL = {Inform. and Comput.},
  FJOURNAL = {Information and Computation},
    VOLUME = {261},
      YEAR = {2018},
    NUMBER = {part 2},
     PAGES = {202--218},
      ISSN = {0890-5401,1090-2651},
   MRCLASS = {94B05 (68W20)},
  MRNUMBER = {3812335},
MRREVIEWER = {Cunsheng\ Ding},
       DOI = {10.1016/j.ic.2018.02.004},
       URL = {https://doi.org/10.1016/j.ic.2018.02.004},
}

@article{hrw,
  title={Local list recovery of high-rate tensor codes and applications},
  author={Hemenway, Brett and Ron-Zewi, Noga and Wootters, Mary},
  journal={SIAM Journal on Computing},
  volume={49},
  number={4},
  pages={FOCS17--157},
  year={2019},
  publisher={SIAM}
}

@article{gko,
  author       = {Sivakanth Gopi and
                  Swastik Kopparty and
                  Rafael Mendes de Oliveira and
                  Noga Ron{-}Zewi and
                  Shubhangi Saraf},
  title        = {Locally Testable and Locally Correctable Codes approaching the Gilbert-Varshamov
                  Bound},
  journal      = {{IEEE} Trans. Inf. Theory},
  volume       = {64},
  number       = {8},
  pages        = {5813--5831},
  year         = {2018},
  url          = {https://doi.org/10.1109/TIT.2018.2809788},
  doi          = {10.1109/TIT.2018.2809788},
  bibsource    = {dblp computer science bibliography, https://dblp.org}
}

@inproceedings{gi02,
  author       = {Venkatesan Guruswami and
                  Piotr Indyk},
  editor       = {John H. Reif},
  title        = {Near-optimal linear-time codes for unique decoding and new list-decodable
                  codes over smaller alphabets},
  booktitle    = {Proceedings on 34th Annual {ACM} Symposium on Theory of Computing,
                  May 19-21, 2002, Montr{\'{e}}al, Qu{\'{e}}bec, Canada},
  pages        = {812--821},
  publisher    = {{ACM}},
  year         = {2002},
  url          = {https://doi.org/10.1145/509907.510023},
  doi          = {10.1145/509907.510023},
  bibsource    = {dblp computer science bibliography, https://dblp.org}
}

@inproceedings{ael95,
  author       = {Noga Alon and
                  Jeff Edmonds and
                  Michael Luby},
  title        = {Linear Time Erasure Codes with Nearly Optimal Recovery (Extended Abstract)},
  booktitle    = {36th Annual Symposium on Foundations of Computer Science, {FOCS} 1995,
                  Milwaukee, Wisconsin, USA, 23-25 October 1995},
  pages        = {512--519},
  publisher    = {{IEEE} Computer Society},
  year         = {1995},
  url          = {https://doi.org/10.1109/SFCS.1995.492581},
  doi          = {10.1109/SFCS.1995.492581},
  bibsource    = {dblp computer science bibliography, https://dblp.org}
}

@inproceedings{bgm23,
  author       = {Joshua Brakensiek and
                  Sivakanth Gopi and
                  Visu Makam},
  editor       = {Barna Saha and
                  Rocco A. Servedio},
  title        = {Generic Reed-Solomon Codes Achieve List-Decoding Capacity},
  booktitle    = {Proceedings of the 55th Annual {ACM} Symposium on Theory of Computing,
                  {STOC} 2023, Orlando, FL, USA, June 20-23, 2023},
  pages        = {1488--1501},
  publisher    = {{ACM}},
  year         = {2023},
  url          = {https://doi.org/10.1145/3564246.3585128},
  doi          = {10.1145/3564246.3585128},
  bibsource    = {dblp computer science bibliography, https://dblp.org}
}

@inproceedings{st20,
  author       = {Chong Shangguan and
                  Itzhak Tamo},
  editor       = {Konstantin Makarychev and
                  Yury Makarychev and
                  Madhur Tulsiani and
                  Gautam Kamath and
                  Julia Chuzhoy},
  title        = {Combinatorial list-decoding of Reed-Solomon codes beyond the Johnson
                  radius},
  booktitle    = {Proceedings of the 52nd Annual {ACM} {SIGACT} Symposium on Theory
                  of Computing, {STOC} 2020, Chicago, IL, USA, June 22-26, 2020},
  pages        = {538--551},
  publisher    = {{ACM}},
  year         = {2020},
  url          = {https://doi.org/10.1145/3357713.3384295},
  doi          = {10.1145/3357713.3384295},
  bibsource    = {dblp computer science bibliography, https://dblp.org}
}

@article{singleton,
  author       = {Richard C. Singleton},
  title        = {Maximum distance Q-nary codes},
  journal      = {{IEEE} Trans. Inf. Theory},
  volume       = {10},
  number       = {2},
  pages        = {116--118},
  year         = {1964},
  url          = {https://doi.org/10.1109/TIT.1964.1053661},
  doi          = {10.1109/TIT.1964.1053661},
  bibsource    = {dblp computer science bibliography, https://dblp.org}
}

@inproceedings{bcdz26b,
  author       = {Joshua Brakensiek and
                  Yeyuan Chen and
                  Manik Dhar and
                  Zihan Zhang},
  editor       = {Aditya Bhaskara and
                  Artur Czumaj},
  title        = {Combinatorial Bounds for List Recovery via Discrete Brascamp-Lieb
                  Inequalities},
  booktitle    = {Proceedings of the 58th Annual {ACM} Symposium on Theory of Computing,
                  {STOC} 2026, Salt Lake City, UT, USA, June 22-26, 2026},
  pages        = {365--376},
  publisher    = {{ACM}},
  year         = {2026},
  url          = {https://doi.org/10.1145/3798129.3800756},
  doi          = {10.1145/3798129.3800756},
  bibsource    = {dblp computer science bibliography, https://dblp.org}
}

@article{rv26,
  author       = {Nicolas Resch and
                  S. Venkitesh},
  title        = {List Recoverable Codes: The Good, the Bad, and the Unknown (hopefully
                  not Ugly)},
  journal      = {CoRR},
  volume       = {abs/2510.07597},
  year         = {2025},
  url          = {https://doi.org/10.48550/arXiv.2510.07597},
  doi          = {10.48550/ARXIV.2510.07597},
  eprinttype   = {arXiv},
  eprint       = {2510.07597},
  bibsource    = {dblp computer science bibliography, https://dblp.org}
}

@article{ggh26,
  author       = {Rohan Goyal and
                  Venkatesan Guruswami and
                  Jun{-}Ting Hsieh},
  title        = {Explicit Constant-Alphabet Subspace Design Codes},
  journal      = {CoRR},
  volume       = {abs/2604.15218},
  year         = {2026},
  url          = {https://doi.org/10.48550/arXiv.2604.15218},
  doi          = {10.48550/ARXIV.2604.15218},
  eprinttype   = {arXiv},
  eprint       = {2604.15218},
  bibsource    = {dblp computer science bibliography, https://dblp.org}
}

@article{erdos64,
    AUTHOR = {Erd\H{o}s, P.},
     TITLE = {On extremal problems of graphs and generalized graphs},
   JOURNAL = {Israel J. Math.},
  FJOURNAL = {Israel Journal of Mathematics},
    VOLUME = {2},
      YEAR = {1964},
     PAGES = {183--190},
      ISSN = {0021-2172},
   MRCLASS = {05.40},
  MRNUMBER = {183654},
MRREVIEWER = {A.\ H.\ Stone},
       DOI = {10.1007/BF02759942},
       URL = {https://doi.org/10.1007/BF02759942},
}

@inproceedings{CZ25,
  author       = {Yeyuan Chen and
                  Zihan Zhang},
  editor       = {Michal Kouck{\'{y}} and
                  Nikhil Bansal},
  title        = {Explicit Folded Reed-Solomon and Multiplicity Codes Achieve Relaxed
                  Generalized Singleton Bounds},
  booktitle    = {Proceedings of the 57th Annual {ACM} Symposium on Theory of Computing,
                  {STOC} 2025, Prague, Czechia, June 23-27, 2025},
  pages        = {1--12},
  publisher    = {{ACM}},
  year         = {2025},
  url          = {https://doi.org/10.1145/3717823.3718114},
  doi          = {10.1145/3717823.3718114},
  bibsource    = {dblp computer science bibliography, https://dblp.org}
}

@inproceedings{LMS25,
  author       = {Matan Levi and
                  Jonathan Mosheiff and
                  Nikhil Shagrithaya},
  title        = {Random Reed-Solomon Codes and Random Linear Codes are Locally Equivalent},
  booktitle    = {66th {IEEE} Annual Symposium on Foundations of Computer Science, {FOCS}
                  2025, Sydney, Australia, December 14-17, 2025},
  pages        = {2097--2131},
  publisher    = {{IEEE}},
  year         = {2025},
  url          = {https://doi.org/10.1109/FOCS63196.2025.00112},
  doi          = {10.1109/FOCS63196.2025.00112},
  bibsource    = {dblp computer science bibliography, https://dblp.org}
}

@inproceedings{gi01,
    AUTHOR = {Guruswami, Venkatesan and Indyk, Piotr},
     TITLE = {Expander-based constructions of efficiently decodable codes
              (extended abstract)},
 BOOKTITLE = {42nd {IEEE} {S}ymposium on {F}oundations of {C}omputer
              {S}cience ({L}as {V}egas, {NV}, 2001)},
     PAGES = {658--667},
 PUBLISHER = {IEEE Computer Soc., Los Alamitos, CA},
      YEAR = {2001},
      ISBN = {0-7695-1390-5},
   MRCLASS = {94B35 (94A34 94B70)},
  MRNUMBER = {1948755},
}

@article{ta2007lossless,
  author       = {Amnon Ta{-}Shma and
                  Christopher Umans and
                  David Zuckerman},
  title        = {Lossless Condensers, Unbalanced Expanders, And Extractors},
  journal      = {Comb.},
  volume       = {27},
  number       = {2},
  pages        = {213--240},
  year         = {2007},
  url          = {https://doi.org/10.1007/s00493-007-0053-2},
  doi          = {10.1007/S00493-007-0053-2},
  bibsource    = {dblp computer science bibliography, https://dblp.org}
}

@inproceedings{guruswami1998improved,
  author       = {Venkatesan Guruswami and
                  Madhu Sudan},
  title        = {Improved Decoding of {R}eed-{S}olomon and Algebraic-Geometric Codes},
  booktitle    = {39th Annual Symposium on Foundations of Computer Science, {FOCS} 1998,
                  Palo Alto, California, USA, November 8-11, 1998},
  pages        = {28--39},
  publisher    = {{IEEE} Computer Society},
  year         = {1998},
  url          = {https://doi.org/10.1109/SFCS.1998.743426},
  doi          = {10.1109/SFCS.1998.743426},
  bibsource    = {dblp computer science bibliography, https://dblp.org}
}

@article{sudan1997decoding,
  author       = {Madhu Sudan},
  title        = {Decoding of {R}eed {S}olomon Codes beyond the Error-Correction Bound},
  journal      = {J. Complex.},
  volume       = {13},
  number       = {1},
  pages        = {180--193},
  year         = {1997},
  url          = {https://doi.org/10.1006/jcom.1997.0439},
  doi          = {10.1006/JCOM.1997.0439},
  bibsource    = {dblp computer science bibliography, https://dblp.org}
}

@article{wozencraft1958list,
  title={List decoding},
  author={Wozencraft, John M},
  journal={Quarterly Progress Report},
  volume={48},
  pages={90--95},
  year={1958},
  publisher={Research Laboratory of Electronics, MIT}
}

@article {GST24,
    AUTHOR = {Goldberg, Eitan and Shangguan, Chong and Tamo, Itzhak},
     TITLE = {Singleton-type bounds for list-decoding and list-recovery, and
              related results},
   JOURNAL = {J. Combin. Theory Ser. A},
  FJOURNAL = {Journal of Combinatorial Theory. Series A},
    VOLUME = {203},
      YEAR = {2024},
     PAGES = {Paper No. 105835, 26},
      ISSN = {0097-3165,1096-0899},
   MRCLASS = {94B35 (94B65)},
  MRNUMBER = {4671607},
MRREVIEWER = {Shojiro\ Sakata},
       DOI = {10.1016/j.jcta.2023.105835},
       URL = {https://doi.org/10.1016/j.jcta.2023.105835},
}

@article{AGL25,
  author       = {Omar Alrabiah and
                  Venkatesan Guruswami and
                  Ray Li},
  title        = {{AG} Codes Have No List-Decoding Friends: Approaching the Generalized
                  Singleton Bound Requires Exponential Alphabets},
  journal      = {{IEEE} Trans. Inf. Theory},
  volume       = {71},
  number       = {4},
  pages        = {2443--2451},
  year         = {2025},
  url          = {https://doi.org/10.1109/TIT.2024.3405392},
  doi          = {10.1109/TIT.2024.3405392},
  bibsource    = {dblp computer science bibliography, https://dblp.org}
}

@book {AS08,
    AUTHOR = {Alon, Noga and Spencer, Joel H.},
     TITLE = {The probabilistic method},
    SERIES = {Wiley-Interscience Series in Discrete Mathematics and
              Optimization},
   EDITION = {Third},
      NOTE = {With an appendix on the life and work of Paul Erd\H{o}s},
 PUBLISHER = {John Wiley \& Sons, Inc., Hoboken, NJ},
      YEAR = {2008},
     PAGES = {xviii+352},
      ISBN = {978-0-470-17020-5},
   MRCLASS = {60-02 (05C80 60C05 60F99 60G42)},
  MRNUMBER = {2437651},
       DOI = {10.1002/9780470277331},
       URL = {https://doi.org/10.1002/9780470277331},
}

@inproceedings{GW11,
  author       = {Venkatesan Guruswami and
                  Carol Wang},
  editor       = {Leslie Ann Goldberg and
                  Klaus Jansen and
                  R. Ravi and
                  Jos{\'{e}} D. P. Rolim},
  title        = {Optimal Rate List Decoding via Derivative Codes},
  booktitle    = {Approximation, Randomization, and Combinatorial Optimization. Algorithms
                  and Techniques - 14th International Workshop, {APPROX} 2011, and 15th
                  International Workshop, {RANDOM} 2011, Princeton, NJ, USA, August
                  17-19, 2011. Proceedings},
  series       = {Lecture Notes in Computer Science},
  pages        = {593--604},
  publisher    = {Springer},
  year         = {2011},
  url          = {https://doi.org/10.1007/978-3-642-22935-0\_50},
  doi          = {10.1007/978-3-642-22935-0\_50},
  bibsource    = {dblp computer science bibliography, https://dblp.org}
}

@article{Elias,
    author = {Elias, Peter},
    journal = {Wescon Convention Record, Part 2, Institute of Radio Engineers},
    pages = {99-104},
    title = {List decoding for noisy channels},
    year = {1957}
}

@article{RYZ24,
  author       = {Nicolas Resch and
                  Chen Yuan and
                  Yihan Zhang},
  title        = {Zero-Rate Thresholds and New Capacity Bounds for List-Decoding and
                  List-Recovery},
  journal      = {{IEEE} Trans. Inf. Theory},
  volume       = {70},
  number       = {9},
  pages        = {6211--6238},
  year         = {2024},
  url          = {https://doi.org/10.1109/TIT.2024.3430842},
  doi          = {10.1109/TIT.2024.3430842},
  bibsource    = {dblp computer science bibliography, https://dblp.org}
}

@article{guruswami2009unbalanced,
  author       = {Venkatesan Guruswami and
                  Christopher Umans and
                  Salil P. Vadhan},
  title        = {Unbalanced expanders and randomness extractors from Parvaresh-Vardy
                  codes},
  journal      = {J. {ACM}},
  volume       = {56},
  number       = {4},
  pages        = {20:1--20:34},
  year         = {2009},
  url          = {https://doi.org/10.1145/1538902.1538904},
  doi          = {10.1145/1538902.1538904},
  bibsource    = {dblp computer science bibliography, https://dblp.org}
}

@inproceedings{GilbertNgoPoratStrauss13,
  author    = {Anna C. Gilbert and Hung Q. Ngo and Ely Porat and Atri Rudra and Martin J. Strauss},
  title     = {{$\ell_2/\ell_2$}-Foreach Sparse Recovery with Low Risk},
  booktitle = {Automata, Languages, and Programming -- 40th International Colloquium, ICALP 2013, Proceedings, Part I},
  series    = {Lecture Notes in Computer Science},
  volume    = {7965},
  pages     = {461--472},
  publisher = {Springer},
  year      = {2013},
  doi       = {10.1007/978-3-642-39206-1_39}
}

@article{bostanci2026separating,
  title={Separating Quantum and Classical Advice with Good Codes},
  author={Bostanci, John and Huang, Andrew and Vaikuntanathan, Vinod},
  journal={arXiv preprint arXiv:2602.09385},
  year={2026}
}

@inproceedings{ngo2012efficiently,
  author       = {Hung Q. Ngo and
                  Ely Porat and
                  Atri Rudra},
  editor       = {Christoph D{\"{u}}rr and
                  Thomas Wilke},
  title        = {Efficiently Decodable Compressed Sensing by List-Recoverable Codes
                  and Recursion},
  booktitle    = {29th International Symposium on Theoretical Aspects of Computer Science,
                  {STACS} 2012, Paris, France, February 29 - March 3, 2012},
  series       = {LIPIcs},
  volume       = {14},
  pages        = {230--241},
  publisher    = {Schloss Dagstuhl - Leibniz-Zentrum f{\"{u}}r Informatik},
  year         = {2012},
  url          = {https://doi.org/10.4230/LIPIcs.STACS.2012.230},
  doi          = {10.4230/LIPICS.STACS.2012.230},
  bibsource    = {dblp computer science bibliography, https://dblp.org}
}

@inproceedings{haitner2015parallel,
  author       = {Iftach Haitner and
                  Yuval Ishai and
                  Eran Omri and
                  Ronen Shaltiel},
  editor       = {Rosario Gennaro and
                  Matthew Robshaw},
  title        = {Parallel Hashing via List Recoverability},
  booktitle    = {Advances in Cryptology - {CRYPTO} 2015 - 35th Annual Cryptology Conference,
                  Santa Barbara, CA, USA, August 16-20, 2015, Proceedings, Part {II}},
  series       = {Lecture Notes in Computer Science},
  volume       = {9216},
  pages        = {173--190},
  publisher    = {Springer},
  year         = {2015},
  url          = {https://doi.org/10.1007/978-3-662-48000-7\_9},
  doi          = {10.1007/978-3-662-48000-7\_9},
  bibsource    = {dblp computer science bibliography, https://dblp.org}
}

@article{guruswami2016explicit,
  author       = {Venkatesan Guruswami and
                  Swastik Kopparty},
  title        = {Explicit subspace designs},
  journal      = {Comb.},
  volume       = {36},
  number       = {2},
  pages        = {161--185},
  year         = {2016},
  url          = {https://doi.org/10.1007/s00493-014-3169-1},
  doi          = {10.1007/S00493-014-3169-1},
  bibsource    = {dblp computer science bibliography, https://dblp.org}
}

@article{tamo24,
  author       = {Itzhak Tamo},
  title        = {Tighter List-Size Bounds for List-Decoding and Recovery of Folded
                  Reed-Solomon and Multiplicity Codes},
  journal      = {{IEEE} Trans. Inf. Theory},
  volume       = {70},
  number       = {12},
  pages        = {8659--8668},
  year         = {2024},
  url          = {https://doi.org/10.1109/TIT.2024.3402171},
  doi          = {10.1109/TIT.2024.3402171},
  bibsource    = {dblp computer science bibliography, https://dblp.org}
}

@inproceedings{brakensiek2026random,
  author       = {Joshua Brakensiek and
                  Yeyuan Chen and
                  Manik Dhar and
                  Zihan Zhang},
  editor       = {Aditya Bhaskara and
                  Artur Czumaj},
  title        = {From Random to Explicit via Subspace Designs with Applications to
                  Local Properties and Matroids},
  booktitle    = {Proceedings of the 58th Annual {ACM} Symposium on Theory of Computing,
                  {STOC} 2026, Salt Lake City, UT, USA, June 22-26, 2026},
  pages        = {619--630},
  publisher    = {{ACM}},
  year         = {2026},
  url          = {https://doi.org/10.1145/3798129.3800778},
  doi          = {10.1145/3798129.3800778},
  bibsource    = {dblp computer science bibliography, https://dblp.org}
}

@inproceedings{ashvinkumar2026algorithmic,
  author       = {Vikrant Ashvinkumar and
                  Mursalin Habib and
                  Shashank Srivastava},
  editor       = {Kasper Green Larsen and
                  Barna Saha},
  title        = {Algorithmic Improvements to List Decoding of Folded Reed-Solomon Codes},
  booktitle    = {Proceedings of the 2026 Annual {ACM-SIAM} Symposium on Discrete Algorithms,
                  {SODA} 2026, Vancouver, BC, Canada, January 11-14, 2026},
  pages        = {880--898},
  publisher    = {{SIAM}},
  year         = {2026},
  url          = {https://doi.org/10.1137/1.9781611978971.35},
  doi          = {10.1137/1.9781611978971.35},
  bibsource    = {dblp computer science bibliography, https://dblp.org}
}

@inproceedings{srivastava2025improved,
  author       = {Shashank Srivastava},
  editor       = {Yossi Azar and
                  Debmalya Panigrahi},
  title        = {Improved List Size for Folded Reed-Solomon Codes},
  booktitle    = {Proceedings of the 2025 Annual {ACM-SIAM} Symposium on Discrete Algorithms,
                  {SODA} 2025, New Orleans, LA, USA, January 12-15, 2025},
  pages        = {2040--2050},
  publisher    = {{SIAM}},
  year         = {2025},
  url          = {https://doi.org/10.1137/1.9781611978322.64},
  doi          = {10.1137/1.9781611978322.64},
  bibsource    = {dblp computer science bibliography, https://dblp.org}
}

@article{kopparty2015list,
  author       = {Swastik Kopparty},
  title        = {List-Decoding Multiplicity Codes},
  journal      = {Theory Comput.},
  volume       = {11},
  pages        = {149--182},
  year         = {2015},
  url          = {https://doi.org/10.4086/toc.2015.v011a005},
  doi          = {10.4086/TOC.2015.V011A005},
  bibsource    = {dblp computer science bibliography, https://dblp.org}
}

@article {AGGLZ25,
    AUTHOR = {Alrabiah, Omar and Guo, Zeyu and Guruswami, Venkatesan and Li,
              Ray and Zhang, Zihan},
     TITLE = {Random {R}eed-{S}olomon codes achieve list-decoding capacity
              with linear-sized alphabets},
   JOURNAL = {Adv. Comb.},
  FJOURNAL = {Advances in Combinatorics},
      YEAR = {2025},
     PAGES = {Paper No. 8, 39},
      ISSN = {2517-5599},
      ISBN = {},
   MRCLASS = {94B05 (94B35)},
  MRNUMBER = {4968431},
}

\end{document}